\newtheorem{theorem}{Theorem}
\newtheorem{property}[theorem]{Property}
\newtheorem{remark}[theorem]{Remark}
\newtheorem{claim}[theorem]{Claim}
\newtheorem{lemma}[theorem]{Lemma}
\newtheorem{definition}[theorem]{Definition}
\newcommand{\lcw}{\mathbf{lcw}}
\newcommand{\ctww}{\mathbf{ctww}}
\newcommand{\rw}{\mathbf{rw}}
\newcommand{\lrw}{\mathbf{lrw}}
\newcommand{\ttww}{\mathbf{ttww}}
\newcommand{\tvtww}{\mathbf{tvtww}}
\newcommand{\lbw}{\mathbf{lbw}}
\renewcommand{\paragraph}{%
  \@startsection{paragraph}{4}%
  {\z@}{6pt \@plus 1pt \@minus 1pt}{-5pt}%
  {\normalfont\normalsize\bfseries}%
}
\title{Improved Bounds for Twin-Width Parameter Variants with Algorithmic Applications to Counting Graph Colorings}
\author{Ambroise Baril\thanks{LORIA, Université de Lorraine, France. Contact: \href{mailto:ambroise.baril@loria.fr}{ambroise.baril@loria.fr}} 
\and Miguel Couceiro\thanks{LORIA, Université de Lorraine and INESC-ID, IST, Universidade de Lisboa. Contact: \href{mailto:miguel.couceiro@loria.fr}{miguel.couceiro@loria.fr}} 
\and Victor Lagerkvist\thanks{Linköping University, Sweden. Contact: \href{mailto:victor.lagerkvist@liu.se}{victor.lagerkvist@liu.se}}
}
\date{}
\begin{document}

\maketitle
\thispagestyle{empty}

\begin{abstract}
The {\em $H$-{\sc Coloring}} problem is a well-known generalization of the classical {\sf NP}-complete problem $k$-{\sc Coloring} where the task is to determine whether an input graph admits a homomorphism to the template graph $H$. This problem has been the subject of intense theoretical research and in
this article we study the complexity of $H$-{\sc Coloring} with respect to the parameters {\em clique-width} and the more recent {\em component twin-width}, which describe desirable computational properties of graphs. We give two surprising linear bounds between these parameters, thus improving the previously known exponential and double exponential bounds. Our constructive proof naturally extends to related parameters and as a showcase we prove that {\em total twin-width} and {\em linear clique-width} can be related via a tight quadratic bound. These bounds naturally lead to algorithmic applications. The linear bounds between component twin-width and clique-width entail natural approximations of component twin-width, by making use of the results known for clique-width. 
As for computational aspects of graph coloring, we target the richer problem of counting the number of homomorphisms to $H$ (\#$H$-{\sc Coloring}).
The first algorithm that we propose uses a contraction sequence of the input graph $G$ parameterized by the component twin-width of $G$. This leads to a positive {\sf FPT} result for the counting version. The second uses a contraction sequence of the template graph $H$ and here we instead measure the complexity with respect to the number of vertices in the input graph. Using our linear bounds we show that our algorithms are {\em always} at least as fast as the previously best \#$H$-Coloring algorithms (based on clique-width) and for several interesting classes of graphs (e.g., cographs, cycles of length $\ge 7$, or distance-hereditary graphs) are in fact strictly faster.
\end{abstract}

\maketitle

\section{Introduction} \label{sec:intro}

{\em Graph coloring} is a well-known computational problem where the goal is to color a graph in a consistent way. This problem is one of the most well-studied {\sf NP}-hard problems and enjoys a wide range of applications {\it e.g.}, in planning, scheduling, and resource allocation~\cite{formanowicz2012survey}. 
There are many variants and different formulations of the coloring problem, but the most common formulation is certainly the $k$-{\sc Coloring} problem that asks whether the vertices of an input graph can be colored using $k$ available colors in such a way that no two adjacent vertices are assigned the same color. This problem can be extended in many ways and in this paper we are particularly interested in the more general problem where any two adjacent vertices in the input graph $G$ have to be mapped to two adjacent vertices in a fixed template graph $H$ (the {\em $H$-{\sc Coloring}} problem). It is not difficult to see that $k$-{\sc Coloring} is then $K_k$-{\sc Coloring}, where $K_k$ is the $k$-vertex clique. 

The basic $H$-{\sc Coloring} problem has been extended in many directions, of which one of the most dominant formalisms is the {\em counting} extension where the task is not only to decide whether there is at least one solution (coloring) but to return the number of solutions (\#$H$-{\sc Coloring}). This framework makes it possible to encode phase transition systems modeled by partition functions, modeling problems from statistical physics such as counting $q$-particle Widom–Rowlinson configurations and counting Beach models, or the classical Ising model (for further examples, see {\it e.g.}\  Dyer \& Greenhill~\cite{dyer2000complexity}).
The $\#H$-{\sc Coloring} problem is {\sf \#P}-hard unless every connected component of $H$ is either a single vertex without a loop, a looped clique or a bipartite complete graph, and it is in {\sf P} otherwise \cite{dyer2000complexity}. The question is then to which degree we can still hope to solve it efficiently, or at least improve upon the naive bound of $|V_H|^{|V_G|}$ (where $V_H$ is the set of vertices in the template graph $H$ and $V_G$ the set of vertices in the input graph $G$).

In this article we tackle this question by targeting properties of graphs, so-called {\em graph parameters}, which give rise to efficiently solvable subproblems. We will see below several concrete examples of graph parameters, but for the moment we simply assume that each graph $G$ is associated with a number $k \in \mathbb{N}$, a {\em parameter}, which describes a structural property of $G$. Here, the idea is that  small values of $k$ correspond to graphs with a simple structure, while large values correspond to more complicated graphs. 

There are then two ways to approach intractable $H$-{\sc Coloring} problems: we either restrict the class of {\em input} graphs $G$, or the class of {\em template} graphs $H$ to graphs where the parameter is bounded by some reasonably small constant. The first task is typically studied using tools from {\em parameterized} complexity where the goal is to prove that problems are {\em fixed-parameter tractable} ({\sf FPT}), {\it i.e.}, obtaining running times of the form $f(k) \cdot \|G\|^{O(1)}$ for a computable function $f \colon \mathbb{N} \to \mathbb{N}$ (where $\|G\|$ is the number of bits required to represent the input graph $G$). The second task is more closely related to {\em fine-grained} complexity\footnote{The upper bound aspect of this field also goes under the name of ``exact exponential-time algorithms''~\cite{book2010}. Let us also remark that fine-grained complexity is also strongly associated with proving sharp lower bounds for problems in $P$.} where the goal is to prove upper and lower bounds of the form $2^{f(k)} \cdot \|G\|^{O(1)}$ for a sufficiently ``fine-grained'' parameter $k$, which in our case is always going to denote the number of vertices $|V_G|$ in the input graph $G$. Here, it is worth remarking that $H$-{\sc Coloring} is believed to be a very hard problem, and the general {\sc Coloring} problem, where the template is part of the input, is not solvable in $2^{O(|V_G|)} \cdot (\|G\|+\|H\|)^{O(1)}$ time under the {\em exponential-time hypothesis} ({\sf ETH})~\cite{fomin2015}.
Hence, regardless of whether one studies the problem under the lens of parameterized or fine-grained complexity, one needs to limit the class of considered graphs via a suitable parameter.

The most prominent graph parameter in this context is likely {\em tree-width}, which intuitively measures how close a graph is to being a tree. Bounded tree-width is in many algorithmic applications sufficient to guarantee the existence of an {\sf FPT} algorithm, but with the shortcoming of failing to capture classes of dense graphs. There are many graph parameters proposed to address this limitation of tree-width, 
and we briefly survey two noteworthy examples (see Section~\ref{sec:preliminaries} for formal definitions).

\begin{figure*}
  \begin{centering}
\center    
\scalebox{.6}{
\begin{tikzpicture}

\tikzstyle{vertex}=[draw,shape=circle];

\begin{scope}

    \node[vertex] (v1) at (90:2) {$a$};
    \node[vertex] (v2) at (-51.43+90:2) {$b$}
        edge (v1);
    \node[vertex] (v3) at (-2*51.43+90:2) {$c$}
        edge (v2);
    \node[vertex] (v4) at (-3*51.43+90:2) {$d$}
        edge (v3);
    \node[vertex] (v5) at (-4*51.43+90:2) {$e$}
        edge (v4);
    \node[vertex] (v6) at (-5*51.43+90:2) {$f$}
        edge (v5);
    \node[vertex] (v7) at (-6*51.43+90:2) {$g$}
        edge (v1)
        edge (v6);

\end{scope}

\begin{scope}[xshift=6cm]

    \node[vertex] (v1) at (90:2) {$ab$}
        edge[loop above,color=red] (v1);
    
    \node[vertex] (v3) at (-2*51.43+90:2) {$c$}
        edge[ultra thick, color=red] (v1);
    \node[vertex] (v4) at (-3*51.43+90:2) {$d$}
        edge (v3);
    \node[vertex] (v5) at (-4*51.43+90:2) {$e$}
        edge (v4);
    \node[vertex] (v6) at (-5*51.43+90:2) {$f$}
        edge (v5);
    \node[vertex] (v7) at (-6*51.43+90:2) {$g$}
        edge[ultra thick, color=red] (v1)
        edge (v6);

\end{scope}

\begin{scope}[xshift=12cm]

    \node[vertex] (v1) at (90:2) {$abc$}
        edge[loop above,color=red] (v1);
    
    \node[vertex] (v4) at (-3*51.43+90:2) {$d$}
        edge[ultra thick, color=red] (v1);
    \node[vertex] (v5) at (-4*51.43+90:2) {$e$}
        edge (v4);
    \node[vertex] (v6) at (-5*51.43+90:2) {$f$}
        edge (v5);
    \node[vertex] (v7) at (-6*51.43+90:2) {$g$}
        edge[ultra thick, color=red] (v1)
        edge (v6);

\end{scope}

\begin{scope}[xshift=-3cm,yshift=-7cm]

    \node[vertex] (v1) at (90:2) {$abcd$}
        edge[loop above,color=red] (v1);

    \node[vertex] (v5) at (-4*51.43+90:2) {$e$}
        edge[ultra thick, color=red] (v1);
    \node[vertex] (v6) at (-5*51.43+90:2) {$f$}
        edge (v5);
    \node[vertex] (v7) at (-6*51.43+90:2) {$g$}
        edge[ultra thick, color=red] (v1)
        edge (v6);

\end{scope}

\begin{scope}[xshift=3cm,yshift=-7cm]

    \node[vertex] (v1) at (90:2) {$abcde$}
        edge[loop above,color=red] (v1);

    \node[vertex] (v6) at (-5*51.43+90:2) {$f$}
        edge[ultra thick, color=red] (v1);
    \node[vertex] (v7) at (-6*51.43+90:2) {$g$}
        edge[ultra thick, color=red] (v1)
        edge (v6);

\end{scope}

\begin{scope}[xshift=9cm,yshift=-7cm]

    \node[vertex] (v1) at (90:2) {$abcdef$}
        edge[loop above,color=red] (v1);
    \node[vertex] (v7) at (-6*51.43+90:2) {$g$}
        edge[ultra thick, color=red] (v1);

\end{scope}

\begin{scope}[xshift=15cm,yshift=-7cm]

    \node[vertex] (v1) at (90:2) {$abcdefg$}
        edge[loop above,color=red] (v1);

\end{scope}

\end{tikzpicture}
}
\caption{A contraction sequence of the 7-cycle. Red edges represent an inconsistency in the merged vertex (see Section \ref{subsec:component_twin-width} for a formal definition), and the maximum red degree in the sequence thus represents the largest loss of information. }
\label{fig:Contraction sequence of C7}
\end{centering}
\end{figure*}
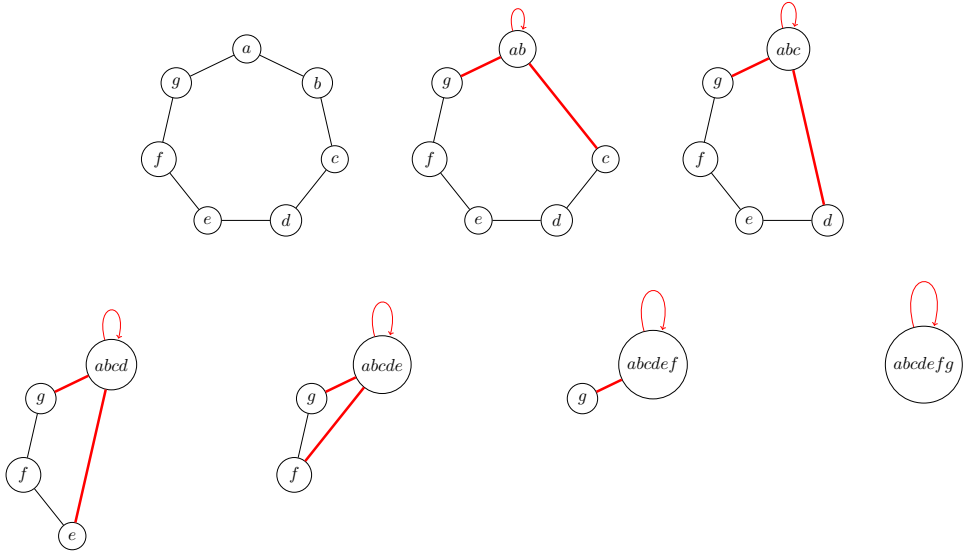

\begin{enumerate}
    \item
    {\em clique-width} ($\mathbf{cw}$). The class of graphs (with labelled vertices) with clique-width $\leq k$ is defined as the smallest class of graphs that contains the one vertex graphs $\bullet_i$ with 1 vertex labelled by $i\in [k]$, and that is stable by the following operations for $(i,j)\in [k]^2$ with $i\neq j$: $(i)$ disjoint union of graphs, $(ii)$ relabelling every vertex of label $i$ to label $j$, and $(iii)$ constructing edges between every vertex labelled by $i$ and every vertex labelled by $j$. Note that the class of cographs (which contains cliques) is exactly that of graphs with clique-width at most $2$.

    \item{\em twin-width} ($\mathbf{tww}$). The class of graphs of twin-width $\leq k$ is usually formulated via {\em contraction sequences} where graphs are gradually merged into a single vertex (see Figure~\ref{fig:Contraction sequence of C7} for an example). A graph has twin-width $\leq k$ if it admits such a contraction sequence where the maximum red degree does not exceed $k$.
  \end{enumerate}

  For clique-width, Ganian et. al~\cite{ganian2022fine} identified a structural parameter $s$ of graphs (the number of distinct non-empty intersections of neighborhoods over sets of vertices), and presented an algorithm for $H$-{\sc Coloring} that runs in $O^*(s(H)^{\mathbf{cw}(G)})$ time\footnote{The notation $O^*$ means that we ignore polynomial factors.}. It is also optimal in the sense that if there is an algorithm that solves $H$-{\sc Coloring} in time $O^*((s(H)-\varepsilon)^{\mathbf{cw}(G)})$, then the {\sf SETH} fails~\cite{ganian2022fine}. Alternative algorithms exist for templates of bounded clique-width, see Wahlstr\"om~\cite{wahlstrom2011new} who solves $\#H$-{\sc Coloring} in $O^*((2\mathbf{cw}(H)+1)^{|V_G|})$ time, and Bulatov \& Dadsetan~\cite{bulatov2020} for extensions.

Twin-width, on the other hand, is a much more recent parameter, but has in only a few years attracted significant attention~\cite{DBLP:conf/icalp/BergeBD22,bonnet2022twin8,bonnet2022twin-exponential-treewidth,bonnet2020twin3,bonnet2021twin2,bonnet2022twin7,bonnet2022twin4,bonnet2022twin6,bonnet2022twin-poly-kernels,bonnet2022reduced,bonnet2021twin-permutations,tw1,tw2,tw3,tw4,tw5,tw6,tw7,tw8,tw9,tw10,tw11}. 
One of its greatest achievements is that checking if a graph is a model of any first-order formula can be decided in {\sf FPT} time parameterized by the twin-width of the input graph. Thus, a very natural research question in light of the above results concerning tree- and clique-width is to study the complexity of $(\#)H$-{\sc Coloring} via twin-width. Unfortunately, it is easy to see that under standard assumptions, $H$-{\sc Coloring} is generally not  {\sf FPT}  parameterized by twin-width. Indeed, since twin-width is bounded on planar graphs \cite{hlinveny2022twin}, the existence of an  {\sf FPT}  algorithm for $3$-{\sc Coloring}  running in $O^*(f(\mathbf{tww}(G)))$ time implies an $O^*(1)$ time ({\it i.e.} a polynomial time) algorithm for $3$-{\sc Coloring} on planar graphs (since $f(\mathbf{tww}(G))=O(1)$ if $G$ is a planar graph). Since $3$-{\sc Coloring} is {\sf NP}-hard on planar graphs, this would imply {\sf P=NP}. Thus, $3$-{\sc Coloring} is {\em para-{\sf NP}-hard} \cite{de2017parameterized} parameterized by twin-width. 

Despite this hardness result it is possible to analyze $H$-{\sc Coloring} by a variant of twin-width known as {\em component twin-width} ($\mathbf{ctww}$) \cite{bonnet2022twin6}. This parameter equals the maximal size of a red-connected component (instead of the maximal red-degree for twin-width). It is then known that component twin-width is functionally equivalent\footnote{{\it I.e.}, each parameter is bounded by a function of the other.} to boolean-width~\cite{bonnet2022twin6}, which in turn is functionally equivalent to clique-width~\cite{bui2011boolean}.
Hence, $H$-{\sc Coloring} is  {\sf FPT}  parameterized by component twin-width, and the specific problem $k$-{\sc Coloring} is additionally known to be solvable in $O^*((2^k-1)^{\mathbf{ctww}(G)})$ time~\cite{bonnet2022twin6}. As remarked by Bonnet et al., the theoretical implications of this particular algorithm are limited due to the aforementioned (under the {\sf SETH})  optimal algorithm parameterized by clique-width~\cite{ganian2022fine}.
However, this still leaves several gaps in our understanding of component twin-width for $H$-{\sc Coloring} and its counting extension $\#H$-{\sc Coloring}.

Our paper has three major contributions to bridge these gaps. {\it Firstly,} the best known bounds between clique-width and component twin-width are obtained by following the proof of functional equivalence between component twin-width and boolean-width, and then between boolean-width and clique-width. We thereby obtain 
\begin{equation*}
\mathbf{ctww} \leq 2^{\mathbf{cw}+1}\quad \text{and} \quad \mathbf{cw}\leq 2^{2^{\mathbf{ctww}}}
\end{equation*} and $H$-{\sc Coloring} is thus solvable in $O^*(s(H)^{2^{2^{\mathbf{ctww}(G)}}})$ time. This proves  {\sf FPT}  but with a rather prohibitive run-time, and the main question is whether it is possible to improve this to a single-exponential running time $O^*(2^{O(\mathbf{ctww}(G))})$. (This line of research in parameterized complexity is relatively new but of growing importance and has seen several landmark results, see {\it e.g.}\ Chapter 11 in Cygan et al.~\cite{parameterizedbook}).
We prove that it is indeed possible by significantly strengthening the bounds between $\mathbf{cw}$ and $\mathbf{ctww}$ and obtain the linear bounds 
\[\mathbf{cw}\leq \mathbf{ctww}+1 \leq 2\mathbf{cw}.
\] Our proof is  constructive which gives a fast algorithm to derive a contraction-sequence from a clique-width expression and vice versa. To demonstrate that these ideas are not limited to these specific parameters we (in Section~\ref{sec:ttww}) consider the related problem of proving tighter bounds between {\em linear clique-width} ($\mathbf{lcw}$) and the recently introduced {\em total twin-width} ($\ttww$~\cite{bonnet2022twin6}). Linear clique-width is less explored than clique-width but comes with the advantage that faster algorithms for graph classes of bounded linear clique-width are sometimes possible (cf. the remark before Theorem 7 in Bodlaender et al.~\cite{bodlaender2023}) and that lower bounds on clique-width in many interesting cases can be generalized to linear clique-width~\cite{fomin19}. The total twin-width parameter is then known to be functionally equivalent to linear clique-width, yielding the doubly exponential bounds $ \lcw \le 2^{2^{\ttww}+1} $ and $\ttww \le (2^{\lcw}+1)(2^{\lcw-1}+1)$. We significantly improve the latter to
$$ \lcw-1 \le 2\ttww \le \lcw(\lcw+1), $$
and thus demonstrate that virtually any complexity question regarding linear clique-width can be translated to the total twin-width setting, with the possible advantage of using contraction sequences as a unifying lens. 
Specifically, it can be expected that contraction sequence related parameters are more convenient to use than (linear) clique-width, since there is only one fundamental operation to handle (vertex contraction) whereas (linear) clique-width not only deals with vertex-labelled graphs, but also introduces four fundamental operations.

{\it Secondly,} we discuss how these bounds can be exploited to {\em approximate} $\mathbf{ctww}$ by making use of the results known on $\mathbf{cw}$. Thus, an immediate consequence of our linear bounds is that $H$-{\sc Coloring} is solvable in $O^*(s(H)^{\mathbf{ctww}(G) + 1})$ time, which is a major improvement to the aforementioned double exponential upper bound.

{\it Thirdly,} we consider the generalized problem of counting the number of solutions. 
It seems unlikely that the optimal algorithm  (under SETH) by Ganian et al.~\cite{ganian2022fine} can be lifted to $\#H$-{\sc Coloring}, and while the algorithm by Wahlstr\"om~\cite{wahlstrom2011new} successfully solves $\#H$-{\sc Coloring}, it does so with the significantly worse bound of 
$2^{2\mathbf{cw}(G) \times |V_H|} (|V_G|+|V_H|)^{O(1)}$.
We tackle this problem in Section~\ref{sec:algo} by designing a novel algorithm for $\#H$-{\sc Coloring} for input graphs with bounded component twin-width and which runs in $(2^{|V_H|}-1)^{\mathbf{ctww}(G)}\times(|V_G|+|V_H|)^{O(1)}$ time. Since our linear bounds imply that $\mathbf{ctww}(H) + 2 \leq 2 \mathbf{cw}(H) + 1$ and $\mathbf{ctww}(H) + 2 \leq \mathbf{lcw}(H) + 2$ this is always at least as fast as the (linear) clique-width algorithm  by Wahlstr\"om, and strictly faster for several interesting classes of graphs. For example, cographs with edges (component twin-width 1, versus clique-width 2), cycles of length at least $7$ (component twin-width 3, versus linear clique-width 4), and distance hereditary graphs (component twin-width 3 versus clique-width 3 \cite{golumbic2000clique}).

We also consider $\#H$-{\sc Coloring} when the template graph $H$ has bounded component twin-width. We use an optimal contraction sequence of $H$ in order to obtain a $O^*((\mathbf{ctww}(H)+2)^{|V_G|})$ algorithm for $\#H$-{\sc Coloring}. For comparison, Wahlstr\"om~\cite{wahlstrom2011new} solves $\#H$-{\sc Coloring} in $O^*((2\mathbf{cw}(H)+1)^{|V_G|})$ and, slightly faster,  $O^*((\mathbf{lcw}(H)+2)^{|V_G|})$.
Due to our linear bounds we again conclude that our algorithm is always at least as fast as the $O^*((2\mathbf{cw}(H)+1)^{|V_G|})$ time clique-width algorithm by Wahlstr\"om~\cite{wahlstrom2011new}, and strictly faster for the aforementioned classes of graphs. For example, if $H$ is a cograph with edges then our algorithm solves \#$H$-{\sc coloring} in $O^*(3^{|V_G|})$ time which beats the clique-width $O^*(5^{|V_G|})$ algorithm by a significant margin.  Let us also remark that the class of cographs does not have bounded linear clique-width, so the $O^*((\mathbf{lcw}(H)+2)^{|V_G|})$ algorithm is not relevant. Also, if $H$ is a distance-hereditary graph, our algorithm solves \#$H$-{\sc coloring} in $O^*(5^{|V_G|})$ time which beats the clique-width $O^*(7^{|V_G|})$ algorithm. If $H$ is a cycle of length at least 7 we instead get $\mathbf{ctww}(H) =  3$, $\mathbf{cw}(H) = 4$, $\mathbf{lcw}(H) = 4$, yielding the bounds $O^*(5^{|V_G|})$, $O^*(9^{|V_G|})$, and $O^*(6^{|V_G|})$, {\it i.e.}, also in this case our algorithm is strictly faster.

Moreover, let us also remark that the technique employed in this article could be similarly used to derive the same results applied to the more general frameworks of counting the solutions of {\it binary constraint satisfaction problems}, {\it i.e.}, problems of the form $\#${\sc Binary-Csp}$(\Gamma)$ with a set of binary relations $\Gamma$ over a finite domain. However, to simplify the presentation  we restrict our attention to the $\#H$-{\sc Coloring} problem.

\section{Preliminaries}\label{sec:preliminaries}

Throughout this paper, a {\em graph} $G$ is a tuple $(V_G,E_G)$, where $V_G$ is a finite set (the set of vertices of $G$), and $E_G$ is a binary irreflexive symmetric relation over $V_G$ (the set of edges of $G$). A {\em looped-graph} is a $G$ is a tuple $(V_G,E_G)$, where $V_G$ is a finite set (the set of vertices of $G$), and $E_G$ is a binary symmetric relation (not necessarily irreflexive) over $V_G$ (the set of edges of $G$).
We will denote the number of vertices of a graph $G$ by $n(G)$ or, simply, by $n$ when there is no danger of ambiguity. A {\em cycle} is a graph isomorphic to the graph $C_n=( [n] , \{ (i,j)\in [n]^2 \mid |i-j|\in \{1,n-1\}\} )$ with $n\ge 3$. The neighborhood of a vertex $u$ of a graph $G$ is the set $N_G(u) = \{v\in V_G \mid (u,v)\in E_G\}$. For a graph $H$ we let {\sc $H$-Coloring} be the computational  problem of deciding whether there exists an homomorphism from an input graph $G$ to $H$, {\it i.e.}, whether there exists a function $f \colon V_G \to V_H$ such that $(x,y) \in E_G$ implies that $(f(x), f(y)) \in E_H$. We write $\#H$-{\sc Coloring} for the associated {\em counting} problem where we instead wish to determine the exact number of such homomorphisms. As remarked in Section~\ref{sec:intro}, the template graph $H$ can be chosen with great flexibility to model many different types of problems.

\subsection{Parameterized complexity}

We assume that the reader is familiar with parameterized complexity and only introduce the strictly necessary concepts (we refer to Flum \& Grohe~\cite{DBLP:series/txtcs/FlumG06} for further background). A {\em parameterized counting problem} is a pair $(F,dom)$ where $F:\Sigma^*\mapsto\mathbb{N}$ (for an alphabet $\Sigma$, {\it i.e.}, a finite set of symbols) and $dom$ is a subset of $\Sigma^*\times \mathbb{N}$. A parameterized counting problem $(F,dom)$ is said to be {\em fixed-parameter tractable} ({\sf FPT}) if there exists a computable function $f \colon \mathbb{N} \to \mathbb{N}$ such that for any instance $(x,k)\in dom$ of $F$, we can compute $F(x)$ in $f(k) \times \|x\|^{O(1)}$ time. An algorithm with this complexity is said to be an {\it {\sf FPT} algorithm}. Note that even though $f$ might be superpolynomial, which is expected if the problem is {\sf NP}-hard, instances where $k$ is reasonably small can still be efficiently solved. 

In practice, when studying  {\sf FPT}  algorithms for an {\sf NP}-hard counting problem, it is very natural to optimize the superpolynomial function $f$ that appears in the complexity of the algorithm solving it. Typically, when dealing with graph problems parameterized by the number of vertices $n$, an algorithm running in $c^n \times \|x\|^{O(1)}$ will be considered efficient in practice if $c>1$ is small. This field of research is sometimes referred to as {\em fine-grained complexity}.

\subsection{Clique-width}\label{sec:cliquewidth}

For $k\geq 1$, let $[k]=\{1,\dots, k\}$. A {\em $k$-labelled graph} $G$ is a tuple $(V_G,E_G,\ell_G)$, where $(V_G,E_G)$ is a graph and $\ell_G: V_G \to [k]$. For $i\in [k]$ and a $k$-labelled graph $G$, denote by $V_G^i= \ell_G^{-1}(\{i\})$ the set of vertices of $G$ of label $i$.
A {\em $k$-expression} $\varphi$ of a $k$-labelled graph $G$, denoted $[\varphi]=G$, is an expression defined inductively~\cite{COURCELLE199349} using:
\begin{enumerate}
\item \textbf{Single vertex:} $\bullet_i$ with $i\in [k]$: $[\bullet_i]$ is a $k$-labelled graph with one vertex labelled by $i$ (we sometimes write $\bullet_i(u)$ to state that the vertex is named $u$),
\item \textbf{Disjoint Union:} $\varphi_1\oplus \varphi_2$: $[\varphi_1\oplus\varphi_2]$ is the disjoint union of the graphs $[\varphi_1]$ and $[\varphi_2]$.
\item \textbf{Relabelling:} $\rho_{i\rightarrow j}(\varphi)$ with $(i,j)\in [k]^2$ and $i\neq j$: $[\rho_{i\rightarrow j}(\varphi)]$ is the same graph as $[\varphi]$, in which all vertices of $G$ with former label $i$ now have label $j$, 
\item \textbf{Edge Creation:} $\eta_{i,j}(\varphi)$ with $(i,j)\in [k]^2$ and $i\neq j$: $[\eta_{i,j}(\varphi)]$ is the same graph as $[\varphi]$, in which all tuples of the form $(u,v)$ with $\{\ell_G(u),\ell_G(v)\} = \{i,j\}$ is now an edge.
\end{enumerate}

A graph $G$ has a $k$-expression $\varphi$ if there exists $\ell:V_G\mapsto [k]$ such that $[\varphi]=(V_G,E_G,\ell)$. The {\em clique-width} of a graph $G$ (denoted by $\mathbf{cw}(G)$) is the minimum $k\geq 1$ such that $G$ has a $k$-expression. An {\em optimal expression} of a graph $G$ is a $\mathbf{cw}(G)$-expression of $G$.
The {\em subexpressions} of an expression $\varphi$ are defined similarly: the only subexpression of $\bullet_i$ is $\bullet_i$, the subexpressions of $\varphi=\varphi_1\oplus\varphi_2$ are $\varphi$ and the subexpressions of $\varphi_1$ and $\varphi_2$, the subexpressions of $\varphi=\rho_{i\rightarrow j}(\varphi')$ and $\varphi=\eta_{i,j}(\varphi')$ are $\varphi$ and the subexpressions of $\varphi'$.
A {\em linear} $k$-expression is a $k$-expression $\varphi$ where for every subexpression of $\varphi$ of the form $\varphi_1\oplus \varphi_2$, $\varphi_2$ is of the form $\bullet_i$ with $i\in [k]$. The {\em linear clique-width} (denoted by $\lcw(G)$) of a graph $G$ is the minimum $k\geq 1$ such that $G$ has a linear $k$-expression.

The most prominent of the many graph classes with bounded clique-width is perhaps the class of {\em cographs}: it is the class of graph that do not contain an induced path on $4$ vertices \cite{corneil1985linear}. The cographs are exactly the graphs of cliquewidth bounded by $2$ \cite{courcelle2000upper}.
Another important graph class of bounded clique-width is the class of {\em distance-hereditary graphs}: it is the class of graph in which the distances in any connected induced subgraph are the same as they are in the original graph. The class of distance-hereditary graphs strictly contains the class of cographs, and any distance-hereditary graph has its clique-width bounded by $3$ \cite{golumbic2000clique}.

\subsection{Parameters over contraction sequences}\label{subsec:component_twin-width}

Let $V$ be a finite set, and let $n:=|V|$. A {\em partition} of $V$ is a set $\mathcal{P}=\{S_1,\dots,S_k\}$ (with $k\ge 1$) of non-empty subsets of $V$, such that every element of $V$ belongs to exactly one of the $S_i$ with $i\in [k]$.
A {\em partition sequence} \cite{bonnet2022twin6} $(\mathcal{P}_n,\dots,\mathcal{P}_1)$ of $V$ is a sequence of partitions of $V$, such that $\mathcal{P}_n$ is the partition into singletons, and each $\mathcal{P}_k$ (with $k\in [n-1]$) is obtained by merging two parts of $\mathcal{P}_{k+1}$: {\it i.e.} denoting $\mathcal{P}_{k+1}=\{S_1,\dots,S_{k+1}\}$, there exists $(i,j)\in [k+1]$ with $i\neq j$ and $\mathcal{P}_k=(\mathcal{P}_{k+1}\setminus \{S_i,S_j\})\cup\{S_i\cup S_j\}$. Note that this definition implies for all $k\in [n]$, that $\mathcal{P}_k$ has $k$ elements, and that in particular, $\mathcal{P}_1=\{V\}$.

A {\em trigraph}~\cite{bonnet2020twin1} is a triplet $G=(V_G,E_G,R_G)$ where $(V_G,E_G)$ is a graph and $(V_G,R_G)$ is a looped-graph, with $E_G\cap R_G= \emptyset$. The set $E_G$ is the set of (black) edges of $G$, and $R_G$ the set of {\em red edges} of $G$. The {\em red-degree} of a vertex $u\in V_G$ is its degree in the looped-graph $(V_G,R_G)$ ignoring the red loops. A {\em red-connected component} of a trigraph $G$ is a connected component of the looped-graph $(V_G,R_G)$.
A trigraph is naturally associated to every partition of the set of vertices of a graph via the following definition.

\begin{definition}\label{def:trigraphpartition}

Let $G=(V_G,E_G)$ be a graph and $\mathcal{P}$ be a partition of $V_G$, the trigraph $G/\mathcal{P}=(\mathcal{P},E_{\mathcal{P}},R_{\mathcal{P}})$ is defined by :

\begin{itemize}
    \item $E_{\mathcal{P}}=\{ (S_1,S_2)\in\mathcal{P}^2 \mid S_1\neq S_2,  S_1\times S_2 \subseteq E_G\}$,
    \item $R_{\mathcal{P}}=(\{ (S_1,S_2)\in\mathcal{P}^2 \mid S_1\neq S_2, (S_1\times S_2)\cap E_G\neq \emptyset \}\setminus E_{\mathcal{P}})\cup \{(S,S)\mid S\in\mathcal{P}, |S|\ge 2\}$.
\end{itemize}

\end{definition}

These choices of definitions for $E_{\mathcal{P}}$ and $R_{\mathcal{P}}$ are strongly motivated by Property \ref{prop:meaning of contraction}, that enables to interpret the presence of edges between two different vertices $S_1$ and $S_2$ via the bipartite graph induced on $G$ with the bipartition $\{S_1,S_2\}$. 

\begin{property}\label{prop:meaning of contraction}
    Let $G$ be a graph, $\mathcal{P}$ be a partition of $V_G$, and let $U$ and $V$ be two different vertices of $G/\mathcal{P}$. For all $u\in U$ and $v\in V$:
  \begin{itemize}
        \item $(u,v)\in E_G$, whenever $(U,V)\in E_{G/\mathcal{P}}$, and 
        \item $(u,v)\notin E_G$, whenever $(U,V)\notin E_{G/\mathcal{P}}\cup R_{G/\mathcal{P}}$.
    \end{itemize}  
\end{property}

The presence of a black edge indicates a complete bipartite graph, whereas the absence of an edge shows that the bipartite graph has no edge. In contrast, a red edge can be viewed as a loss of complete information: it will therefore be natural to study parameters that increase with the number of red-edges. The proof of the soundness of our algorithms (in Section~\ref{sec:algo}) that make use of partition sequences rely on this fundamental property. It can be easily obtained by reformulating the definition of partition sequences.

A {\em contraction sequence} \cite{bonnet2020twin1} of a graph $G$ on at least two vertices is a sequence of trigraphs of the form $(G_n,\dots,G_1)$ with $n=|V_G|$, such that there exists a partition sequence $(\mathcal{P}_n,\dots,\mathcal{P}_1)$ with  $G_k=G/\mathcal{P}_k$, for all $k\in [n]$.
If $U$ and $V$ are the elements of $\mathcal{P}_{k+1}$ that are such that $\mathcal{P}_k=(\mathcal{P}_{k+1}\setminus \{U,V\})\cup \{U\cup V\}$, we write that $G_k=G_{k+1}/(U,V)$, as $G_k$ is obtained from $G_{k+1}$ by contracting the vertices $U$ and $V$ of $G_{k+1}$. In order to alleviate notations, we will (abusively) denote the vertex $U\cup V$ of $G_k$ as $UV$.
Note that $G_k$ has $k$ vertices and, in particular, the trigraph $G_n=(V_{G_n},E_{G_n},\emptyset)$ has no red edge, and the graph $(V_{G_n},E_{G_n})$ is isomorphic to $G$. Note also that $G_1$ has only one vertex, and is necessarily the trigraph\footnote{\label{Vertex of contraction sequence}Each vertex of $G_k$ is a set of vertices of $G$ that have been contracted.} with one vertex $G_1=(\{V_G\},\emptyset,\{(V_G,V_G)\})$. 

We can remark that a trigraph $G_k$ (with $k\in [n-1]$) obtained in a contraction sequence can be derived easily from $G_{k+1}$. The rules to follow when performing a contraction are given in Remark \ref{rem:ContractionRule}.

\begin{remark}\label{rem:ContractionRule}

For each $k\in [n-1]$, the trigraph $G_k=G_{k+1}/(U,V)$ can easily be described in function of the graph $G_{k+1}$, noticing that for all vertices $X$ and $Y$ of $G_k$:

\begin{itemize}

    \item If both $X\neq UV$ and $Y\neq UV$, $(X,Y)$ is a black edge (respectively a red edge) in $G_k$ if and only if it is a black edge (respectively red edge) in $G_{k+1}$.

    \item If $X=Y=UV$, then $(X,Y)$ is a red loop in $G_k$.

    \item If $X=UV$ and $Y\neq X$, and if both $(U,Y)$ and $(V,Y)$ are black edges in $G_{k+1}$, then $(X,Y)$ is a black edge in $G_k$.

    \item If $X=UV$ and $Y\neq X$, and if both $(U,Y)$ and $(V,Y)$ are non-edges ({\it i.e.} neither a black edge nor a red edge) in $G_{k+1}$, then $(X,Y)$ is a non-edge in $G_k$.    
    \item In any other case where $X=UV$ and $Y\neq X$, $(X,Y)$ is a red edge in $G_k$.
    
\end{itemize}

\end{remark}

To define the parameters related to contraction sequences, we introduce various notions of ``width'' for a trigraph, each of which is a function assigning an integer to any trigraph. We extend the notion of width to contraction sequences by considering the maximum width of the trigraphs occurring in the sequence. Finally, the width of a graph is defined as the minimum width among all its contraction sequences. Also, if the width notion is clear from the context, we say that a contraction sequence of a graph $G$ is {\em optimal} if its width equals the width of $G$.

The {\em twin-width} ($\mathbf{tww}$) \cite{bonnet2020twin1} of a trigraph is the maximal red-degree of its vertices. Similarly, the {\em component twin-width} ($\mathbf{ctww}$) of a trigraph is the maximal size of a red-connected component. Also, the {\em total twin-width} ($\mathbf{ttww}$)\cite{bonnet2022twin6} of trigraph is its number of red-edges.
It is known that the class of graph that admits a contraction sequence without red edges (except red loops) is exactly the class of cographs \cite{bonnet2020twin1}. As a consequence, the cographs are exactly the graphs of twin-width $0$, and of component twin-width $1$.

We also introduce a new parameter that we call the {\em total vertex twin-width}. The {\em total vertex twin-width} ($\mathbf{tvtww}$) of a trigraph is its number of vertices adjacent to at least one red edge (including red loops). 
We believe that this ``vertex-based parameter'' opens more interesting computational applications than the ``edge-based parameter'' total twin-width, as it is arguably more natural for algorithms to iterate over vertices than over edges. However, the two parameter are closely connected by natural linear and quadratic bounds.
Clearly, if a looped-graph has $t\ge 0$ vertices of degree at least $1$, it has at least $t/2$ edges and at most $t(t+1)/2$ edges. Applying these remarks to the red graphs $(V_{G'},R_{G'})$ of a trigraph $G'$ leads to the following quadratic bounds.

\begin{theorem}\label{thm:tvtww vs ttww} For any graph $G$,
$$ \tvtww(G) \le 2\ttww(G) \le (\tvtww(G))(\tvtww(G)+1). $$
\end{theorem}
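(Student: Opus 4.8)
The plan is to reduce the theorem to the elementary counting fact recalled just before the statement, applied uniformly to every trigraph of a single contraction sequence, and then to pass to minima over contraction sequences.

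First I would fix notation: for a trigraph $G'=(V_{G'},E_{G'},R_{G'})$ write $r(G')=|R_{G'}|$ for its number of red edges and $t(G')$ for the number of vertices of $G'$ incident to at least one red edge (red loops included), so that $t(G')$ is exactly the number of vertices of degree $\ge 1$ in the looped-graph $(V_{G'},R_{G'})$. All red edges live on those $t(G')$ vertices, and conversely each of them has degree $\ge 1$; so the counting remark preceding the theorem yields, for every trigraph $G'$,
\[
\frac{t(G')}{2}\ \le\ r(G')\ \le\ \frac{t(G')\,(t(G')+1)}{2},
\]
equivalently $t(G')\le 2r(G')$ and $2r(G')\le t(G')(t(G')+1)$. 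This per-trigraph double inequality is really the whole content; the rest is bookkeeping over sequences.

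For the left inequality $\tvtww(G)\le 2\ttww(G)$ I would take a contraction sequence $(G_n,\dots,G_1)$ of $G$ that is optimal for $\ttww$, so that $\max_k r(G_k)=\ttww(G)$. Applying $t(G_k)\le 2r(G_k)\le 2\ttww(G)$ for every $k$ shows that this same sequence, now measured with the total vertex twin-width notion, has width $\max_k t(G_k)\le 2\ttww(G)$; since $\tvtww(G)$ is the minimum such width over all contraction sequences of $G$, we get $\tvtww(G)\le 2\ttww(G)$. For the right inequality $2\ttww(G)\le \tvtww(G)(\tvtww(G)+1)$ I would argue symmetrically: take a contraction sequence $(G_n,\dots,G_1)$ optimal for $\tvtww$, so $\max_k t(G_k)=\tvtww(G)$, and use that $x\mapsto x(x+1)$ is nondecreasing on $\mathbb{N}$ to obtain $2r(G_k)\le t(G_k)(t(G_k)+1)\le \tvtww(G)(\tvtww(G)+1)$ for every $k$. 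Hence this sequence has $\ttww$-width at most $\tvtww(G)(\tvtww(G)+1)/2$, and minimizing over sequences gives $\ttww(G)\le \tvtww(G)(\tvtww(G)+1)/2$.

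There is no real obstacle here: the only points worth stating explicitly are that the two inequalities may (and in general must) be witnessed by two \emph{different} contraction sequences — each optimal for whichever parameter sits on the larger side — rather than by one sequence optimal for both parameters simultaneously, and that for graphs with at most one vertex the statement is vacuous since both widths are $0$ (no contraction sequence exists, or one takes the empty/trivial sequence). Everything else follows from the recalled counting bound on looped-graphs and the monotonicity of $x(x+1)$.
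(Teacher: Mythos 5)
Your proof is correct and takes essentially the same approach as the paper: apply the stated elementary counting bound for looped-graphs (a looped-graph with $t$ vertices of degree $\ge 1$ has between $t/2$ and $t(t+1)/2$ edges) to the red graph $(V_{G'},R_{G'})$ of each trigraph $G'$ in a contraction sequence, then pass to the minimum over sequences. The paper compresses this entirely into the sentence preceding the theorem; your write-up simply makes explicit the (correct) bookkeeping point that the two inequalities are witnessed by two different contraction sequences, one optimal for $\ttww$ and one optimal for $\tvtww$.
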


\subsection{Rank-width}

A {\em branch-decomposition} \cite{oum2005graphs} of a graph $G$ is a binary tree $T$ (a tree where each non-leaf vertex has two children) whose set of leaves is exactly $V_G$. Let $G$ be a graph and $T$ a branch-decomposition of $G$. Every edge $e$ of $T$ corresponds to a bipartition $(X_e,Y_e)$ of $V_G$ by considering the bipartition of the leaves of $T$ into their connected components of $T-e$ (the tree $T$ but in which the edge $e$ have been removed). For every edge $e$ of $T$, let $A_e$ be the $\mathbb{F}_2$-matrix whose set of rows is $X_e$ and whose set of columns is $Y_e$, and whose coefficient of index $(u,v)\in X_e\times Y_e$ is $1$ if $(u,v)\in E_G$, and $0$ otherwise.

Finally, let $\rho_G(T) = \max\limits_{e\in E_T} \mathbf{rank}(A_e)$. The {\em rank-width} of $G$ denoted by $\mathbf{rw}(G)$, is the minimum of $\rho_G(T)$ for every branch-decomposition $T$ of $G$. A branch-decomposition $T$ realizing this minimum is called an {\em optimal} branch-decomposition of $G$.
We also define the {\em rank-width of a bipartition $(X,Y)$} of the vertices of $V_G$ as the rank of the $\mathbb{F}_2$-matrix $A_{X,Y}$, defined analogously to $A_e$ with respect to the bipartition $(X_e, Y_e)$.

One of the main interests of rank-width is made clear in the following lemma. 

\begin{lemma}\label{lem:similar_rows}
Let $T$ be a branch-decomposition of a graph $G$, and $e\in E_T$. If $|X_e|> 2^r$ (with $r$ the rank-width of $(X_e,Y_e)$), then there exists $(u,u')\in (X_e)^2$ with $u\neq u'$ such that 
$$N_G(u)\cap Y_e = N_G(u')\cap Y_e.$$
\end{lemma}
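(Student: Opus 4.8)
The plan is to exploit the defining property of the rank over $\mathbb{F}_2$, namely that the rank of $A_e$ bounds the number of \emph{distinct rows}. First I would observe that each row of $A_e$ indexed by a vertex $u \in X_e$ is precisely the characteristic vector of $N_G(u) \cap Y_e$ inside $Y_e$: its entry at column $v \in Y_e$ is $1$ iff $(u,v) \in E_G$. So two vertices $u, u' \in X_e$ have the same row in $A_e$ if and only if $N_G(u) \cap Y_e = N_G(u') \cap Y_e$. The statement therefore reduces to a counting bound: if the matrix $A_e$ has more than $2^r$ rows (where $r = \mathbf{rank}(A_e)$), then two of its rows must coincide.

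The key step is the standard linear-algebra fact that an $\mathbb{F}_2$-matrix of rank $r$ has at most $2^r$ distinct rows. I would justify this by choosing a basis $B$ of the row space of $A_e$; since $|B| = r$ and the row space consists exactly of the $\mathbb{F}_2$-linear combinations of the vectors in $B$, it has at most $2^r$ elements. Every row of $A_e$ lies in this row space, so the set of distinct rows has size at most $2^r$. Now, since $|X_e| > 2^r$ and the map $u \mapsto (\text{row of } A_e \text{ indexed by } u)$ sends the $|X_e|$ elements of $X_e$ into a set of size at most $2^r$, the pigeonhole principle yields two distinct $u, u' \in X_e$ mapping to the same row, i.e.\ with $N_G(u) \cap Y_e = N_G(u') \cap Y_e$, as desired.

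There is no serious obstacle here; the only thing to be careful about is the correspondence between ``row of $A_e$'' and ``trace of the neighborhood on $Y_e$'', which follows immediately from the definition of $A_e$ given just before the lemma, and the fact that the relevant notion of rank-width of the bipartition $(X_e, Y_e)$ is exactly $\mathbf{rank}(A_e)$ (again by definition). One minor point worth stating explicitly is that the bound uses $|X_e| > 2^r$ with a strict inequality, which is exactly what makes pigeonhole applicable even in the extremal case where all $2^r$ possible rows occur.
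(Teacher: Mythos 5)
Your proof is correct and follows the same route as the paper's: identify rows of $A_e$ with neighborhood traces on $Y_e$, note that the row space over $\mathbb{F}_2$ has at most $2^r$ elements, and apply pigeonhole. You merely spell out the row-space cardinality bound and the row/neighborhood correspondence a bit more explicitly than the paper does.
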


\begin{proof}
Since the rank of the matrix $A_e$ is $r$, the rows of $G$ all belong to a $\mathbb{F}_2$-vector space of dimension at most $r$. The latter has a cardinality of at most $2^r$, and therefore, $X_e$ has 2 identical rows, which proves the result.
\end{proof}

Also, a branch-decomposition that is a {\it caterpillar} (a rooted tree that becomes a path rooted in an extremity if the leaves are removed) is said to be a {\em linear branch-decomposition}. The {\em linear rank-width} of a graph $G$ is then the minimum of $\rho_G(T)$ for $T$ a linear branch-decomposition.

Note that giving a linear branch-decomposition of a graph is equivalent to giving a linear order over the vertices of $G$. The order $v_1\le v_2\le \dots \le v_n$ over the vertices $v_1,\dots,v_n$ of $G$ corresponds to the linear branch-decomposition given in Figure \ref{fig:LinearBranchDecomposition}.

\begin{figure}[!ht]
\begin{center}
\begin{tikzpicture}

\tikzstyle{vertex}=[circle, draw, inner sep=1pt, minimum width=6pt]
\tikzstyle{sq}=[rectangle, draw, inner sep=1pt, minimum width=6pt]

\begin{scope}

\node[vertex] (p1) at (0,0)  {};
\node[vertex] (p2) at (1,1)  {}
    edge (p1);
\node[vertex] (p3) at (2,2) {}
    edge (p2);
\node[vertex] (p4) at (3,3)  {}
    edge[dashed] (p3);

\node () at (3,3.5) {$root$};

\node[vertex] (v1) at (-1,-1) {$v_1$}
    edge (p1);
\node[vertex] (v2) at (1,-1) {$v_2$}
    edge (p1);

\node[vertex] (v3) at (2,0) {$v_3$}
    edge (p2);

\node[vertex] (v4) at (3,1) {$v_4$}
    edge (p3);

\node[vertex] (v5) at (4,2) {$v_n$}
    edge (p4);

\end{scope}

\end{tikzpicture}
\end{center}
\caption{Linear Branch decomposition corresponding to the linear order $v_1\le \dots \le v_n$.}
\label{fig:LinearBranchDecomposition}
\end{figure}
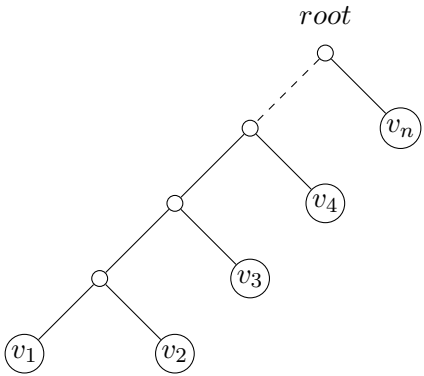

\section{Improved Bounds for Contraction Sequences Related Parameters} \label{sec:parameter_comparison}

Let us now begin the first major technical contribution of the article. In Section~\ref{sec:linear bounds} we relate  component twin-width and clique-width via a tight linear bound. As a consequence, we also manage to relate linear clique-width to component twin-width and show that the component twin-width of a graph is never higher than its linear clique-width. Then, in Section~\ref{sec:approximation} we turn to the problem of {\em approximating} component twin-width (for a given input graph). We show two positive results, one using clique-width as an intermediate parameter, and an improved approximation via rank-width. Lastly, we (in Section~\ref{sec:ttww}) prove a novel quadratic bound between total twin-width and linear clique-width. Hence, not only can  (linear) clique-width be expressed via the twin-width parameter family, but this can be accomplished with a relatively small overhead.

\subsection{Comparing clique-width and component twin-width}\label{sec:linear bounds}

In this section, we prove the linear bounds between clique-width $\mathbf{cw}$ and component twin-width $\mathbf{ctww}$. As the presence of red-loops does not impact the component twin-width, we ignore them in this section.

\begin{restatable}{theorem}{LinearBounds}
\label{thm:cw vs ctww}
For every graph $G$, $\mathbf{cw}(G) \leq \mathbf{ctww}(G)+1 \leq 2\mathbf{cw}(G)$.
\end{restatable}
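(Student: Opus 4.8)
The plan is to prove the two inequalities separately, both by explicit constructions that translate between $k$-expressions and contraction sequences; this matches the paper's claim that the proof is constructive.

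For the lower bound $\mathbf{cw}(G) \le \mathbf{ctww}(G)+1$, I would start from an optimal contraction sequence $(G_n, \dots, G_1)$ of width $c = \mathbf{ctww}(G)$, i.e., every red-connected component has at most $c$ vertices. Reading the sequence \emph{forwards} (from $G_1$ to $G_n$) is a process of repeatedly splitting a merged vertex into two; I want to turn this into a $k$-expression with $k = c+1$ labels. The key idea is that at each trigraph $G_k$, a vertex lying in a red-connected component $C$ only "knows" about the other vertices of $C$ via red edges, and all black/non-edges to the rest of the graph are already fully determined (by Property~\ref{prop:meaning of contraction}). So I would maintain the invariant that the vertices inside each current red component of $G_k$ carry pairwise-distinct labels drawn from $[c]$ (there are at most $c$ of them), while all vertices outside any red component — whose adjacencies are "frozen" — can safely share a single common label, say $c+1$. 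Building the expression bottom-up along the reversed sequence, a contraction $G_{k} = G_{k+1}/(U,V)$ becomes: relabel so $UV$ takes one label, possibly retire labels freed up, and whenever a red component "closes" (shrinks so that some vertex no longer has red edges) push that vertex to label $c+1$. A split in the forward direction (= introducing a new vertex when going bottom-up in the expression) corresponds to a $\bullet_i$ together with the appropriate $\eta_{i,j}$ edge-creations to glue it correctly; the point is that, because the new vertex's non-red neighbors are already correctly grouped under label $c+1$ (and its red neighbors have distinct labels in $[c]$), a bounded number of $\eta$ and $\rho$ operations suffices and no relabelling conflict arises. I expect this direction to be the main obstacle: one has to argue carefully that labels can be recycled as red components merge and split, that the common "frozen" label $c+1$ is never corrupted by a later edge-creation, and that the bipartite structure asserted by Property~\ref{prop:meaning of contraction} genuinely lets all frozen vertices share one label.

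For the upper bound $\mathbf{ctww}(G)+1 \le 2\,\mathbf{cw}(G)$, I would go the other way: from an optimal $k$-expression $\varphi$ of $G$ with $k = \mathbf{cw}(G)$, build a contraction sequence whose every red component has at most $2k-1$ vertices. The natural strategy is to process $\varphi$'s syntax tree and, roughly, contract vertices together once they have become "indistinguishable" from the point of view of the part of $\varphi$ not yet applied — i.e., contract each label class, bottom up, keeping at most one representative per label inside each not-yet-combined subexpression. When we take a disjoint union $\varphi_1 \oplus \varphi_2$, the "live" contracted vertices from the left and the right sides together number at most $k + k = 2k$, but since subsequent relabellings/edge-creations act uniformly on label classes, we can immediately merge same-labelled representatives, and a careful accounting keeps the red-connected component containing any active vertex of size at most $2k-1$ (one fewer because the final single-vertex trigraph, or the merge itself, removes one). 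I would make this precise with an invariant of the form: after processing subexpression $\psi$, the trigraph restricted to $V_{[\psi]}$ has been partially contracted so that each red component has size $\le 2k-1$ and contains at most one vertex of each label. The edge-creation steps $\eta_{i,j}$ are harmless because, by our invariant, all vertices of label $i$ have already been contracted into a single representative, so $\eta_{i,j}$ just turns a red or absent edge into a black edge — it never creates new red edges. The bulk of the work here is the bookkeeping showing the size bound $2k-1$ is maintained through $\oplus$; I'd also double-check the "$+1$" bookkeeping so the two inequalities line up exactly as stated.

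Finally I would note tightness is already witnessed by cographs (clique-width $2$, component twin-width $1$, so both inequalities are equalities there) and, for the other side, by the example of Figure~\ref{fig:Contraction sequence of C7} together with known clique-width values of cycles, which is presumably discussed later in the paper; I would not belabor this in the proof itself.
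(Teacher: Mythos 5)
Your second direction ($\mathbf{ctww}(G)+1 \leq 2\mathbf{cw}(G)$) follows essentially the paper's own proof: structural induction over the $k$-expression, contracting only same-labelled vertices, with the critical bookkeeping at $\oplus$ nodes where the two sides contribute at most $k$ label classes each, and immediately merging same-labelled classes keeps every red component at size at most $2k-1$. The paper is a bit more careful about the $\eta_{i,j}$ case than your sketch suggests: the issue is not what $\eta_{i,j}$ does to the final trigraph, but whether the contraction sequence that worked for $[\varphi']$ still stays within the size bound when applied to $[\eta_{i,j}(\varphi')]$; the paper observes that $\eta_{i,j}$ can only shrink the symmetric difference of the neighborhoods of same-labelled vertices, so no new red edges appear at any stage. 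Still, this is the same route, and your version would close the same way.

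Your first direction, however, has a genuine gap precisely at the point you flag as the main obstacle. The proposed ``frozen'' label $\kappa+1$ is not safe. Two vertices of $G_k$ that are both frozen (lying in singleton red-connected components) can have entirely different adjacencies to the rest of $G$; Property~\ref{prop:meaning of contraction} gives no relation between their neighborhoods, it only pins down each frozen vertex's relation to each other vertex of $G_k$ individually. If both carry label $\kappa+1$, then every subsequent $\eta_{\kappa+1,j}$ creates the same edges for both, which cannot reproduce distinct neighborhoods; and if a later contraction $G_k = G_{k+1}/(U,V)$ picks two frozen vertices, there is no clique-width operation that can single out $U$ and $V$ from the crowd of vertices sharing label $\kappa+1$. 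The paper sidesteps all of this by never maintaining a single global expression. Its invariant $\mathcal{P}(k)$ keeps a \emph{separate} $(\kappa+1)$-expression $\varphi_C$ for each red-connected component $C$ of $G_k$, with $\varphi_C$ representing only the induced subgraph $G[\bigcup C]$ and using one label per part $S_i$ of $C$. When the contraction merges several red components $C_1,\dots,C_q$ of $G_{k+1}$ into one component $C$ of $G_k$, their expressions are combined via $\oplus$, then an $\eta_{i,i'}$ is applied for each \emph{black} edge $(S_i,S_{i'})$ of $G_{k+1}$ (these are exactly the ``certain'' complete bipartite pieces guaranteed by Property~\ref{prop:meaning of contraction}), and finally one relabelling $\rho_{p+1\to p}$ accounts for the contracted pair, keeping the live label count at $|C|+1 \le \kappa+1$. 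Vertices of $G$ currently sitting in different red components live in different expressions and never need to share a label, so the issue you correctly suspected simply does not arise.
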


Firstly, we prove the leftmost inequality. An example of the application of the proof of Lemma \ref{lem:cw leq ctww+1} is provided in Appendix \ref{app:cw vs ctww}.

\begin{restatable}{lemma}{FirstLinearBound}
\label{lem:cw leq ctww+1}

For every graph $G$, 
$\mathbf{cw}(G)\leq \mathbf{ctww}(G)+1.$

\end{restatable}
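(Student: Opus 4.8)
The plan is to take an optimal contraction sequence $(G_n,\dots,G_1)$ of $G$ with $\mathbf{ctww}(G) = k$, so that every trigraph $G_t$ has all red-connected components of size at most $k$, and to build a $(k+1)$-expression of $G$ by reversing the sequence. Each vertex of $G_t$ is a subset of $V_G$; I will process the trigraphs from $G_1$ up to $G_n$, maintaining at each step a family of $k$-expressions, one for each vertex of $G_t$, whose disjoint union gives a labelling of all of $V_G$. The labels of a $k$-expression associated to a vertex $S$ of $G_t$ will be used to distinguish, inside $S$, the vertices according to which red-connected component of $G_t$ they belong to (together with a way to tell $S$ itself apart from the other parts in its component). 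Since each red-connected component has at most $k$ vertices, $k$ labels suffice to give every part in a component a distinct label; the ``$+1$'' extra label will be needed as scratch space to distinguish the part currently being manipulated from the others, or to correctly handle parts lying in singleton red-components (isolated parts, which carry full black/non-edge information to everything outside).

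The key inductive step is the following. Going from $G_{t}$ to $G_{t-1}$, two parts $U,V$ of $G_t$ are merged into $UV$. I want to produce, from the $k$-expressions for the parts of $G_t$, $k$-expressions for the parts of $G_{t-1}$. Only the expression for $UV$ changes: it is obtained by taking the disjoint union $\oplus$ of the expressions for $U$ and for $V$, applying edge-creation operations $\eta_{i,j}$ to add exactly the edges of $G$ between $U$ and $V$ that are present (using Property~\ref{prop:meaning of contraction} and Remark~\ref{rem:ContractionRule}: if $(U,V)$ was a black edge in $G_t$ we make all edges between them, if a non-edge we make none, if a red edge we must already have enough label granularity inside $U$ and $V$ to carve out precisely the edge set — and this is exactly where bounded red-component size is used, since the red edge $(U,V)$ lies in a red-component of size $\le k$ and the labelling invariant guarantees the relevant sub-blocks are distinguishable), and finally relabelling so that the labels of the new expression for $UV$ reflect the red-connected component structure of $G_{t-1}$, which differs from that of $G_t$ only around $UV$. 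I must check that relabellings can be carried out with only $k+1$ labels available, possibly re-using the spare label transiently; this is the bookkeeping heart of the argument.

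At the end, $G_n$ is (isomorphic to) $G$ with every vertex its own singleton part, so the $n$ single-vertex expressions, combined by $\oplus$, give a $(k+1)$-expression whose underlying graph is $G$; to assemble them I just take iterated disjoint unions, which costs no labels. Reading off the construction gives $\mathbf{cw}(G)\le k+1 = \mathbf{ctww}(G)+1$.

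The main obstacle I anticipate is the label accounting: showing that at every point the labels inside the expression attached to a part $S$ encode enough to (a) recreate later the red edges incident to $S$ when $S$ gets merged, and (b) still be squeezed into $k$ (plus one working) labels despite red-components changing shape as the sequence progresses. In particular one has to be careful that when a merge enlarges or splits a red-component, the labels can be updated consistently across all affected parts using relabelling operations alone, and that parts in singleton red-components (which behave like ordinary clique-width blocks with a uniform outside-neighborhood) are handled without spending extra labels. I expect the formal statement of the invariant — roughly, ``for each $t$ and each part $S\in \mathcal{P}_t$, the stored expression for $S$ is a $k$-expression of $G[S]$ in which two vertices of $S$ share a label only if they are indistinguishable with respect to $(G_t,S)$'s red structure'' — to be the crux, after which the inductive step is a somewhat mechanical verification using Remark~\ref{rem:ContractionRule}.
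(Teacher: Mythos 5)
Your plan has the right shape---process the contraction sequence, build up a $(\kappa+1)$-expression via disjoint unions, $\eta$-operations and relabellings, invoking Property~\ref{prop:meaning of contraction}---but the invariant you maintain is at the wrong granularity, and the step you flag as the ``bookkeeping heart'' genuinely fails. You keep one $(\kappa+1)$-expression of $G[S]$ \emph{per part} $S$, and when $U,V$ are contracted along a \emph{red} edge you need $\varphi_U\oplus\varphi_V$ plus $\eta$-operations to reconstruct the bipartite graph $G[U,V]$. You assert that the bound on red-component size guarantees ``enough label granularity'' for this, but that is not true: $\eta$ can only create complete bipartite graphs between label classes, so you would need the labels of $\varphi_U$ to refine the partition of $U$ by $V$-neighbourhood (and symmetrically), and the number of such classes is governed by the internal bipartite structure of $G[U,V]$, which the size of the red-component does not control and which can exceed $\kappa+1$. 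In general no fixed per-part labelling can anticipate this, since $U$ may later be merged with several red-neighbours with incompatible neighbourhood partitions and $\rho_{i\to j}$ can only coarsen labels, never refine them. (There is also a directional slip: you announce processing $G_1$ up to $G_n$, but the inductive step contracts $G_t\to G_{t-1}$; and the final expression is read off the single component $\{V_G\}$ of $G_1$, not by $\oplus$-ing the $n$ singleton expressions, which would just give an edgeless graph.)

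The paper avoids the obstacle by maintaining the invariant per \emph{red-connected component}, not per part: for every red-component $C=\{S_1,\dots,S_p\}$ of $G_k$, it keeps a $(\kappa+1)$-expression of $G[\bigcup C]$ whose label classes are exactly $S_1,\dots,S_p$. This changes the picture in the two places you need. First, all edges of $G$ interior to $\bigcup C$, including the irregular bipartite pieces between red-adjacent parts, are already present in $\varphi_C$ and never have to be rebuilt with $\eta$. Second, in the inductive step $G_{k+1}\to G_k$ (contracting $U,V$ into $UV$), the only component that changes is the one containing $UV$, and it is a fusion of whole red-components $C_1,\dots,C_q$ of $G_{k+1}$; the edges still to be created run between distinct $C_j$'s, and Property~\ref{prop:meaning of contraction} guarantees these are all-or-nothing precisely because no red edge of $G_{k+1}$ crosses two different red-components. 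So each $\eta_{i,i'}$ applied there creates a complete bipartite graph---exactly what $\eta$ can do---and the label budget is $|C|+1\le\kappa+1$ (one label per part, with the extra one absorbed by the final $\rho_{p+1\to p}$). Your stated invariant, ``distinguish inside $S$ by red-connected component,'' is vacuous since all of $S$ lies in one component; the correct invariant is ``distinguish inside $\bigcup C$ by part.''
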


\begin{proof}
Let $(G_n,\dots,G_1)$ be an optimal contraction sequence of $G$, and let $\kappa=\mathbf{ctww}(G)$. Note that, for all $k\in [n]$, every red-connected component of $G_k$ has size $\leq \kappa$. We explain how to construct a $(\kappa+1)$-expression of $G$.

We show the following invariant for all $k\in [n]$:

\noindent$\mathbf{\mathcal{P}}(k):$ {\em ``Let $C=\{S_1,\dots,S_p\}$ be a red-connected component of $G_k$ and
$\bigcup C=S_1\cup\dots\cup S_p$. There exists a $(\kappa+1)$-expression $\varphi_C$ of the $p$-labelled graph $G_C=G[\bigcup C]$ with $\forall i\in [p], V_{G_C}^i=S_i$.''}

We first prove $\mathcal{P}(n)$. In $G_n$, there are no red edges: the red-connected components are the singletons $\{u\}$ for $u\in V_G$. Thus $\bullet_1$  is a $(\kappa+1)$-expression of $(G[\{u\}],\ell_u)$ (with $\ell_u:u\mapsto 1$), which proves $\mathcal{P}(n)$.

Now, take $k\in [n-1]$ and assume $\mathcal{P}(k+1)$. We will prove $\mathcal{P}(k)$. By definition of a contraction sequence, $G_k$ is of the form $G_k=G_{k+1}/(U,V)$ for two different vertices $U$ and $V$  of $G_{k+1}$. 

Observe that each red-connected component of $G_k$ is also a red-connected component of $G_{k+1}$, except the red-connected component $C$ containing $UV$. Hence, it suffices to prove $\mathcal{P}(k)$ for the red-connected component $C$. Notice also that $(C\setminus \{UV\}) \cup \{U,V\}$ is a union of red-connected components $C_1,\dots,C_q$ of $G_{k+1}$ (every pair of red-connected vertices in $G_{k+1}$ that does not contain $U$ or $V$ is also red-connected in $G_k$). We thus have that $C=:(C_1\cup\dots\cup C_q\cup\{UV\})\setminus \{U,V\}$.

Denote by $\{S_1,\dots,S_{p-1},S'_p\}$ the set of vertices of $C$, with $p=|C|$, and $S'_p=UV$. We have seen that $$C_1\cup\dots\cup C_q = \{S_1,\dots,S_{p-1},S_p,S_{p+1}\},$$ with $S_p:=U$ and $S_{p+1}:=V$. 

For each $i\in [p+1]$, $S_i$ belongs to a unique $C_j$ with $j\in [q]$: let $j(i)\in [q]$ be such that $S_i\in C_{j(i)}$.
By $\mathcal{P}(k+1)$ and up to interchanging  labels, for every $j\in [q]$ there exists a $(\kappa+1)$-expression $\varphi_{C_j}$ of the $p$-labelled graph $G_{C_j}=G[\bigcup C_j]$ with for all $i\in [p]$ with $j(i)=j$, $V_{G_{C_j}}^i = S_i$.
Therefore, $\varphi':=\varphi_{C_1} \oplus \dots \oplus \varphi_{C_q}$ expresses the disjoint union of the graphs $G_{C_1},...,G_{C_q}$. Furthermore, $\varphi'$ is an expression of a graph over the same vertices as $G[\bigcup C]$, Now, we still need to construct the black edges crossing these red-connected components.

We thus apply $\eta_{i,i'}$ (edge creation)\footnote{See Section \ref{sec:cliquewidth} for the notations relative to clique-width.} to $\varphi'$ for every black edge of the form $(S_i,S_{i'})$ in $G_{k+1}$, to obtain an expression $\varphi''$. Since the vertices with labels $i$ and $i'$ are exactly the vertices of $S_i$ and $S_{i'}$, we create exactly the edges between vertices of $S_i$ and of $S_{i'}$ when applying $\eta_{i,i'}$. By Property \ref{prop:meaning of contraction}, we only construct correct black edges in $G[\bigcup C]$, and thus $\varphi''$ is an expression of $G[\bigcup C]$. Conversely, as $\mathcal{P}(k+1)$ ensures that $\varphi_{C_1},\dots,\varphi_{C_q}$ represent exactly $G_{C_1},\dots,G_{C_q}$, we have that the edges of $G[\bigcup C]$ that are not represented in $\varphi'$ are exactly the edges crossing the red-connected components $C_1,\dots,C_q$ of $G_{k+1}$. In other words, the edges missing in $\varphi'$ are necessarily of the form $(a,b)\in S_{i}\times S_{i'}$, where $S_i$ and $S_{i'}$ do not belong to the same red-connected component. Since $(S_{i},S_{i'})$ is not a red edge of $G_{k+1}$ and since $(a,b)\in E_G \cap (S_{i}\times S_{i'})$, we conclude by Definition \ref{def:trigraphpartition} that $(S_{i}, S_{i'})$ is a black edge of $G_{k+1}$. Thus, $\eta_{i,i'}$ has been applied when constructing $\varphi''$, constructing thereby the edge $(a,b)$ in $\varphi''$.

Moreover, we need to make sure that the labels in $\varphi''$ match the requirements of $\mathcal{P}(k)$. For that, we set $\varphi_{G_C}:=\rho_{p+1\rightarrow p}(\varphi'')$ (relabelling). By doing so, $S_p$ (say, $U$) and $S_{p+1}$ (say, $V$) have the same label in $\varphi_{G_C}$.
Thus, it follows that $\varphi_{G_C}$ witnesses $\mathcal{P}(k)$ (since $S_p=U$ and $S_{p+1}=V$ are now contracted into $S'_p=UV$ in $G_k$) for the red-connected component $C$. Indeed, we have used $p+1=|C|+1\leq \kappa+1$ different labels to construct $\varphi_{G_C}$ from $\varphi_{C_1},\dots,\varphi_{C_q}$.
Since $\{V_G\}$ is a red-connected component of $G_1$, it follows from
$\mathcal{P}(1)$ that $G[V_G]=G$ has a $(\kappa+1)$-expression, and thus $\mathbf{cw}(G)\leq \kappa+1$. As $\kappa=\mathbf{ctww}(G)$, we have
$
\mathbf{cw}(G)\leq \mathbf{\mathbf{ctww}}(G)+1. 
$
\end{proof}

The expression of $G$ constructed in the proof of Lemma \ref{lem:cw leq ctww+1} presents an interesting structural property formalized in Claim \ref{claim:cw le ctww+1 labels}.

\begin{claim}\label{claim:cw le ctww+1 labels}

Let $\varphi_G$ be the $(\kappa+1)$-expression of the graph $G$ given by the proof of Lemma \ref{lem:cw leq ctww+1} (with $\kappa:=\ctww(G)\ge 1$). For every subexpression of $\varphi_G$ of the form $\varphi_1\oplus\varphi_2$, there are at most $\kappa$ labels that appear as the label of a vertex of $[\varphi_1]$.
    
\end{claim}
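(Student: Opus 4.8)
The plan is to prove, by downward induction on $k$ from $n$ to $1$, a statement slightly stronger than the claim: \emph{for every red-connected component $C$ of $G_k$, the expression $\varphi_C$ produced in the proof of Lemma~\ref{lem:cw leq ctww+1} has the property that for every one of its subexpressions of the form $\psi_1\oplus\psi_2$, at most $\kappa$ distinct labels occur on the vertices of $[\psi_1]$.} Taking $k=1$ (so that $C=\{V_G\}$ and $\varphi_C=\varphi_G$) then yields the claim.

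First I would record two elementary facts. (i) By the invariant $\mathcal{P}(k+1)$, the expression associated with a red-connected component $C'$ of $G_{k+1}$ uses exactly $|C'|$ distinct labels on its vertices --- one per element of $C'$. (ii) The ``interchange of labels'' used in the inductive step of Lemma~\ref{lem:cw leq ctww+1} is a bijection of $[\kappa+1]$; such a relabelling changes neither the number of distinct labels occurring on the vertices of any subexpression nor the property we are inducting on, and after it the relabelled copies $\varphi_{C_1},\dots,\varphi_{C_q}$ of the components $C_1,\dots,C_q$ use pairwise disjoint label sets (of sizes $|C_1|,\dots,|C_q|$), since the $C_i$'s partition $\{S_1,\dots,S_{p+1}\}$.

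The base case $k=n$ is immediate, since each $\varphi_C=\bullet_1$ has no $\oplus$-subexpression. For the inductive step, write $G_k=G_{k+1}/(U,V)$ and let $C$ be the red-connected component of $G_k$ containing $UV$. Any other red-connected component of $G_k$ is a red-connected component of $G_{k+1}$ with the same associated expression, so the induction hypothesis carries over verbatim. For $C$ itself we have $\varphi_C=\rho_{p+1\to p}(\eta_{i_1,j_1}(\cdots\eta_{i_m,j_m}(\varphi_{C_1}\oplus\cdots\oplus\varphi_{C_q})\cdots))$ with $p=|C|$ and $C_1,\dots,C_q$ red-connected components of $G_{k+1}$ whose union is $\{S_1,\dots,S_{p+1}\}$. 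Since $\rho$- and $\eta$-operations create no disjoint union, every $\oplus$-subexpression of $\varphi_C$ is either internal to some $\varphi_{C_i}$ --- handled by the induction hypothesis together with fact (ii) --- or is a subexpression $\psi_1\oplus\psi_2$ of $\varphi_{C_1}\oplus\cdots\oplus\varphi_{C_q}$, in which case $\psi_1$ is the disjoint union of $\{\varphi_{C_i}:i\in I\}$ for a nonempty subset $I\subsetneq[q]$ (proper because $\psi_2$ is nonempty), independently of how this $q$-fold disjoint union is parenthesised. For such $\psi_1$, the labels on $[\psi_1]$ are the disjoint union of those of the $\varphi_{C_i}$ with $i\in I$, so their number equals $\sum_{i\in I}|C_i|\le(\sum_{i=1}^q|C_i|)-1=(p+1)-1=p$ (the ``$-1$'' because $[q]\setminus I\ne\emptyset$ and each $|C_i|\ge1$). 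Finally $p=|C|\le\kappa$ because $C$ is a red-connected component of a trigraph in the fixed optimal contraction sequence, as already observed in the proof of Lemma~\ref{lem:cw leq ctww+1}. This closes the induction.

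I do not expect a real obstacle; the two points that require care are (a) making explicit that ``interchange of labels'' is a bijective relabelling --- this is what keeps the per-component label counts and the disjointness of label sets intact through the induction --- and (b) checking that the left operand $\psi_1$ of any $\oplus$-subexpression of $\varphi_{C_1}\oplus\cdots\oplus\varphi_{C_q}$ corresponds to a \emph{proper} subset of $[q]$, which is precisely what makes the ``$-1$'' saving valid and gives the bound $p\le\kappa$ rather than $p+1$.
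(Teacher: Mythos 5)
Your proof is correct, but it takes a genuinely different route from the paper's. The paper dispatches the claim in two lines by a purely syntactic observation: in the constructed expression $\varphi_G$, the left operand $\varphi_1$ of any $\oplus$-subexpression is either a single vertex $\bullet_i$, or ends with a relabelling $\rho_{i\to j}$; in the latter case label $i$ cannot occur on any vertex of $[\varphi_1]$, so at most $\kappa$ of the $\kappa+1$ available labels remain. This is shorter but leans on an implicit convention: for the observation to hold, the $q$-fold disjoint union $\varphi_{C_1}\oplus\cdots\oplus\varphi_{C_q}$ must be parenthesised so that each $\oplus$ has a single component expression $\varphi_{C_j}$ as its left child (e.g.\ right-associatively). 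You instead run a downward induction and bound the label count directly: the left operand of any top-level $\oplus$ carries $\sum_{i\in I}|C_i|$ labels for a proper nonempty $I\subsetneq[q]$, giving $\le(p+1)-1=p\le\kappa$. This makes explicit the exact label-cardinality invariant hidden in $\mathcal{P}(k)$ (each $\varphi_{C}$ uses $|C|$ vertex labels, and the component label sets are disjoint after the bijective ``interchange of labels''), and it is agnostic to how the $q$-fold $\oplus$ is parenthesised. The two points you flag as needing care --- that relabelling is a bijection of $[\kappa+1]$, and that $I$ is proper --- are exactly the right ones; once they are in place your counting closes cleanly. In short: the paper's argument is the more economical one given its parenthesization convention, while yours is the more robust and more self-contained of the two.
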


\begin{proof}

If $\varphi_1\oplus\varphi_2$ is a subexpression of $\varphi_G$, we notice that $\varphi_1$ is either of the form $\bullet_i$ with $i\in [\kappa+1]$, and $[\varphi_1]$ has therefore only $1\le \kappa$ labels, or $\varphi_1$ itself ends with a re-labelling ({\it i.e.} $\varphi_1$ is of the form $\rho_{i\rightarrow j}(\varphi_1')$). In the second case, the label $i$ can not appear as the label of a vertex of $[\varphi_1]$, which proves the claim. 
\end{proof}

Note that in contrast, $\mathbf{lcw}$ can not be bounded by a function of $\mathbf{ctww}$. For instance, the class of cographs have unbounded linear clique-width \cite{gurski2005relationship}, despite having a bounded component twin-width of $1$.
Let us now continue by proving the rightmost bound of Theorem~\ref{thm:cw vs ctww}. An example of the application of the proof of Lemma \ref{lem:ctww leq 2cw-1} is provided in Appendix \ref{app:ctww vs cw}.

\begin{restatable}{lemma}{SecondLinearBound}
\label{lem:ctww leq 2cw-1}

For every graph $G$, we have:
\begin{itemize}
\item[$(i)$] 
$\mathbf{ctww}(G) \leq 2\mathbf{cw}(G)-1$, and 
\item[$(ii)$] $\mathbf{ctww}(G)\leq \mathbf{lcw}(G).$
\end{itemize}

\end{restatable}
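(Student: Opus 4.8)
The plan is to go in the direction opposite to Lemma~\ref{lem:cw leq ctww+1}: starting from a $k$-expression $\varphi$ of $G$ with $k=\mathbf{cw}(G)$ (and from a \emph{linear} $k$-expression for part $(ii)$), I would read off a contraction sequence of $G$ directly from the syntax tree of $\varphi$, processed bottom-up (DFS post-order), and then bound the size of its red-connected components. The guiding invariant is that at every stage, for each subexpression $\psi$ that has been ``completed'' and not yet absorbed by its parent, the blocks of the current partition covering $V_{[\psi]}$ are exactly the label classes $\{V_{[\psi]}^i : i\in[k]\}$ of $[\psi]$ (at most $k$ of them), while all remaining vertices of $V_G$ are still singletons. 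Transitions are cheap: completing a leaf $\bullet_i(v)$ or an edge-creation node $\eta_{i,j}(\psi')$ requires no contraction (label classes are unchanged); completing a relabelling $\rho_{i\rightarrow j}(\psi')$ means contracting the two blocks $V_{[\psi']}^i$ and $V_{[\psi']}^j$ (if both are non-empty); and completing a disjoint union $\varphi_1\oplus\varphi_2$ means contracting, one pair at a time, the block $V_{[\varphi_1]}^l$ with the block $V_{[\varphi_2]}^l$ for every label $l$ occurring in both children. At the root we finish by merging the at most $k$ label classes of $[\varphi]=G$ into the single block $\{V_G\}$. Since we only ever merge blocks and end at $\{V_G\}$, this is a legitimate contraction sequence.

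The crux is a structural fact about clique-width expressions: if $u,u'\in V_{[\psi]}$ carry the same label in $[\psi]$, then $N_G(u)\setminus V_{[\psi]} = N_G(u')\setminus V_{[\psi]}$. This follows by a short induction on the syntax tree: any edge of $G$ joining $u$ to a vertex outside $V_{[\psi]}$ is created by some $\eta_{a,b}$ at an ancestor node, and between $\psi$ and that ancestor every operation (relabelling, disjoint union, edge creation) acts identically on two vertices that agree on their label, so $u$ and $u'$ are still equal-labelled there. Applying this fact in both directions shows that for two distinct completed regions $V_{[\psi_1]}$, $V_{[\psi_2]}$ (necessarily disjoint) and any label classes $V_{[\psi_1]}^i$, $V_{[\psi_2]}^j$, the set $V_{[\psi_1]}^i\times V_{[\psi_2]}^j$ is either contained in $E_G$ or disjoint from $E_G$; by Definition~\ref{def:trigraphpartition} there is therefore \emph{no red edge} between two completed regions, nor between a completed region and an untouched singleton. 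Hence, away from the contraction steps, every red-connected component lies inside a single region $V_{[\psi]}$ and so has size at most $|\{V_{[\psi]}^i : i\in[k]\}|\le k$.

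It remains to examine the contractions at a disjoint-union node $\varphi=\varphi_1\oplus\varphi_2$. Just before them, $V_{[\varphi]}$ carries the partition $\{V_{[\varphi_1]}^i\}\cup\{V_{[\varphi_2]}^j\}$ (at most $2k$ blocks), and by the previous paragraph it splits into red-connected components of size at most $k$ with no red edge between $V_{[\varphi_1]}$ and $V_{[\varphi_2]}$. The first contraction (merging the label-$l$ blocks for some common label $l$) can fuse the component of $V_{[\varphi_1]}^l$ with that of $V_{[\varphi_2]}^l$ and turn some black edges red, but afterwards $V_{[\varphi]}$ contains at most $2k-1$ blocks, so every red-connected component has size at most $2k-1$; moreover the fused block is $V_{[\varphi]}^l$, which has uniform neighbourhood outside $V_{[\varphi]}$, so nothing leaks out of the region. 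Each later contraction in the cascade strictly decreases the number of blocks of the region, hence keeps all red-connected components of size at most $2k-1$, and when the cascade ends the partition is back to the at most $k$ label classes of $[\varphi]$. The root clean-up involves only at most $k$ blocks. Thus no trigraph of the sequence has a red-connected component of size exceeding $2k-1$, giving $\mathbf{ctww}(G)\le 2\mathbf{cw}(G)-1$, which is $(i)$. For $(ii)$, if $\varphi$ is linear then every disjoint union has the shape $\varphi_1\oplus\bullet_i$, so the corresponding step is a single contraction fusing a component of size at most $k$ with the isolated singleton $\{v\}$, yielding again a component of size at most $k$; the root clean-up likewise stays within $k$ blocks, so $\mathbf{ctww}(G)\le\mathbf{lcw}(G)$.

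The main obstacle I anticipate is making the ``no red edge leaves a completed region'' claim fully rigorous: identifying precisely which regions are simultaneously active during the DFS, checking that the new red edges produced by a contraction at an $\oplus$ node genuinely stay inside $V_{[\varphi]}$, and carrying out the exact pigeonhole count that gives $2k-1$ rather than $2k$ immediately after the first merge. The underlying same-label/same-outer-neighbourhood lemma is routine, but every bound downstream depends on stating and invoking it correctly.
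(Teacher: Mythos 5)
Your construction is essentially the same as the paper's: both build a contraction sequence bottom-up from the syntax tree, merge vertices only when they already carry the same label, and bound the red components via the observation that equal-labelled vertices have the same neighbourhood outside the current region, combined with the counting argument that at a disjoint-union node there are at most $2k$ blocks before the first merge and at most $2k-1$ afterwards (resp.\ $k+1$ and $k$ in the linear case). The only real difference is presentational: the paper proves a self-contained invariant $\mathcal{H}(\varphi)$ by structural induction on the subexpression (so it never has to argue about what lies outside $V_{[\varphi]}$), while you run a global DFS over $G$ and therefore need, and correctly supply, the extra claim that no red edge ever leaves a completed region.
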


\begin{proof}

We first prove $(i)$\ and then adapt it to prove $(ii)$.
Let $k:=\mathbf{cw}(G)$ and take a $k$-expression of $G$. We will explain how to construct a contraction sequence of $G$ in which every red-connected component has size $\leq 2k-1$. 
The following remark will be implicitly used throughout this proof.

\begin{remark}\label{remark:same label forever}
Two vertices that have the same label in an expression $\varphi'$ also have the same label in any expression of $\varphi$ that has $\varphi'$ as a subexpression.
\end{remark}

We prove the following property of $k$-expressions of $\varphi$  by structural induction:

\noindent\emph{$\mathcal{H}(\varphi):$ ``Let $(G,\ell_G):=[\varphi]$. There exists a (partial) contraction sequence $(G_n,\dots,G_{k'})$ with $k'\leq k$ of $G$ such that:}

\begin{itemize}

    \item \emph{every red-connected component in the trigraphs $G_n,\dots,G_{k'}$ has size $\leq 2k-1$,}

    \item \emph{the vertices of $G_{k'}$ are exactly the non-empty $V_G^i$ for $i\in [k]$, and}

    \item \emph{every pair of vertices contracted have the same labels in $(G,\ell_G)$\footnote{Inductively, we say that the label of a vertex $S\in V_{G_l}$ ($k'\leq \ell \leq n$) is then the common label of the vertices that have been contracted together to produce $S$.}.''}
\end{itemize}

If $\varphi=\bullet_i$ with $i\in [k]$, there is nothing to do since $G$ has only one vertex.
If $\varphi$ is of the form $\rho_{i\rightarrow j}(\varphi')$ (with $(i,j)\in [k]^2$ and $i\neq j$), consider for $G$ the partial contraction sequence of $(G',\ell_{G'}):=[\varphi']$ given by $\mathcal{H}(\varphi')$, and then contract $V_{G'}^i$ and $V_{G'}^j$ to obtain $V_G^j = V_{G'}^i\cup V_{G'}^j$. Since $\varphi'$ is also a $k$-expression of $G$, and since that last contraction happens in a trigraph with at most $k$ vertices, this partial contraction sequence of $G$ satisfies $\mathcal{H}(\varphi)$.

If $\varphi$ is of the form $\eta_{i,j}(\varphi')$ (with $(i,j)\in [k]^2$ and $i\neq j$), consider for $G$ the partial contraction sequence of $(G',\ell_{G'}):=[\varphi']$ given by $\mathcal{H}(\varphi')$. To prove that it is sufficient to prove $\mathcal{H}(\varphi)$, it is sufficient to justify that it does not create any red edge in the contraction of $G$ that was not present in the contraction of $G'$. The first red-edge $(x,y)$ that would appear in the contraction of $G=[\eta_{i,j}(\varphi')]$ that does not appear in the same contraction of $G'=[\varphi']$, results necessarily of the contraction of two vertices $u$ and $v$ with $x=uv$ and $y$ being in the symmetric difference of the neighborhoods of $u$ and $v$ in $G=[\eta_{i,j}(\varphi')]$ but not in $G'=[\varphi']$. Such a red-edge can not exist because we contract only vertices with the same label in $\varphi'$ (or, equivalently, in $\varphi$), and that $\eta_{i,j}$ can only decrease (with respect to $\subseteq$) the symmetric difference between the neighborhood of vertices with the same label in $\varphi$. By Remark \ref{remark:same label forever}, this implies that it is also true for vertices having the same label in any subexpression of $\varphi$.

If $\varphi$ is of the form $\varphi=\varphi'\oplus \varphi''$: denote $(G',\ell'):=[\varphi']$ and $(G'',\ell''):=[\varphi'']$, thereby, $V_G=V_{G'}\cup V_{G''}$. Consider the partial contraction sequence of $G$ given by:

\begin{enumerate}

    \item contract the vertices in $V_{G'}$ in accordance to the contraction sequence given by $\mathcal{H}(\varphi')$,

    \item contract the vertices in $V_{G''}$ in accordance to the contraction sequence given by $\mathcal{H}(\varphi'')$,

    \item for all $i\in [k]$, contract  $V_{G'}^i$ with $V_{G''}^i$  (if both are nonempty) to get $V_G^i=V_{G'}^i\cup V_{G''}^i$.
    
\end{enumerate}

Steps $1$ and $2$ do not create a red-edge adjacent to both $V_{G'}$ and $V_{G''}$ (since these are two distinct connected components of $G$).
Thus, before  step $3$, we have a trigraph with $\leq 2k$ vertices (because both trigraphs obtained after $\mathcal{H}(\varphi')$ and $\mathcal{H}(\varphi'')$ have less than $k$ vertices), and every red-component that have appeared so far has size $\leq 2k-1$. After the first contraction of step $3$, the resulting trigraph has $\leq 2k-1$ vertices, and thus no red-connected component of size $>2k-1$ can emerge. Such a contraction satisfies every requirement of $\mathcal{H}(\varphi)$. We have thus proven $\mathcal{H}(\varphi)$ for every $k$-expression. 

Now, take a $k$-expression $\varphi$  of $G$. Up to applying $\rho_{i\rightarrow 1}$ for all $i\in [k]$ to $\varphi$, we can assume that $(G,\ell_G):=[\varphi]$ with $\ell_G$ being constant equal to $1$. The partial contraction sequence of $G$ given by $\mathcal{H}(\varphi)$ is a total contraction sequence of $G$ of component twin-width $\leq 2k-1$. Since $k=\mathbf{cw}(G)$, we have proven that 
$\mathbf{\mathbf{ctww}}(G)\leq 2\mathbf{cw}(G)-1.$
To prove $(ii)$, we show a similar property $\mathcal{H}_{\text{lin}}(\varphi)$ for every linear $k$-expression. The only difference between $\mathcal{H}_{\text{lin}}$ and $\mathcal{H}$ is that we replace the condition $\leq 2k-1$ (on the size of red components) by $\leq k$. The proof then follows exactly the same steps, except for the case $\varphi=\varphi'\oplus\varphi''$, where step $2$ (the contraction according to $\mathcal{H}_{\text{lin}}(\varphi'')$) is not necessary anymore, since $\varphi''$ is of the form $\bullet_i$ ($i\in [k]$), and  we obtain a trigraph of size $k+1$ instead of $2k$, since $\varphi''$ has $1$ vertex instead of $k$. This  ensures that every red-connected component has size $\leq (k+1)-1=k$ instead of $2k-1$ in the non-linear case. 

For step $3$, {\it i.e.,} contracting vertices of the same color in $\varphi'$ and in $\varphi''$, just note that it consists of at most $1$ contraction instead of $k$ in the linear case.
\end{proof}

We see that the linearity of a $k$-expression enables us to derive a stronger upper bound on the component twin-width of the graph it represents. Note that more generally, if for all subexpression of $\varphi$ of the form $\varphi_1\oplus\varphi_2$, the sum of the number of labels in $\varphi_1$ and in $\varphi_2$ does not exceed an integer $t\ge 2$, then we can conclude (with the same routine) that $\mathbf{ctww}(G)\le t-1$. This observation leads to a tight upper bound on the component twin-width of distance-hereditary graphs.

\begin{remark}\label{rem:ctwwdistancehereditary}

Let $G$ be a distance-hereditary graph. We have $\mathbf{ctww}(G)\le 3$.

\end{remark}

Indeed, if $G$ is a distance-hereditary graph, Golumbic and Rotics \cite{golumbic2000clique} witness that $\mathbf{cw}(G)\le 3$ by providing a $3$-expression $\varphi$ of  that is such that, for every subexpression of $\varphi$ of the form $\varphi_1\oplus\varphi_2$, only $2$ different labels occur in $\varphi_1$ and in $\varphi_2$.

\subsection{Approximating component twin-width} \label{sec:approximation}

The linear bounds established in Section~\ref{sec:linear bounds} entail reasonable approximation results for component twin-width by making use of known approximations of clique-width~\cite{jeong2021finding}.
The best currently known approximation algorithm for clique-width is given by Theorem \ref{thm:approx cw}.

\begin{theorem}\cite{jeong2021finding}\label{thm:approx cw}
For an input $n$-vertex graph $G$ and a positive integer $k$, we can in time
$f(k)n^3$ (for some computable function $f$) find a $(2^{k+1} - 1)$-expression of $G$ or confirm that $G$ has clique-width larger than $k$.

\end{theorem}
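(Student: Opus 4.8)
The plan is to reduce the clique-width question to the corresponding question for \emph{rank-width}, which is algorithmically more tractable, and then to convert a good rank-decomposition into a clique-width expression. The whole argument rests on the two-sided Oum--Seymour relation $\mathbf{rw}(G)\le \mathbf{cw}(G)\le 2^{\mathbf{rw}(G)+1}-1$, where crucially the right inequality is witnessed by a polynomial-time construction of a $(2^{r+1}-1)$-expression from any width-$r$ rank-decomposition.

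First I would invoke the exact cubic-time rank-width algorithm of Hlin\v{e}n\'y--Oum, as refined in \cite{jeong2021finding}: given $G$ and $k$, in time $f(k)\,n^3$ it either outputs a rank-decomposition of $G$ of width at most $k$, or certifies $\mathbf{rw}(G)>k$. (Internally this first runs the Oum--Seymour approximation to obtain a decomposition of bounded width, then performs dynamic programming over it, tracking the finitely many relevant boundary types up to a suitable equivalence; the cut-rank function over $\mathbb{F}_2$ being submodular and efficiently computable is what makes this work.) In the ``no'' branch we are done: $\mathbf{rw}(G)>k$ forces $\mathbf{cw}(G)\ge \mathbf{rw}(G)>k$, so reporting that $G$ has clique-width larger than $k$ is correct.

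It remains to convert a width-$\le k$ rank-decomposition into a $(2^{k+1}-1)$-expression. Subdivide one edge of the decomposition tree and root there, so that the tree becomes rooted binary with leaf set $V_G$; then process the nodes from the leaves upward, maintaining the invariant that at a node with leaf set $X$ one holds a labelled copy of $G[X]$ whose label classes refine the partition of $X$ by \emph{neighbourhood type across the cut} $(X,V_G\setminus X)$. Since that cut has $\mathbb{F}_2$-rank at most $k$, these rows lie in a space of size at most $2^k$, so there are at most $2^k$ types (and the all-zero type can be pinned to a fixed label). When merging children $X_1,X_2$ into $X=X_1\cup X_2$, adjacency between $u\in X_1$ and $v\in X_2$ depends only on the type of $u$ across $(X_1,V_G\setminus X_1)$ and the type of $v$ across $(X_2,V_G\setminus X_2)$ (each cut again of rank $\le k$, hence $\le 2^k$ types on each side); hence all $X_1$--$X_2$ edges are created by finitely many edge-creation operations $\eta_{i,j}$ between label classes, after which one relabels to record each vertex's new type across $(X,V_G\setminus X)$. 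A careful accounting---sharing the zero-type label and reusing the label pool between the two sides---keeps the number of labels in simultaneous use bounded by $2^{k+1}-1$, and the whole conversion is polynomial in the size of the decomposition.

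The main obstacle is this last step: squeezing the label budget down to exactly $2^{k+1}-1$ while proving that the $\eta$-operations introduce precisely the cross-edges of $G[X]$ and no spurious ones. This requires fixing canonical representatives of the type-classes at every node and verifying that none of the relabellings ever conflates two types that must still be distinguished higher in the tree---essentially a bookkeeping invariant threaded through the entire bottom-up pass. (The rank-width computation itself is also substantial, but here it is imported as a black box.)
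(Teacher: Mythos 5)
Your proposal is correct and follows precisely the route the paper itself attributes to this result: run the exact cubic-time rank-width algorithm (Theorem~\ref{thm:approx rw}), then in the ``yes'' branch convert the width-$\le k$ rank-decomposition into a $(2^{k+1}-1)$-expression via the Oum--Seymour bound $\mathbf{rw}(G)\le\mathbf{cw}(G)\le 2^{\mathbf{rw}(G)+1}-1$, and in the ``no'' branch use $\mathbf{rw}\le\mathbf{cw}$ to certify $\mathbf{cw}(G)>k$. The paper states this theorem as a citation to Jeong, Kim and Oum and does not reprove it, but the explanation it gives of how the result is obtained matches your argument step for step.
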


From Theorem \ref{thm:approx cw} and the linear bounds established in Lemma \ref{lem:cw leq ctww+1} and Lemma \ref{lem:ctww leq 2cw-1}, we immediately obtain an approximation algorithm for component twin-width.

\begin{theorem}\label{thm:Approx ctww}
For an input $n$-vertex graph $G$ and a positive integer $p$, we can in time
$f(p)n^3$ (for some computable function $f$) find a contraction sequence of $G$ of component twin-width $\leq 2^{p+3}-3$, or confirm that $G$ has component twin-width larger than $p$.

\end{theorem}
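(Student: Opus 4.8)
The plan is to chain the clique-width approximation of Theorem~\ref{thm:approx cw} through the two linear bounds established in Section~\ref{sec:linear bounds}: the contrapositive of Lemma~\ref{lem:cw leq ctww+1} handles the ``reject'' branch, and the constructive content of Lemma~\ref{lem:ctww leq 2cw-1}$(i)$ handles the ``accept'' branch.

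First I would run the algorithm of Theorem~\ref{thm:approx cw} on $G$ with parameter $k := p+1$; this costs $f(p+1)\,n^3$ time and has two possible outcomes. If it reports that $\mathbf{cw}(G) > p+1$, then since $\mathbf{cw}(G) \le \mathbf{ctww}(G)+1$ by Lemma~\ref{lem:cw leq ctww+1}, we obtain $\mathbf{ctww}(G) \ge \mathbf{cw}(G)-1 > p$, so we may safely confirm that $G$ has component twin-width larger than $p$. Otherwise the algorithm outputs a $(2^{k+1}-1)$-expression $\varphi$ of $G$, i.e. a $(2^{p+2}-1)$-expression.

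In the latter case I would apply the (constructive) proof of Lemma~\ref{lem:ctww leq 2cw-1}$(i)$ to $\varphi$, with $k' := 2^{p+2}-1$ playing the role of the number of labels: processing the subexpressions of $\varphi$ bottom-up and assembling the partial contraction sequences prescribed by the invariant $\mathcal{H}$, one obtains a contraction sequence of $G$ in which every red-connected component has size at most $2k'-1 = 2(2^{p+2}-1)-1 = 2^{p+3}-3$, exactly as required. Each node of $\varphi$ is handled by a bounded number of vertex contractions on trigraphs with at most $2k'$ vertices, so this transformation runs in $g(p)\,n^{O(1)}$ time; combined with the $f(p+1)\,n^3$ term it can be absorbed into a single bound $f'(p)\,n^3$.

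The only point requiring care is the running-time bookkeeping on the accept branch: one must check that extracting the contraction sequence from $\varphi$ is polynomial in $n$ with all dependence on the number of labels (which is $2^{p+2}-1$, itself a function of $p$) folded into the multiplicative function of $p$. This is immediate from the proof of Lemma~\ref{lem:ctww leq 2cw-1}, since the recursion is local and each of the four expression-building rules triggers only a bounded number of contractions. No genuinely new idea beyond composing the already-proved bounds is needed.
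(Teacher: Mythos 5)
Your proposal is correct and follows exactly the same route as the paper: invoke Theorem~\ref{thm:approx cw} with $k=p+1$, use Lemma~\ref{lem:cw leq ctww+1} on the reject branch, and convert the returned $(2^{p+2}-1)$-expression to a contraction sequence of component twin-width at most $2^{p+3}-3$ via the constructive proof of Lemma~\ref{lem:ctww leq 2cw-1}. The extra running-time bookkeeping you include is consistent with (and slightly more explicit than) what the paper writes.
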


\begin{proof}
The algorithm consists of applying the algorithm of Theorem~\ref{thm:approx cw} to $G$ with $k:=p+1$. If the algorithm confirms that $\mathbf{cw}(G)> p+1$, then we know that $\mathbf{\mathbf{ctww}}(G) >p$ by Lemma \ref{lem:cw leq ctww+1}. Otherwise, it outputs a $(2^{p+2}-1)$-expression of $G$, which we transform into a contraction sequence of $G$ of component twin-width $\leq 2\times(2^{(p+2)}-1)-1 = 2^{p+3}-3$ through the constructive proof of Lemma \ref{lem:ctww leq 2cw-1}, which can be performed in linear time in the size of the $(2^{p+1}-1)$-expression of $G$.
\end{proof}

In fact, Theorem~\ref{thm:approx cw} was obtained by first comparing clique-width and rank-width (Oum and Seymour \cite{oum2006approx} proved that for any graph $G$, $\mathbf{rw}(G)\le \mathbf{cw}(G)\le 2^{\mathbf{rw}(G)+1}-1$), and, second, by using the \textsf{FPT}  algorithm (when parameterized by $k$) for the exact computation of rank-width given by the following theorem.

\begin{theorem}\cite{jeong2021finding}\label{thm:approx rw}
Given an input $n$-vertex graph $G$ and a positive integer $k$, we can find a rank-decomposition of width at most $k$ or confirm that the rank-width of $G$ is larger than $k$, in time $f(k)n^3$ (for some computable function $f$).

\end{theorem}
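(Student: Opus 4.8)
The plan is to follow the Bodlaender--Kloks paradigm for branch-type parameters: reduce rank-width to the branch-width of a connectivity function, obtain a crude decomposition by a separate approximation routine, and then run a dynamic program along that crude decomposition to decide — and construct — an optimal one. Concretely, recall that $\mathbf{rw}(G)$ equals the branch-width of the \emph{cut-rank} function $\rho_G$, which assigns to a bipartition $(X,V_G\setminus X)$ the $\mathbb{F}_2$-rank of the adjacency submatrix $A_{X,V_G\setminus X}$; this function is symmetric, submodular, satisfies $\rho_G(\emptyset)=0$, and is computable in polynomial time. So it suffices to produce an FPT algorithm that, given (polynomial-time) access to such a function and an integer $k$, either certifies that its branch-width exceeds $k$ or outputs a branch-decomposition of width at most $k$.

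First I would invoke the Oum--Seymour constant-factor approximation for rank-width: in time $f_0(k)\,n^{O(1)}$ (cubic in $n$ with the refined analysis) it either certifies $\mathbf{rw}(G)>k$ — and we are done — or returns a rank-decomposition $T_0$ of width at most $3k$ (the precise constant is immaterial). From now on $T_0$ is in hand. Second, root $T_0$ and perform a bottom-up dynamic program along it. Each subtree corresponds to a vertex subset $X$ with $\rho_G(X)\le 3k$, so the ``boundary'' of $X$ lives in an $\mathbb{F}_2$-space of dimension $O(k)$. For each such $X$ the DP stores a \emph{characteristic}: a description, up to an equivalence depending only on the boundary data, of all rank-decompositions of $G[X]$ (with boundary annotation) that can be extended to a width-$\le k$ decomposition of all of $G$. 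The transitions are the usual ``combine the two children'' operations; since $T_0$ has width $\le 3k$ and $O(n)$ nodes, the DP runs in $g(k)\cdot n$ time, dominated by the cubic approximation step, giving the advertised $f(k)\,n^3$ bound. Carrying a witness in each characteristic (or a final top-down reconstruction pass) turns the decision procedure into a construction procedure.

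The crux — and the step I expect to be the main obstacle — is the \emph{finiteness of the index}: the number of distinct characteristics at any node must be bounded by a function of $k$ alone. This is a Myhill--Nerode style congruence argument, and its combinatorial core is precisely Lemma~\ref{lem:similar_rows}: across a cut of rank $r$ there are at most $2^r$ distinct ``types'' of boundary behaviour (neighbourhoods being the simplest instance), so over a boundary of dimension $O(k)$ only boundedly many types arise, and submodularity of $\rho_G$ keeps the relevant cuts controlled. Packaging this into a congruence of bounded index — the rank-width analogue of the typical-sequences machinery of Bodlaender--Kloks — and checking that every DP transition respects it is the technically delicate part; the reduction to branch-width, the approximation call, and the DP bookkeeping are all routine once this lemma is established.
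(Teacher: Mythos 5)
This theorem is cited from Jeong, Kim, and Oum~\cite{jeong2021finding}; the present paper does not prove it, so there is no in-paper argument to compare against. Your high-level roadmap --- reduce rank-width to branch-width of the symmetric submodular cut-rank function, obtain a crude decomposition via an Oum--Seymour-style approximation, then run a Bodlaender--Kloks-type dynamic program along it, carrying ``characteristics'' of bounded index at each node --- is indeed the general paradigm behind this family of results. So the skeleton is right.

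The genuine gap is exactly where you locate ``the crux,'' and your proposed fix for it does not work. You assert that the finiteness of the DP index follows from Lemma~\ref{lem:similar_rows}, i.e.\ that an $\mathbb{F}_2$-matrix of rank $r$ has at most $2^r$ distinct rows. That lemma only bounds how many distinct ways a \emph{single vertex} can look across a cut; it says nothing about how many inequivalent \emph{partial branch-decompositions} of a side of the cut there can be, once one records, for every edge of the partial tree, the width it induces and how that width could evolve under all possible extensions on the other side. Bounding that object is the actual technical content of~\cite{jeong2021finding}: it requires adapting the Bodlaender--Kloks typical-sequence machinery to connectivity functions over subspace arrangements, and the resulting bound on the number of characteristics is vastly larger than $2^{O(k)}$. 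Moreover, the $O(f(k)\,n^3)$ overall running time is itself part of their contribution; the Oum--Seymour $3$-approximation you invoke as a preprocessing step was not known to run in cubic time before this line of work, so ``dominated by the cubic approximation step'' silently assumes what is to be proved. None of this is unfixable --- the cited paper resolves all of it --- but your sketch supplies neither the congruence of bounded index nor the running-time accounting, and Lemma~\ref{lem:similar_rows} is not a substitute for either.
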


Thus, Theorem \ref{thm:Approx ctww} fundamentally consists in deriving bounds comparing component twin-width and rank-width from the bounds known between clique-width and rank-width, and establishes that

$$ \mathbf{rw}(G)-1 \le \mathbf{ctww}(G) \le 2^{\mathbf{rw}(G)+2} -3. $$

It is still interesting to investigate whether  a direct comparison between component twin-width and rank-width yields to better bounds, and therefore to a better approximation ratio, thanks to Theorem~\ref{thm:approx rw}. 
By avoiding using clique-width as an intermediate parameter, we can indeed prove that this is the case.

\begin{theorem}\label{thm:rw_vs_ctww}

For every graph $G$, $\mathbf{rw}(G)\le \mathbf{ctww}(G)\le 2^{\mathbf{rw}(G)+1}-1$.

\end{theorem}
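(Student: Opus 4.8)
The plan is to prove the two inequalities $\mathbf{rw}(G) \le \mathbf{ctww}(G)$ and $\mathbf{ctww}(G) \le 2^{\mathbf{rw}(G)+1}-1$ separately, in each case translating directly between a contraction sequence and a branch-decomposition rather than routing through clique-width.

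For the lower bound $\mathbf{rw}(G) \le \mathbf{ctww}(G)$, I would start from an optimal contraction sequence $(G_n,\dots,G_1)$ of $G$ and read off a branch-decomposition in the natural way: the contractions induce a rooted binary tree $T$ on leaf set $V_G$, where each internal node corresponds to a contracted vertex $S \in \bigcup_k \mathcal{P}_k$, viewed as a subset of $V_G$. Any edge $e$ of $T$ corresponds to a bipartition $(S, V_G \setminus S)$ for some such set $S$, appearing as a vertex of some trigraph $G_k$ in the sequence. The key claim is then that $\mathbf{rank}(A_{S, V_G \setminus S}) \le \mathbf{ctww}(G)$. To see this, fix the trigraph $G_k$ where $S$ is a vertex, and let $C$ be the red-connected component of $G_k$ containing $S$. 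By Property~\ref{prop:meaning of contraction}, for any vertex $T' \ne S$ of $G_k$ that is \emph{not} in $C$, either $S \times T' \subseteq E_G$ or $(S \times T') \cap E_G = \emptyset$, so all rows of $A_{S, V_G \setminus S}$ indexed by vertices of $G$ lying inside such a $T'$ are constant (all-$1$ or all-$0$). Hence the column space of $A_{S, V_G\setminus S}$ is controlled by the columns corresponding to $\bigcup(C \setminus \{S\})$, together with the all-ones vector; since $|C| \le \mathbf{ctww}(G)$, we get $\mathbf{rank}(A_{S,V_G\setminus S}) \le |C| - 1 + 1 \le \mathbf{ctww}(G)$. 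Taking the max over all edges of $T$ yields $\mathbf{rw}(G) \le \rho_G(T) \le \mathbf{ctww}(G)$. (A small bookkeeping point: one must make sure the $\pm 1$ off-by-one works out; the all-ones vector may already be in the span of the $C$-columns, but at worst we lose $1$, and $|C \setminus \{S\}| + 1 \le |C|$ when $|C| \ge 1$, so the bound $\le |C| \le \mathbf{ctww}(G)$ is safe.)

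For the upper bound $\mathbf{ctww}(G) \le 2^{\mathbf{rw}(G)+1}-1$, I would start from an optimal branch-decomposition $T$ of $G$ with $\rho_G(T) = \mathbf{rw}(G) =: r$, root it arbitrarily, and build a contraction sequence that follows $T$ bottom-up: process the nodes of $T$ in a postorder, and at each internal node with children subtrees having leaf sets $X$ and $Y$, we want the partition classes currently covering $X$ (resp. $Y$) to have already been merged down to few classes, then merge across. The control on red-component size comes from Lemma~\ref{lem:similar_rows}: at any edge $e$ of $T$ with sides $(X_e, Y_e)$, if we have already contracted $X_e$ into more than $2^r$ classes, two of those classes $S, S'$ have $N_G(S) \cap Y_e = N_G(S') \cap Y_e$ (applying the lemma to a representative, or directly to the class-merged picture), so they are ``twins with respect to the outside'' and can be contracted without creating red edges to $Y_e$. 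Thus I would design the sequence so that, as an invariant, at the moment we finish processing a subtree with leaf set $X_e$, the classes covering $X_e$ number at most $2^r$ and the only red edges incident to them stay within $X_e$; a red-connected component can then span at most the $\le 2^r$ classes inside $X_e$ plus the $\le 2^r$ classes inside $Y_e$ when we are at the sibling/parent junction, but careful scheduling (fully reducing $X_e$ before touching $Y_e$, so that while reducing $Y_e$ the $X_e$-side is already a single class) keeps the red component within $2^r + 2^r - 1 = 2^{r+1}-1$ classes — matching the claimed bound. This is essentially the same scheduling trick as in the proof of Lemma~\ref{lem:ctww leq 2cw-1}, transplanted to branch-decompositions.

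The main obstacle I anticipate is the upper-bound direction: making the scheduling argument airtight. One has to verify that Lemma~\ref{lem:similar_rows} can be applied not just to the original vertices but to the \emph{contracted} classes (i.e., that contracting outside-twins never inflates the rank of any cut, which follows because a contracted class has a well-defined black/non-edge or red relationship to outside vertices only when its members agreed, and the rank of $A_e$ can only drop under identifying equal rows), and that ``at most $2^r$ classes on each side'' combines to exactly $2^{r+1}-1$ and not $2^{r+1}$ on the red component straddling a node — this is where the order of operations (and the fact that one side has been collapsed to a single class before we start subdividing red components on the other) does the real work. The lower bound should be routine once Property~\ref{prop:meaning of contraction} is invoked correctly.
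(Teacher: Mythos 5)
Your lower-bound argument has a genuine gap, and it is not just a bookkeeping issue. You read a branch-decomposition $T$ off the contraction tree, so every cut is of the form $(S,V_G\setminus S)$ with $S$ a vertex of some trigraph $G_k$, and you want $\mathbf{rank}(A_{S,V_G\setminus S})\le|C|$ where $C$ is the red-connected component of $S$ in $G_k$. The problem is the step ``the column space is controlled by the columns corresponding to $\bigcup(C\setminus\{S\})$ together with the all-ones vector, hence rank $\le|C|-1+1$.'' The matrix $A_{S,V_G\setminus S}$ has one column per vertex of $G$ in $V_G\setminus S$, not one column per element of $C\setminus\{S\}$. Each $T'\in C\setminus\{S\}$ is itself a subset of $V_G$ of arbitrary size, and precisely when $(S,T')$ is a red edge (which is the only interesting case, since that is what puts $T'$ in $C$), the columns indexed by the different vertices of $T'$ need not coincide, so $T'$ can contribute up to $\min(|S|,|T'|)$ to the rank rather than $1$. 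That a red edge between $S$ and $T'$ exists tells you nothing about the bipartite pattern between them --- ``red'' is precisely the absence of such information --- so the small size of $C$ does not bound the rank of the cut. The paper avoids this by \emph{not} using the contraction tree: it takes the $(\kappa{+}1)$-expression of Lemma~\ref{lem:cw leq ctww+1}, reads a branch-decomposition off its $\oplus$-tree, and then exploits Claim~\ref{claim:cw le ctww+1 labels} (at most $\kappa$ labels appear below the left argument of each $\oplus$) to bound the number of distinct \emph{rows} of $A_e$ by $\kappa$. The clique-width expression guarantees that same-label vertices agree on everything outside, an invariant the raw contraction sequence simply does not provide; routing through Lemma~\ref{lem:cw leq ctww+1} is therefore essential, not a detour.

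For the upper bound your plan is in the right spirit and matches the paper's use of Lemma~\ref{lem:similar_rows}, but the postorder schedule as written is internally inconsistent: you want $X_e$ reduced to at most $2^r$ classes after processing it, yet also want $X_e$ to already be ``a single class'' while reducing $Y_e$. The paper sidesteps the scheduling question by maintaining a branch-decomposition $T_k$ of the current trigraph $G_k$ together with the explicit invariant that no red edge crosses $(D_t,V_{G_k}\setminus D_t)$ whenever $|D_t|>2^r$, and at each step picking a \emph{deepest} node $v$ with $2^r+1\le|D_v|\le 2^{r+1}$; Lemma~\ref{lem:similar_rows} applied at the edge above $v$ produces two twins inside $D_v$, their contraction confines the new red edges to $D_v$, which then has only $2^{r+1}-1$ vertices. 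You would need an analogous invariant to rule out red edges escaping into already-processed parts of the tree; I would recommend adopting the paper's dynamic choice of $v$ rather than fixing a postorder.
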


We begin by first proving the leftmost bound (in Lemma \ref{lem:rw le ctww}). Note that a weaker version $\mathbf{rw}(G)-1\le \mathbf{ctww}(G)$ would follow from Lemma \ref{lem:cw leq ctww+1}, stating that $\mathbf{cw}(G)-1\le\mathbf{ctww}(G)$, and the fact that $\mathbf{rw}(G)\le\mathbf{cw}(G)$ \cite{oum2005graphs}.
To obtain that $\rw(G)\le \ctww(G)$, it is necessary to adapt the proof of the fact that $\mathbf{rw}(G)\le\mathbf{cw}(G)$ given by Oum and Seymour \cite{oum2005graphs} applied to the expression given by Lemma \ref{lem:cw leq ctww+1}, in order to take into account the structural property of the expression obtained, which is formalized in Claim \ref{claim:cw le ctww+1 labels}.

\begin{lemma}\label{lem:rw le ctww}

For every graph $G$, $\mathbf{rw}(G)\le \mathbf{ctww}(G)$.

\end{lemma}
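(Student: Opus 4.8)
The plan is to mimic Oum and Seymour's proof that $\mathbf{rw}(G)\le\mathbf{cw}(G)$, but applied to the specific $(\kappa+1)$-expression $\varphi_G$ produced by Lemma~\ref{lem:cw leq ctww+1}, exploiting the structural strengthening recorded in Claim~\ref{claim:cw le ctww+1 labels}: at every $\oplus$-node, one of the two sides uses at most $\kappa=\mathbf{ctww}(G)$ distinct labels. The syntax tree of $\varphi_G$, after contracting unary nodes ($\rho$ and $\eta$) and keeping only leaves and $\oplus$-nodes, is essentially a branch-decomposition $T$ of $G$ (its leaves are $V_G$). What remains is to bound, for each edge $e$ of $T$, the rank over $\mathbb{F}_2$ of the bipartite adjacency matrix $A_e$ of the induced cut $(X_e,Y_e)$.

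\textbf{Key steps.} First I would fix an edge $e$ of $T$ and consider the subexpression $\psi$ of $\varphi_G$ rooted at the child endpoint of $e$, so that $X_e=V_{[\psi]}$ is the set of leaves below $e$ and $Y_e=V_G\setminus X_e$. The central observation is that two vertices $u,u'\in X_e$ that carry the same label in $[\psi]$ will forever after (in any superexpression, in particular in $\varphi_G$) receive edges to $Y_e$ in lockstep: every later $\eta_{i,j}$ either adds all of $S_i\times S_j$ or none, and since the only operations that can add an $X_e$--$Y_e$ edge act on labels, $u$ and $u'$ get the identical set of neighbours in $Y_e$, i.e. the rows of $A_e$ indexed by $u$ and $u'$ coincide. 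Hence $\mathbf{rank}(A_e)$ is at most the number of distinct rows of $A_e$, which is at most the number of distinct labels appearing on $X_e$ in $[\psi]$. It remains to argue this label count is $\le\mathbf{ctww}(G)$. For an internal edge $e$, the endpoint below $e$ is a $\oplus$-node or lies in the left/right subtree of one; in all cases $\psi$ is (a subexpression of) one side $\varphi_1$ of some $\varphi_1\oplus\varphi_2$, and Claim~\ref{claim:cw le ctww+1 labels} gives at most $\kappa$ labels on $[\varphi_1]$, hence at most $\kappa$ on $[\psi]$ — but one must be careful because a later $\rho$ applied above $\psi$ could merge labels, only ever \emph{decreasing} the count, so the bound survives. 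For the root edge and the edges incident to leaves the bound is trivial ($1$ label, or $X_e$ has one element). Taking the maximum over all $e$ gives $\rho_G(T)\le\kappa$, hence $\mathbf{rw}(G)\le\rho_G(T)\le\mathbf{ctww}(G)$.

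\textbf{Main obstacle.} The delicate point is the bookkeeping around $\rho$ and $\eta$ nodes sitting \emph{between} $\psi$ and the nearest $\oplus$-ancestor, and around which side of which $\oplus$-node the cut $e$ actually corresponds to after the unary contractions in $T$; one must verify that every internal edge of the pruned tree genuinely corresponds to a side $\varphi_1$ of some $\oplus$-subexpression to which Claim~\ref{claim:cw le ctww+1 labels} applies, and that subsequent relabellings only merge (never split) the label classes on $X_e$, so the "$\le\kappa$ distinct rows" conclusion is not spoiled. Once this correspondence between edges of the branch-decomposition and $\oplus$-subexpressions is pinned down precisely, the rank bound is immediate from Lemma~\ref{lem:similar_rows}-style reasoning. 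The rest — that $\oplus$ never introduces $X_e$--$Y_e$ edges between the two sides it joins, and that $\eta$ respects label classes — is routine and already implicit in Remark~\ref{remark:same label forever}.
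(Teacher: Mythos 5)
Your plan is essentially the paper's own proof: build the branch-decomposition $T$ from the $\oplus$-nodes of the $(\kappa+1)$-expression produced by Lemma~\ref{lem:cw leq ctww+1}, observe that two vertices with the same label in the $\oplus$-operand $[\varphi_1]$ must receive identical rows in $A_e$ (all $X_e$--$Y_e$ edges are created by $\eta$'s applied above the $\oplus$, which act uniformly on label classes), and then invoke Claim~\ref{claim:cw le ctww+1 labels} to bound the number of labels on $[\varphi_1]$ by $\kappa$. One small wording slip: you wrote ``at most $\kappa$ labels on $[\varphi_1]$, hence at most $\kappa$ on $[\psi]$'', but since $\psi$ is a subexpression of $\varphi_1$ over the same vertex set and relabellings between them can only \emph{merge} classes, $[\psi]$ may in fact carry \emph{more} labels than $[\varphi_1]$; this is harmless because the argument only needs the label count at $[\varphi_1]$ (the direct $\oplus$-operand, which is exactly what the Claim bounds), and indeed in your ``main obstacle'' paragraph you state the correct direction.
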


\begin{proof}

Let $\varphi_G$ the $(\kappa+1)$-expression of $G$ given by the proof of Lemma \ref{lem:cw leq ctww+1} (with $\kappa:=\ctww(G)$), {\it i.e.} at most $\kappa+1$ different labels can appear somewhere in the definition of $\varphi_G$. 

Up to ignoring the re-labellings ($\rho_{i\rightarrow j}$ with $(i,j)\in [\kappa+1]^2$) and the edge creations $(\eta_{i,j})$, the expression $\varphi_G$ can be naturally represented by a rooted binray tree $T$, where the leaves (single vertices $\bullet_i$) are the vertices of $G$, and where the non-leaf nodes correspond to the occurences of the disjoint unions ($\oplus$). The rooted binary tree $T$ is therefore a branch-decomposition of $G$.

We show that the rank-width of $T$ is at most $\kappa$. Let $e$ be an edge of $T$. Up to interchanging $X_e$ and $Y_e$, the bipartition $(X_e,Y_e)$ is such that $X_e=V_{G_1}$, $Y_e=V_G\setminus V_{G_1}$, where $G_1=[\varphi_1]$, and where $\varphi_G$ has a subexpression of the form $\varphi_1\oplus\varphi_2$.

It is now sufficient to remark that if two vertices $u$ and $v$ of $V_{G_1}$ have the same label in $G_1$, then they have the same neighborhood (with respect to the edges in the graph $G$) in $V_G\setminus V_{G_1}$, {\it i.e.}, formally,

$$N_G(u) \cap (V_G\setminus V_{G_1}) = N_G(v) \cap (V_G\setminus V_{G_1}).$$

We have shown that two vertices of $V_{G_1}$ with the same label correspond to two identical rows in $A_e$. We have seen that, because of Claim \ref{claim:cw le ctww+1 labels}, at most $\kappa$ labels can appear as the labels of vertices of $G_1$. It follows that $A_e$ has at most $\kappa$ different rows, and therefore $\mathbf{rank}(A_e) \le \kappa$.

This is true for every edge $e$ of $T$. The branch-decomposition $T$ of $G$ witnesses that $\rw(G) \le \kappa$, with $\kappa:=\ctww(G)$.
\end{proof}

We now focus on proving the rightmost bound of Theorem \ref{thm:rw_vs_ctww} in Lemma \ref{lem:ctww le 2^rw+1-1}. The proof is very similar to one direction of the proof of functional equivalence between boolean-width and component twin-width \cite{bonnet2022twin6}, which is not surprising, since both rely exclusively on Lemma \ref{lem:similar_rows}, that applies both to rank-width and boolean-width.

\begin{lemma}\label{lem:ctww le 2^rw+1-1}

For every graph $G$, $\ctww(G) \le 2^{\rw(G)+1}-1$.

\end{lemma}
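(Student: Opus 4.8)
The plan is to take an optimal branch-decomposition $T$ of $G$ and turn it into a contraction sequence whose red-connected components all have size at most $2^{\rw(G)+1}-1$, by processing $T$ from the leaves toward the root and maintaining, at each edge $e$ of $T$, a bounded-size partition of the vertices on the ``subtree side'' of $e$ that is refined by the equivalence ``same neighborhood into the other side''. Concretely, let $r:=\rw(G)$. For an edge $e$ of $T$ with associated bipartition $(X_e,Y_e)$ where $X_e$ is the set of leaves below $e$, Lemma~\ref{lem:similar_rows} tells us that if $|X_e|>2^r$ then two vertices in $X_e$ have the same neighborhood in $Y_e$; more precisely the number of distinct neighborhood-traces $\{N_G(u)\cap Y_e : u\in X_e\}$ is at most $2^r$. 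The invariant I would carry up the tree is: after processing the subtree rooted at (the child endpoint of) $e$, the current trigraph restricted to $X_e$ has been contracted down to at most $2^r$ vertices, one for each distinct trace, and each such ``trace-class'' vertex is red-connected only to other vertices inside $X_e$ (no red edge leaves $X_e$), so every red-connected component seen so far has size at most $2^r$.

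The key steps, in order: (1) Set up the invariant at the leaves (trivially, each $X_e$ of size $1$ is already a single vertex, one trace). (2) At an internal node $v$ of $T$ with children edges $e_1,e_2$ and parent edge $e$, we have $X_e=X_{e_1}\cup X_{e_2}$; inductively each side has been reduced to at most $2^r$ trace-class vertices with no red edge leaving it. Now we must merge the two reduced sides and further contract so that the result has at most $2^r$ trace-classes with respect to $Y_e$. Do this by repeatedly contracting any two vertices of $X_e$ that have the same neighborhood-trace into $Y_e$ (using Property~\ref{prop:meaning of contraction} to see that contracting two such vertices creates a red edge only to vertices inside $X_e$, and in fact only to vertices whose trace was already ``split'' by these two, which stay inside $X_e$). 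Before this merging round the total number of vertices in $X_e$ is at most $2^{r}+2^{r}=2^{r+1}$, so every intermediate red-connected component created during the round has size at most $2^{r+1}$; hence, allowing for the transient bound, the red-connected component size never exceeds $2^{r+1}-1$. After the round, $X_e$ has at most $2^r$ vertices (one per trace), re-establishing the invariant; moreover no red edge leaves $X_e$ because in $G$ all vertices of a trace-class have identical neighborhoods outside $X_e$, so contracting within a class produces black/non-edges outside $X_e$, never red ones. (3) At the root the whole vertex set has been contracted into a single red component which, at the last step, collapses to one vertex; all red-connected components encountered had size at most $2^{r+1}-1$. Concatenating the contractions performed at the nodes of $T$ in post-order yields a genuine contraction sequence of $G$ of component twin-width at most $2^{r+1}-1=2^{\rw(G)+1}-1$.

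The main obstacle I expect is step (2): making precise why a contraction of two same-trace vertices of $X_e$ never creates a red edge to a vertex \emph{outside} $X_e$ (this is where we crucially use that trace-classes are with respect to $Y_e$ and that we only ever merge within a class, combined with Property~\ref{prop:meaning of contraction} and Remark~\ref{rem:ContractionRule}), and carefully bookkeeping the transient component sizes during a merging round so that the bound is $2^{r+1}-1$ rather than $2^{r+1}$. One must also check that when we move from edge $e_i$ up to edge $e$, the trace-classes refine correctly, i.e. that merging the two children's partitions and then re-grouping by $Y_e$-trace is consistent — this is immediate since $Y_e\subseteq Y_{e_i}$, so the $Y_e$-trace of a vertex is a ``coarsening'' of its $Y_{e_i}$-trace and the partition only gets coarser, which is exactly what contractions allow. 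Everything else is the routine verification that the post-order concatenation of partial contraction sequences is a valid contraction sequence of $G$, analogous to the bookkeeping already carried out in the proof of Lemma~\ref{lem:ctww leq 2cw-1}.
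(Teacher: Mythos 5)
Your proposal is correct and takes essentially the same approach as the paper: a post-order traversal of an optimal branch-decomposition, using Lemma~\ref{lem:similar_rows} to collapse same-trace vertices so that each subtree side is kept at $\le 2^r$ super-vertices, with the invariant that no red edge leaves a processed subtree; the paper phrases this as a downward induction that maintains a shrinking branch-decomposition $T_k$ and at each step contracts one pair of same-neighborhood vertices below a node $v$ chosen furthest from the root with $2^r+1\le|D_v|\le 2^{r+1}$, but the underlying mechanism and the $2^{r+1}-1$ bound (red components confined to $D_v$, whose size drops below $2^{r+1}$ after the first contraction) are the same as in your ``merging round'' argument.
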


\begin{proof}

This proof follows the same scheme as the proof of the functional equivalence between boolean-width and component twin-width \cite{bonnet2022twin6}.

Similarly to a branch-decomposition of graphs, a branch-decomposition of a trigraph $G'$ is a binary tree whose set of leaves is $V_{G'}$. It is said to be rooted if a non-leaf vertex has been chosen to be the root, which leads to the usual definition of children and descendants in a rooted tree. The set of leaves descending from a vertex $v$ of a tree $T$ is denoted by $D_v^{(T)}$. Moreover, in what will follow, we will build a contraction sequence $(G_n,\dots,G_1)$ of a graph $G$, along with a sequence $(T_n,\dots,T_1)$ of branch-decomposition of $(G_n,\dots,G_1)$. We will denote $D_v^{(k)}$ instead of $D_v^{(T_k)}$ for .
Now, let $G$ be a graph and let $r:=\mathbf{rw}(G)$. We prove by downward induction (we prove $\mathcal{P}(n)$ and $\forall k\in [n-1], \mathcal{P}(k+1)\implies\mathcal{P}(k)$) the following invariant for $k\in [n]$.

$\mathcal{P}(k)$: {\em ``There exists a (partial) contraction sequence $(G_n,\dots,G_k)$ of $G$ of component twin-width $\le 2^{r+1}-1$. Moreover, there exists a branch-decomposition $T_k$ of $G_k$ such that for every $t\in V_{T_k}$ with $|D_t^{(k)}|>2^r$, there is no red-edge crossing the bipartition $(D_t^{(k)},V_{G_k}\setminus D_t^{(k)})$. Moreover, the rank-width of the bipartition $(D_t^{(k)},V_{G_k}\setminus D_t^{(k)})$ is at most $r$.''}

Note that $\mathcal{P}(n)$ is indeed true since $G=G_n$ has no red-edge, and by considering an optimal branch-decomposition of $G$.
Now assume $\mathcal{P}(k+1)$ with $k\in [n-1]$. We will prove $\mathcal{P}(k)$. First, note that if $k\le 2^r-1$, contracting any two arbitrary vertices and giving any branch-decomposition of $G_k$ proves $\mathcal{P}(k)$. We may thus assume that $k\ge 2^r$. The root $\rho$ of $T_{k+1}$ therefore satisfies $|D_{\rho}^{(k+1)}| = k+1 \ge 2^r +1$.
Observe that there exists a node $v$ of $T_{k+1}$ such that $2^r + 1 \le |D_v^{(k+1)}| \le 2^{r+1}$: a node $v$ such that $D_v^{(k+1)}$ has size at least $2^r + 1$ and which is furthest from the root meets the condition. %
By $\mathcal{P}(k+1)$, the rank-width of $(D_v^{(k+1)},V_{G_{k+1}}\setminus D_v^{(k+1)})$ is at most $r$. Using Lemma~\ref{lem:similar_rows} with respect to the edge $e$ linking $v$ to its father\footnote{If $v=\rho$ is the root, the result is trivial, as $\rho$ is then the only node with at least $2^r+1$ descendant. The root is then the only node $t$ which falls under the scope of $\mathcal{P}(k)$: the only bipartition to consider is then $(V_{G_k},\emptyset)$.} in $T_{k+1}$, there are two vertices $U$ and $U'$ of $D_v^{(k+1)}$ that satisfy $N_G(U)\cap (V_{G_{k+1}}\setminus D_v^{(k+1)}) = N_G(U')\cap (V_{G_{k+1}}\setminus D_v^{(k+1)})$. Here, the neighborhood are taken with respect to the black edges only, as by $\mathcal{P}(k+1)$ (recall that $|D_v^{(k+1)}|>2^r$ by definition of $v$), there is no red edge crossing the bipartition $(D_v^{(k+1)},V_{G_{k+1}}\setminus D_v^{(k+1)})$.

To prove $\mathcal{P}(k)$, we will prove that it is sufficient to contract the vertices $U$ and $U'$ of $G_{k+1}$ to obtain $G_k$, and to identify the leaves $U$ and $U'$ of $T_{k+1}$ to obtain $T_k$ ({\it i.e.} we remove $U'$ and shortcut every node with exactly one child that appears, and we then rename $U$ as $UU'$).
Note that all the red-edges created by the contraction of $U$ and $U'$ are adjacent to the new vertex $UU'$.

Firstly, by our choice of $U$ and $U'$, we do not create any red-edge crossing $(D_v^{(k)},V_{G_k}\setminus D_v^{(k)})$. Due to the property of $T_{k+1}$ ensured by $\mathcal{P}(k+1)$ (recall that $|D_v^{(k+1)}|>2^r$ by definition of $v$), there is no red-edge crossing $(D_v^{(k)},V_{G_k}\setminus D_v^{(k)})$ in $T_k$. The red-connected component $C$ of the new vertex $UU'$ is thus contained in $D_v^{(k)}$, and thus has size at most $|D_v^{(k)}| = |D_v^{(k+1)}|-1 \le 2^{r+1}-1$ (recall that $|D_v^{(k+1)}|\le 2^{r+1}$ by definition of $v$, and that $D_v^{(k)}$ is obtained from $D_v^{(k+1)}$ by removing $U$ and $U'$ and by adding $UU'$). Since $C$ is the only red-connected component of $G_k$ that was not a red-connected component of $G_{k+1}$, $G_k$ indeed meets the requirements of $\mathcal{P}(k)$.

Secondly, due to the choice of $v$, any node $t$ of $T_k$ with $|D_t^{(k)}|> 2^r$ containing the new node $UU'$ is an ancestor of $v$. Since $D_v^{(k)}\subseteq D_t^{(k)}$, by the above argument as for $v$, there is no red-edge crossing $(D_t^{(k)},V_{G_k}\setminus D_t^{(k)})$.

Thridly, removing a node can not make the rank-width of any bipartition of the form $(D_t^{(k)},V_{G_k}\setminus D_t^{(k)})$ with $t\in V_{G_k}$ and $|D_t^{(k)}|>2^r$ increase: it can only be lower than the rank-width of $(D_t^{(k+1)},V_{G_k}\setminus D_t^{(k+1)})$. Note also that $|D_t^{(k)}|\le |D_t^{(k+1)}|$, so if $t$ is in the scope of $\mathcal{P}(k)$, we know that it was on the scope of $\mathcal{P}(k+1)$. Therefore, by $\mathcal{P}(k+1)$, the rank-width of all such bipartitions are at most $r$.

The proof of $\mathcal{P}(k)$ is now complete: $\mathcal{P}(1)$ justifies that $\mathbf{ctww}(G)\le 2^{r+1}-1$.
\end{proof}

This bound naturally leads to the approximation given in Theorem \ref{thm:better_approx}.

\begin{theorem}\label{thm:better_approx}

For an input $n$-vertex graph $G$ and a positive integer $k$, in time
$f(k)n^3$ for some function $f$, we can find a contraction sequence of $G$ of component twin-width $\leq 2^{k+1}-1$, or confirm that $G$ has component twin-width larger than $k$.
\end{theorem}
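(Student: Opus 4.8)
The plan is to combine Theorem~\ref{thm:approx rw}, which gives an \textsf{FPT} algorithm for computing a rank-decomposition of width at most $k$ (or certifying $\mathbf{rw}(G) > k$), with the constructive upper bound $\mathbf{ctww}(G) \le 2^{\mathbf{rw}(G)+1} - 1$ established in Lemma~\ref{lem:ctww le 2^rw+1-1}. The key observation is that the proof of Lemma~\ref{lem:ctww le 2^rw+1-1} is entirely constructive: given an optimal branch-decomposition of $G$, it builds a contraction sequence of component twin-width $\le 2^{r+1} - 1$ (where $r = \mathbf{rw}(G)$) by repeatedly finding, via Lemma~\ref{lem:similar_rows}, a deepest node $v$ of the current branch-decomposition with $2^r + 1 \le |D_v| \le 2^{r+1}$, locating two vertices with identical black-neighborhoods outside $D_v$, contracting them, and updating the branch-decomposition by identifying the corresponding leaves. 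Crucially, this procedure only needs a branch-decomposition of width at most $r$, not necessarily an optimal one.

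First I would run the algorithm of Theorem~\ref{thm:approx rw} on $G$ with the given integer $k$. If it confirms that $\mathbf{rw}(G) > k$, then since $\mathbf{rw}(G) \le \mathbf{ctww}(G)$ by Lemma~\ref{lem:rw le ctww}, we conclude that $\mathbf{ctww}(G) > k$ and report this. Otherwise, the algorithm returns a rank-decomposition (branch-decomposition) $T$ of $G$ of width at most $k$. I would then feed $T$ into the constructive procedure underlying the proof of Lemma~\ref{lem:ctww le 2^rw+1-1}, using $r := k$ as the relevant bound (the argument there only uses that the rank-width of every bipartition arising from $T$ is at most $r$, which holds since $T$ has width $\le k$). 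This produces a full contraction sequence of $G$ whose component twin-width is at most $2^{k+1} - 1$, as desired.

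It remains to verify the running time. The call to Theorem~\ref{thm:approx rw} costs $f(k)n^3$ for some computable $f$. The extraction of the contraction sequence from $T$ is polynomial in $n$: there are $n-1$ contraction steps, and each step requires scanning the current branch-decomposition for a suitable node $v$ and comparing black-neighborhoods of the $O(2^{r}) = O(2^{k})$ candidate vertices inside $D_v$, which is $g(k) \cdot n^{O(1)}$ work overall for some computable $g$. Folding these into a single computable function absorbs the polynomial factors, and we obtain the claimed $f(k)n^3$ bound (after renaming $f$).

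The only mildly delicate point — and the one I would be most careful about — is making sure the correctness argument of Lemma~\ref{lem:ctww le 2^rw+1-1} goes through when $T$ is merely a \emph{width-}$\le k$ decomposition rather than an optimal one. Inspecting that proof, the invariant $\mathcal{P}(k)$ asks only that for nodes $t$ with $|D_t| > 2^r$ there is no red edge crossing the induced bipartition and that this bipartition has rank-width at most $r$; the latter is guaranteed because every bipartition coming from $T$ has rank at most $k = r$ by hypothesis on $T$, and the former is maintained by the contraction rule exactly as in the original argument. Hence no genuine obstacle arises — the theorem is essentially a repackaging of Lemma~\ref{lem:ctww le 2^rw+1-1} with an effective source of branch-decompositions supplied by Theorem~\ref{thm:approx rw}, improving the ratio over the clique-width-mediated Theorem~\ref{thm:Approx ctww}.
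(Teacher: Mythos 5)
Your proposal is correct and follows essentially the same route as the paper: run the exact rank-width algorithm from Theorem~\ref{thm:approx rw}, use Lemma~\ref{lem:rw le ctww} to translate a negative answer into $\mathbf{ctww}(G) > k$, and otherwise feed the returned width-$\le k$ branch-decomposition into the constructive proof of Lemma~\ref{lem:ctww le 2^rw+1-1}. Your extra observation — that the invariant $\mathcal{P}(\cdot)$ in that lemma only needs a branch-decomposition of width at most $k$, not an optimal one — is a valid and useful clarification that the paper leaves implicit.
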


\begin{proof}
This can be done by first applying the algorithm described in Theorem \ref{thm:approx rw}. If the algorithm outputs a branch-width $k$, we can use it to construct a contraction sequence of $G$ of component twin-width $2^{k+1}-1$ through the constructive proof of Lemma \ref{lem:ctww le 2^rw+1-1}. If the algorithm confirms that $\rw(G)\ge k$, we know by Lemma \ref{lem:rw le ctww} that $\ctww(G)\ge k$.
\end{proof}

\subsection{Comparing total twin-width and linear clique-width} \label{sec:ttww}

In this section, we provide a quadratic bound between total twin-width and linear clique-width. As discussed in Section~\ref{sec:intro} these parameters are known to be functionally equivalent, since they are both known to be functionally equivalent to {\em linear boolean-width} through the following relations \cite{oum2006approx,bonnet2022twin6}:

\begin{itemize}
    \item $ \lbw \le \lcw \le 2^{\lbw+1}  $,
    \item $ \lbw \le 2^{\ttww} $,
    \item $ \ttww \le (2^{\lbw}+1)(2^{\lbw-1}+1)$,
\end{itemize}
which entail the exponential and double-exponential bounds between linear clique-width and total twin-width:

\begin{itemize}

\item $ \ttww \le (2^{\lcw}+1)(2^{\lcw-1}+1)$,

\item $ \lcw \le 2^{2^{\ttww}+1}$.

\end{itemize}

These exponential and double exponential bounds are similar to the bounds known between component twin-width and clique-width presented in Section \ref{sec:intro}.
We improve these bounds as follows.

\begin{theorem}\label{thm:lcw vs ttww}

For every graph $G$,
$ \lcw(G)-1 \le 2\ttww(G) \le \lcw(G)(\lcw(G)+1) $.

\end{theorem}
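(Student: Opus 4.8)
The statement has two inequalities, and by analogy with the proof of Theorem~\ref{thm:cw vs ctww} (the linear bounds between $\cw$ and $\ctww$) I expect each to follow by a constructive translation: from a linear $k$-expression to a contraction sequence of small total twin-width for the lower bound on $\lcw$, and from a contraction sequence of small total twin-width to a linear $(2\ttww+1)$-expression for the upper bound. The plan is to mimic as closely as possible the structural induction in Lemma~\ref{lem:ctww leq 2cw-1}$(ii)$ and Lemma~\ref{lem:cw leq ctww+1}, but now tracking the \emph{number of red edges} rather than the size of red-connected components.

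\textbf{First inequality ($\lcw(G)-1 \le 2\ttww(G)$, i.e. $\lcw(G) \le 2\ttww(G)+1$).} I would take an optimal contraction sequence $(G_n,\dots,G_1)$ of $G$ with $t := \ttww(G)$, so every $G_k$ has at most $t$ red edges, hence at most $2t$ vertices incident to a red edge (the endpoints of those red edges, ignoring loops). Now I adapt the invariant $\mathcal{P}(k)$ from the proof of Lemma~\ref{lem:cw leq ctww+1}: instead of building one $(\kappa+1)$-expression per red-connected component, I build a single linear $(2t+1)$-expression for the whole graph, using the fact that at any step the ``active'' part of the graph (vertices still incident to red edges) has at most $2t$ vertices. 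The key point: give a private label in $[2t]$ to each vertex incident to a red edge, and use one extra label, say $2t+1$, as the ``inactive'' dump label shared by all vertices whose future contractions will never create a red edge incident to them. Processing the contraction sequence in reverse (from $G_n$ up to $G_1$), each contraction step $G_{k+1}\to G_k$ involves adding at most one new vertex worth of black edges and one relabelling; since a single vertex is introduced/merged at a time and it can be introduced via $\bullet_i$, this yields a \emph{linear} expression. One must argue that a vertex can be safely relabelled to $2t+1$ exactly once it (and the part it lies in) is no longer incident to any red edge in \emph{any} later trigraph, and that vertices sharing a label really do have identical neighborhoods to everything outside the active set — this is exactly Property~\ref{prop:meaning of contraction}, which guarantees black edge / non-edge behaviour across parts that are not red-adjacent.

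\textbf{Second inequality ($2\ttww(G) \le \lcw(G)(\lcw(G)+1)$).} Here I would run the structural induction of Lemma~\ref{lem:ctww leq 2cw-1}$(ii)$ on a linear $k$-expression $\varphi$ with $k := \lcw(G)$, producing a contraction sequence, but now I bound the number of red edges rather than the component size. As in that proof, in the linear case the disjoint-union step $\varphi'\oplus\bullet_i$ only ever produces trigraphs with at most $k+1$ vertices incident to red edges — actually, more carefully, the partition at any moment refines into at most $k$ parts (one per label) plus possibly one freshly-added singleton, so at most $k+1$ parts, hence the red graph lives on at most $k+1$ vertices and so has at most $\binom{k+1}{2} = \frac{k(k+1)}{2}$ red edges (non-loop). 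That gives $\ttww(G) \le \frac{k(k+1)}{2}$, i.e. $2\ttww(G) \le k(k+1) = \lcw(G)(\lcw(G)+1)$, as claimed. As in Lemma~\ref{lem:ctww leq 2cw-1}, I must check the edge-creation case $\eta_{i,j}(\varphi')$ creates no new red edge (because we only ever contract vertices of equal label, and $\eta_{i,j}$ only shrinks symmetric differences of neighborhoods of equal-label vertices), and the relabelling case $\rho_{i\to j}(\varphi')$ is handled by one extra contraction in a trigraph on at most $k$ vertices.

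\textbf{Main obstacle.} The routine part is the bookkeeping; the delicate part of the first inequality is ensuring the expression stays \emph{linear} while still having enough labels — i.e. that reading the contraction sequence backwards, each step can be simulated by introducing at most one new single vertex $\bullet_i$ (plus relabellings and edge creations applied to the already-built expression), never a $\oplus$ of two non-trivial expressions. This forces a careful choice of \emph{when} each vertex gets its private label versus the shared dump label $2t+1$, driven by the last time its part is red-incident. I expect this scheduling argument, rather than the red-edge counting itself, to be where the proof needs the most care.
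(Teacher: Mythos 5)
Your plan is essentially the paper's own route, with one presentational difference: the paper factors the argument through an explicitly named intermediate parameter, the \emph{total vertex twin-width} $\tvtww$ (the number of vertices incident to at least one red edge, loops included). It first observes the elementary bound $\tvtww \le 2\ttww \le \tvtww(\tvtww+1)$ (Theorem~\ref{thm:tvtww vs ttww}) and then proves the tight sandwich $\lcw - 1 \le \tvtww \le \lcw$ (Theorem~\ref{thm:lcw vs tvtww}), from which Theorem~\ref{thm:lcw vs ttww} follows by composition. You carry out the same two ingredients inline: your ``at most $2t$ vertices incident to a red edge'' for the first inequality is exactly $\tvtww \le 2\ttww$, and your structural induction for the second inequality is exactly the proof that $\tvtww \le \lcw$ followed by the edge-counting step. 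So the substance is right; naming $\tvtww$ just makes the two halves reusable and cleaner.

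There is, however, a small but real slip in your accounting for the second inequality. You argue there are at most $k+1$ parts alive at any moment, so the red graph lives on $\le k+1$ vertices, giving $\le \binom{k+1}{2}$ non-loop red edges. But under the paper's convention red loops do count toward $\ttww$ (every contracted part carries a red loop, and the bound $2\ttww \le \tvtww(\tvtww+1)$ only makes sense if loops are counted), so on $k+1$ red-incident vertices you would actually have up to $\binom{k+1}{2} + (k+1) = \tfrac{(k+1)(k+2)}{2}$ red edges, which overshoots. The correct observation, made in Lemma~\ref{lem:tvtww leq lcw}, is sharper: the freshly introduced singleton $\bullet_i(u)$ is \emph{isolated} in $[\varphi]$ and hence carries no red edge (loop or otherwise) at any point before it is finally contracted into $V_{G'}^i$, so only $k$ of the $k+1$ parts can be red-incident. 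That gives at most $\binom{k}{2} + k = \tfrac{k(k+1)}{2}$ red edges including loops, which is the number you want. Your bound happens to coincide numerically with the correct one, but only because the overcount from using $k+1$ cancels the undercount from dropping loops; you should make the isolation argument explicit.

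On the first inequality, the ``private label plus one shared dump label'' scheme you sketch is not how the paper proceeds, and it is the reason you anticipate a delicate scheduling argument. The paper's Lemma~\ref{lem:lcw leq tvtww+1} avoids the dump label entirely: the invariant only ever builds a linear expression of the subgraph $G[\bigcup C_k]$ induced on the currently red-incident parts, with each such part owning a private label. Fresh vertices enter the expression (via $\oplus\, \bullet_i$) precisely when they become red-incident, i.e.\ exactly when they join some $\bigcup C_k$; there is no need to hold inactive vertices in a dump label and justify that they later receive correct edges, because they are simply not in the expression yet. This sidesteps your scheduling worry completely, so if you follow that invariant you do not need the ``dump'' idea at all.
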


The proof technique mirrors those of Lemma~\ref{lem:cw leq ctww+1} and Lemma \ref{lem:ctww leq 2cw-1}. Hence, our proof constructions appear to be generally applicable for showing stronger relationships between graph parameters than mere functional equivalence. We begin by first comparing linear clique-width and total vertex twin-width, and then use Theorem~\ref{thm:tvtww vs ttww}.
As we will prove, the parameter $\tvtww$ is exactly the same as $\lcw$ (up to a difference of $1$).

\begin{theorem}\label{thm:lcw vs tvtww}

For every graph $G$,
$\lcw(G)-1 \le \tvtww(G) \le \lcw(G)$.

\end{theorem}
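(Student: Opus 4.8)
The plan is to prove the two inequalities $\lcw(G)-1 \le \tvtww(G)$ and $\tvtww(G) \le \lcw(G)$ separately, by constructions mirroring Lemma~\ref{lem:ctww leq 2cw-1} and Lemma~\ref{lem:cw leq ctww+1} respectively, but adapted to the \emph{linear} setting and the \emph{vertex-counting} parameter.

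\textbf{Upper bound $\tvtww(G) \le \lcw(G)$.} I would start from an optimal linear $k$-expression $\varphi$ of $G$, with $k = \lcw(G)$, and build a contraction sequence in which at every step the number of vertices incident to a red edge (including red loops) is at most $k$. The strategy follows the $\mathcal{H}_{\text{lin}}$ argument in the proof of Lemma~\ref{lem:ctww leq 2cw-1}: process $\varphi$ bottom-up, maintaining the invariant that after handling a subexpression $\psi$ with $[\psi]=(G_\psi,\ell)$, we have a partial contraction sequence of $G_\psi$ whose trigraphs all have $\tvtww \le k$, whose final trigraph has as vertices exactly the nonempty label classes $V_{G_\psi}^i$, and in which only same-label vertices are ever contracted. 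The key point, exactly as in Lemma~\ref{lem:ctww leq 2cw-1}, is that contracting two vertices of the same label never creates \emph{new} red edges (since same-label vertices have equal neighborhoods up to the symmetric difference, and $\eta$ only shrinks that difference), so red edges only ever appear ``within'' a current label class. Since at the top of a linear expression $\varphi_1 \oplus \bullet_i$ there are at most $k$ label classes, the trigraph before the final single contraction has $\le k+1$ vertices and the only red-incident vertices are contained among those classes; a careful count shows at most $k$ of them can be red-incident at any moment (one class is a fresh singleton $\bullet_i$, contributing nothing red until it merges, at which point another class disappears). Finishing: relabel everything to label $1$ at the end so the partial sequence becomes a total contraction sequence of $G$, yielding $\tvtww(G) \le k = \lcw(G)$.

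\textbf{Lower bound $\lcw(G)-1 \le \tvtww(G)$.} Here I would take an optimal contraction sequence $(G_n,\dots,G_1)$ realizing $\tau := \tvtww(G)$ and build a \emph{linear} $(\tau+1)$-expression of $G$, mimicking the induction $\mathcal{P}(k)$ from Lemma~\ref{lem:cw leq ctww+1}. The analogue of the red-connected-component invariant there should now be phrased in terms of the set $R_k$ of red-incident vertices of $G_k$ (of size $\le \tau$): the invariant is that there is a linear $(\tau+1)$-expression of $G$ restricted to $\bigcup R_k$ together with the already-fully-contracted black-complete part, with distinct labels on the (at most $\tau$) red-incident vertices and a single reserved label for ``everything already settled.'' Going from $G_{k+1}$ to $G_k = G_{k+1}/(U,V)$: the set of red-incident vertices changes in a controlled way, and one reattaches the newly-created black edges via $\eta$ and merges labels via $\rho$, exactly as in Lemma~\ref{lem:cw leq ctww+1}; linearity is preserved because at each contraction step we only introduce one genuinely new label class and can arrange the $\oplus$'s so that the second operand is always a single vertex $\bullet_i$ (the vertex of $G$ being ``revealed'' as $G_n$ is the partition into singletons). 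This gives $\lcw(G) \le \tau+1$, i.e. $\lcw(G)-1 \le \tvtww(G)$.

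\textbf{Deducing Theorem~\ref{thm:lcw vs ttww}.} Once Theorem~\ref{thm:lcw vs tvtww} is established, combine it with Theorem~\ref{thm:tvtww vs ttww}, which gives $\tvtww(G) \le 2\ttww(G) \le \tvtww(G)(\tvtww(G)+1)$. Plugging $\tvtww(G) \le \lcw(G)$ into the middle-left inequality and $\lcw(G)-1 \le \tvtww(G)$ into the appropriate sides yields $\lcw(G)-1 \le 2\ttww(G)$ and $2\ttww(G) \le \lcw(G)(\lcw(G)+1)$, which is exactly Theorem~\ref{thm:lcw vs ttww}. The main obstacle I anticipate is the bookkeeping in the lower-bound direction: keeping the expression \emph{linear} while faithfully reconstructing all black edges across the changing set $R_k$ of red-incident vertices is more delicate than the component-twin-width case, because $\tvtww$ counts vertices globally rather than per red-component, so one must be careful that ``settled'' black-complete classes can always be absorbed into the single reserved label without breaking linearity or exceeding $\tau+1$ labels. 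Verifying that the worst-case vertex count in the upper-bound construction is genuinely $\le k$ and not $k+1$ at the instant of each $\oplus$-merge is the other point that needs a careful (but routine) case check.
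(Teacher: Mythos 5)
Your overall strategy coincides with the paper's: both halves of the theorem are obtained by adapting the constructions of Lemma~\ref{lem:cw leq ctww+1} and Lemma~\ref{lem:ctww leq 2cw-1} to the linear/vertex-counting setting, and the derivation of Theorem~\ref{thm:lcw vs ttww} from Theorem~\ref{thm:tvtww vs ttww} is exactly what the paper does. Your sketch of the upper bound $\tvtww(G)\le\lcw(G)$ is essentially the paper's proof of Lemma~\ref{lem:tvtww leq lcw}: the same bottom-up induction $\mathcal{H}(\varphi)$, the same observation that contracting same-label vertices creates no new red edges across $\eta$-steps, and the same counting argument that the fresh singleton $\bullet_i$ contributes no red incidence until it is merged.

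The lower-bound direction, however, contains a genuine gap precisely at the point you flagged. You propose an invariant maintaining a linear $(\tau+1)$-expression over $\bigcup R_k$ ``together with the already-fully-contracted black-complete part,'' giving each red-incident vertex its own label and a \emph{single reserved label} for settled vertices. That reserved label cannot work in general: two settled (non-red-incident) vertices $t_1,t_2$ may have different black-adjacency to some red-incident $S_i$ (one fully adjacent, the other not at all), and once they share a label no $\eta$ or $\rho$ operation can ever distinguish them again. The paper's Lemma~\ref{lem:lcw leq tvtww+1} avoids this by making the invariant \emph{strictly narrower}: $\mathcal{P}(k)$ asserts a linear $(\tau+1)$-expression of $G[\bigcup C_k]$ only, where $C_k$ is the set of red-incident vertices (including red loops), with one label per member of $C_k$; settled vertices simply do not appear in the expression yet. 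Two facts make this work: $\bigcup C_{k+1}\subseteq\bigcup C_k$ (so the expressed subgraph only grows), and every vertex of $G_{k+1}$ that newly enters $C_k$ is necessarily a singleton (a non-singleton non-red-incident vertex would have a red loop, contradiction), so these vertices can be appended one at a time as right operands $\bullet_j(s_j)$ of $\oplus$, preserving linearity. Your phrase ``we only introduce one genuinely new label class'' also understates this step — several singletons can enter $C_k$ simultaneously — but since they are appended one $\oplus$ at a time this is harmless; the reserved-label idea is the part that would actually break.
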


Firstly, we show the leftmost inequality. An example of the application of the proof of Lemma \ref{lem:lcw leq tvtww+1} is provided in Appendix \ref{app:cw vs ctww}.

\begin{restatable}{lemma}{FirstBoundTotal}
\label{lem:lcw leq tvtww+1}

For every graph $G$, 
$\lcw(G)\leq \tvtww(G)+1.$

\end{restatable}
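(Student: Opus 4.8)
The plan is to mirror the argument of Lemma~\ref{lem:cw leq ctww+1}, but now tailored to \emph{linear} expressions and the vertex-counting parameter $\tvtww$. Let $(G_n,\dots,G_1)$ be an optimal contraction sequence of $G$ with respect to $\tvtww$, and set $t:=\tvtww(G)$, so that in every trigraph $G_k$ there are at most $t$ vertices incident to a red edge (loops included). I would build a linear $(t+1)$-expression of $G$ by induction down the sequence. The crucial observation is that at each step $G_k$, the vertices of $G_k$ split into those touched by a red edge --- at most $t$ of them --- and those that are red-isolated. A red-isolated vertex $S$ of $G_k$ is, by Property~\ref{prop:meaning of contraction}, either completely joined to or completely non-adjacent to every other vertex; equivalently $S$ behaves like a single ``clean'' vertex whose full adjacency to the rest of the graph is already determined. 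This suggests maintaining, as invariant $\mathcal{P}(k)$, a linear $(t+1)$-expression $\varphi_k$ of $G$ in which every red-touched vertex of $G_k$ receives its own private label (there are at most $t$ of those), while \emph{all} red-isolated vertices share the single remaining label --- call it the ``frozen'' label. The point of the frozen label is that, because a red-isolated vertex already has its correct neighborhood, we never again need to create an edge incident to it, so a linear expression (which must add new vertices one at a time via $\bullet_i$) can afford to dump all such vertices into one label bin.

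The base case $\mathcal{P}(1)$ is the single-vertex trigraph $G_1=(\{V_G\},\emptyset,\{(V_G,V_G)\})$; here we want a linear $(t+1)$-expression of $G$ itself, which we will in fact have assembled by the end of the induction, so it is cleaner to run the induction \emph{upward}: prove $\mathcal{P}(n)$, then $\mathcal{P}(k)\Rightarrow\mathcal{P}(k-1)$. Actually the most natural direction, paralleling Lemma~\ref{lem:cw leq ctww+1}, is to read the contraction sequence in the direction $k=n,n-1,\dots,1$ and think of it instead as a \emph{de}-contraction / construction order for the linear expression: I would process the contractions $G_{k+1}\to G_k=G_{k+1}/(U,V)$ in reverse, i.e. starting from $G_1$ and ``splitting'' vertices, which is exactly the order in which a linear expression introduces vertices. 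At the step that splits a part into $U$ and $V$ (reading forward, the step that merges them), the new vertex being ``revealed'' must be attached via $\bullet_i$ with a fresh label and then connected by $\eta$-operations to the at-most-$t$ currently red-touched labelled vertices it is black-adjacent to in the relevant $G_k$; any adjacency to red-isolated (frozen-label) vertices is, by Property~\ref{prop:meaning of contraction} applied at the moment that neighbor became red-isolated, either already present or forbidden, and here is where one must check carefully that no edge is ever missed or wrongly added. Finally, once a vertex stops being incident to any red edge, we relabel it ($\rho_{i\to \text{frozen}}$) into the frozen bin, freeing its label for reuse, which is what keeps the label count at $t+1$ and keeps the expression linear.

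The main obstacle I anticipate is the bookkeeping of \textbf{edge correctness across the frozen label}: when several red-isolated vertices are collapsed to one label, we must be sure that every one of them has, at the time of freezing, already had \emph{all} of its edges to the rest of the graph created, and that no later $\eta_{i,j}$ aimed at the non-frozen labels accidentally creates a spurious edge to a frozen vertex, nor that we need a later $\eta$ hitting the frozen label (which would add edges to all frozen vertices simultaneously and likely be wrong). Establishing this amounts to proving: a vertex of $G$ is red-isolated in $G_k$ exactly when its neighborhood has been fully resolved, and the set of ``frozen'' vertices only grows along the sequence --- i.e. once red-isolated, a contracted part stays red-isolated (which holds because contractions of \emph{other} parts can only merge black-to-black or non-to-non edges at a red-isolated vertex, by Remark~\ref{rem:ContractionRule}, preserving red-isolation). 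Once that monotonicity and ``fully resolved'' characterization are in hand, the construction goes through and yields a linear $(t+1)$-expression, giving $\lcw(G)\le \tvtww(G)+1$. One should also double-check the degenerate cases (graphs on one or two vertices, and the role of red loops, which the section already declares irrelevant to $\ctww$ but which do count toward $\tvtww$) so that the final constant $+1$ is exactly right.
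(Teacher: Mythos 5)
Your proposal has the right general flavor (induct along the contraction sequence, one label per red-touched part, aim for $\tvtww(G)+1$ labels), but the central bookkeeping idea --- the ``frozen label'' for red-isolated vertices, together with the claim that red-isolation is monotone --- is wrong, and it is precisely the step you flag as ``must check carefully.''

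Concretely, your monotonicity claim (``once red-isolated, a contracted part stays red-isolated, because contractions of other parts can only merge black-to-black or non-to-non edges at a red-isolated vertex'') does not follow from Remark~\ref{rem:ContractionRule}. If $S=\{s\}$ is red-isolated in $G_{k+1}$ and we contract two \emph{other} parts $U,V$ where $(S,U)$ is a black edge but $(S,V)$ is a non-edge, then Remark~\ref{rem:ContractionRule} makes $(S,UV)$ a \emph{red} edge in $G_k$, so $S$ becomes red-touched. The correct monotonicity is the opposite one: once a part is red-touched it stays red-touched (red loops persist, and red edges to a merged vertex stay red). Since $C_n=\emptyset$ and $C_1=\{V_G\}$, red-isolated parts certainly do leave the ``frozen'' class, contradicting the invariant you rely on. As a consequence, when $s$ later becomes red-touched you will need new $\eta$-operations incident to its label, but by then it shares the frozen label with many other vertices, and those $\eta$'s would create spurious edges. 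There is also a circularity issue in the stated invariant: ``$\varphi_k$ is a linear $(t+1)$-expression of $G$'' already asserts, at $k=n$, exactly what you are trying to prove.

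The paper's proof avoids all of this by maintaining a \emph{smaller} object: a linear $(\kappa+1)$-expression of $G[\bigcup C_k]$ only, where $C_k$ is the set of red-touched parts of $G_k$ and each part gets its own label. Red-isolated vertices are simply not in the expression yet. Because $\bigcup C_k$ is monotone increasing as $k$ decreases (by the correct monotonicity above), each contraction step may force some previously red-isolated singletons $\{s_j\}$ into $C_k$; the paper attaches these via $\oplus\,\bullet_j(s_j)$ (keeping linearity), then fires $\eta_{i,i'}$ for every black edge $(S_i,S_{i'})$ in $G_{k+1}$, which is both complete and sound by Property~\ref{prop:meaning of contraction} (any crossing edge must be covered by a black pair, since red-isolated singletons have no red incidences), and finally relabels $p+1\to p$ to merge $U$ and $V$. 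You should replace the ``frozen label'' mechanism with this restricted-domain invariant.
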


\begin{proof}
The proof is similar to the proof of Lemma~\ref{lem:cw leq ctww+1} but we include the details since the proof is constructive and has potential algorithmic applications. Let $(G_n,\dots,G_1)$ be a contraction sequence of $G$ witnessing $\kappa:=\tvtww(G)$. We explain how to construct a linear $(\kappa+1)$-expression of $G$.
We show the following invariant for all $k\in [n]$:

\noindent\emph{$\mathbf{\mathcal{P}}(k):$ ``Let $C_k=\{S_1,\dots,S_p\}$ be the set of vertices of $G_k$ of red-degree at least $1$, and
$\bigcup C_k=S_1\cup\dots\cup S_p$. There exists a linear $(\kappa+1)$-expression $\varphi_{C_k}$ of the $p$-labelled graph $G_{C_k}:=G[\bigcup C_k]$ with $V_{G_{C_k}}^i=S_i$ for all $i\in [p]$.''}

Note that for all $k\in [n]$, $|C_k|\le \kappa$ by definition of the total vertex twin-width.
We first prove $\mathcal{P}(n)$. In $G_n$, there are no red edges. Thus, $C_n=\emptyset$ and there is nothing to prove.

Now, take $k\in [n-1]$ and assume $\mathcal{P}(k+1)$. We will prove $\mathcal{P}(k)$. By definition of a contraction sequence, $G_k$ is of the form $G_k=G_{k+1}/(U,V)$ for two different vertices $U$ and $V$  of $G_{k+1}$.
First, we need to build a linear $(\kappa+1)$-expression over the right set of vertices.
Denote $C_k=\{S_1,\dots,S_{p-1},S'_p\}$ with $S'_p=UV$. Letting $S_p=U$ and $S_{p+1}=V$, we have that $S_i$ is a vertex of $G_{k+1}$ for all $i\in [p+1]$, and that 
$$\bigcup C_k=\bigcup\limits_{i=1}^{p+1}S_i.$$
Observe that $C_{k+1}$ is of the form $\{S_i\mid i\in I\}$ with $I\subseteq [p+1]$. Also, the other vertices $S_j$ with $j\in [p+1]\setminus I$ of $G_{k+1}$ are necessarily singletons. Otherwise, these vertices would have a red loop in $G_{k+1}$ (by Definition \ref{def:trigraphpartition}) and would thus belong to $C_{k+1}$. For all $j\in [p+1]\setminus I$, let $S_j = \{s_j\}$ with $s_j\in V_G$.

By $\mathcal{P}(k+1)$, up to interchanging labels, there exists a linear $(\kappa+1)$-expression $\varphi_{C_{k+1}}$ of the $|I|$-labelled graph $G_{C_{k+1}}$, such that for all $i\in I$, $V_{G_{C_{k+1}}}^i = S_i$. Therefore, $$\varphi':=\varphi_{C_{k+1}} \oplus \underset{j\in [p+1]\setminus I}{\oplus} \bullet_j(s_j)$$ is a linear expression over the same vertices of the graph $G_{C_k}$, that satisfies $V_{[\varphi']}^i = S_i$ for all $i\in [p+1]$.

Now, we still need to construct the black edges crossing the different $S_i$ for $i\in [p+1]$.
We thus apply $\eta_{i,i'}$\footnote{See Section \ref{sec:cliquewidth} for the notations relative to clique-width.} to $\varphi'$ for every black edge of the form $(S_i,S_{i'})$ in $G_{k+1}$ (with $(i,i')\in [p+1]$), to obtain an expression $\varphi''$. Since the vertices with labels $i$ and $i'$ are exactly the vertices of $S_i$ and $S_{i'}$, we create exactly the edges between vertices of $S_i$ and of $S_{i'}$ when applying $\eta_{i,i'}$ (the reasoning is similar as in the proof of Lemma \ref{lem:cw leq ctww+1}). By Property \ref{prop:meaning of contraction}, and because $\varphi_{C_{k+1}}$ is a linear expression of $G_{C_{k+1}}$, we have that $\varphi''$ is a linear expression of $G_{C_k}$.

Moreover, we need to make sure that the labels in $\varphi''$ match the requirements of $\mathcal{P}(k)$. For that, we set $\varphi_{G_{C_k}}:=\rho_{p+1\rightarrow p}(\varphi'')$. By doing so, $S_p$ (say, $U$) and $S_{p+1}$ (say, $V$) have the same label in $\varphi_{G_{C_k}}$.

Thus, it follows that $\varphi_{G_{C_k}}$ witnesses $\mathcal{P}(k)$ (since $S_p=U$ and $S_{p+1}=V$ are now contracted into $S'_p=UV$ in $G_k$). Indeed, we have used $p+1=|C_k|+1\leq \kappa+1$ different labels to construct the linear expression $\varphi_{G_{C_k}}$. The expression $\varphi_{G_{C_k}}$ is indeed linear because $\varphi_{G_{C_{k+1}}}$ is linear and because the right term of every $\oplus$ used to construct $\varphi_{C_k}$ from $\varphi_{C_{k+1}}$ is of the form $\bullet_j(s_j)$ with $s_j\in V_G$.

Since $\{V_G\}$ is a vertex of $G_1$ with a red loop (unless $G$ is a graph on $1$ vertex, in which case the theorem is trivial), it follows from
 $\mathcal{P}(1)$ that $G[V_G]=G$ has a linear $(\kappa+1)$-expression, and thus $\lcw(G)\leq \kappa+1$. As $\kappa=\tvtww(G)$, we have
$
\lcw(G)\leq \mathbf{\tvtww}(G)+1. 
$
\end{proof}

Analogously to Claim \ref{claim:cw le ctww+1 labels}, we make a structural remark on the labels of the expression built in Lemma \ref{lem:lcw leq tvtww+1}.

\begin{claim}\label{claim:lcw le tvtww+1 labels}

Let $\varphi_G$ be the linear $(\kappa+1)$-expression of the graph $G$ given by the proof of Lemma \ref{lem:cw leq ctww+1} (with $\kappa:=\tvtww(G)\ge 1$). For every subexpression of $\varphi_G$ of the form $\varphi_1\oplus\bullet_i$ with $i\in [\kappa+1]$, the label $i$ is not a label of a vertex of $[\varphi_1]$.
    
\end{claim}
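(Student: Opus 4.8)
The plan is to mirror the proof of Claim~\ref{claim:cw le ctww+1 labels} exactly, tracking which labels can occur in the left factor of a disjoint union during the inductive construction of Lemma~\ref{lem:lcw leq tvtww+1}. Recall that in that construction the linear $(\kappa+1)$-expression $\varphi_G$ is assembled so that every occurrence of $\oplus$ appears either inside the recursively built piece $\varphi_{C_{k+1}}$, or as one of the disjoint unions $\varphi' = \varphi_{C_{k+1}} \oplus \bigoplus_{j \in [p+1]\setminus I} \bullet_j(s_j)$ that glue in the singleton vertices, or finally is followed (after edge creations) by the relabelling $\rho_{p+1 \to p}$. Crucially, since $\varphi_G$ is \emph{linear}, every subexpression of the form $\varphi_1 \oplus \varphi_2$ has $\varphi_2 = \bullet_i$ for some $i \in [\kappa+1]$, which is exactly the shape in the claim's hypothesis.

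First I would observe that in the string $\varphi' = \varphi_{C_{k+1}} \oplus \bullet_{j_1}(s_{j_1}) \oplus \cdots \oplus \bullet_{j_m}(s_{j_m})$ (associating $\oplus$ to the left, as the linearity convention dictates), each subexpression of the form $\varphi_1 \oplus \bullet_{j_t}$ has $\varphi_1$ equal to $\varphi_{C_{k+1}} \oplus \bullet_{j_1} \oplus \cdots \oplus \bullet_{j_{t-1}}$, whose vertices carry exactly the labels in $I \cup \{j_1,\dots,j_{t-1}\}$. Since the indices $j_1,\dots,j_m$ are the \emph{distinct} elements of $[p+1]\setminus I$ and we are about to append $\bullet_{j_t}$ with $j_t \notin I \cup \{j_1,\dots,j_{t-1}\}$, the label $j_t$ does not appear in $[\varphi_1]$, as required. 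The only remaining occurrences of $\oplus$ in $\varphi_G$ are those lying strictly inside some $\varphi_{C_{k+1}}$; but $\varphi_{C_{k+1}}$ is itself one of the expressions $\varphi_{G_{C_{k'+1}}}$ produced at an earlier stage of the same recursion (up to a relabelling of labels, which preserves the property "label $i$ is/ is not used on a vertex of a given subexpression" after the relabelling is applied — and relabellings are unary, so they do not create new $\oplus$-subexpressions), so the same analysis applies to them by induction on the construction.

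Hence I would set up the argument as an induction on the recursive structure of the construction: the base case is $\varphi_{C_n}$, which is empty, so there is nothing to check; the inductive step passes from $\varphi_{C_{k+1}}$ to $\varphi_{C_k} = \rho_{p+1\to p}(\varphi'')$ where $\varphi''$ is obtained from $\varphi'$ by a sequence of (unary) edge creations $\eta_{i,i'}$, and $\varphi'$ is obtained from $\varphi_{C_{k+1}}$ by the left-associated chain of disjoint unions described above. Every new $\oplus$-subexpression of $\varphi_{C_k}$ is one of the $\varphi_1 \oplus \bullet_{j_t}$ just analyzed; every old one lies inside $\varphi_{C_{k+1}}$ and is handled by the induction hypothesis. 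Since edge creations and the final relabelling $\rho_{p+1\to p}$ are unary operations, they introduce no further disjoint unions, and although $\rho_{p+1\to p}$ changes labels, it does so uniformly, so it cannot turn a label absent from $[\varphi_1]$ into one present in it for any $\oplus$-subexpression $\varphi_1 \oplus \bullet_i$ occurring below it. This closes the induction and proves the claim.

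The main obstacle — really the only point requiring care — is bookkeeping the "up to interchanging labels" clause that appears when invoking $\mathcal{P}(k+1)$ in Lemma~\ref{lem:lcw leq tvtww+1}: one must check that the renaming of labels used to align $\varphi_{C_{k+1}}$ with the target labelling $V^i_{G_{C_k}} = S_i$ is a bijection on $[\kappa+1]$ (or at least injective on the labels actually in use), so that "the label $j_t$ is not used on $[\varphi_1]$" is genuinely preserved rather than merely being a statement about some other label. Everything else is a direct transcription of the proof of Claim~\ref{claim:cw le ctww+1 labels}, using linearity to guarantee that the right operand of each $\oplus$ is a single vertex $\bullet_i$.
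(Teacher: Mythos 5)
Your proof is correct, and it supplies the argument the paper omits: the authors only write ``Analogously to Claim~\ref{claim:cw le ctww+1 labels}\ldots'' and state the claim without a proof. Your reconstruction is, up to presentation, the argument that the paper implicitly has in mind.

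One point worth making explicit, since it is where a naive ``transcription'' would stall: the short structural proof of Claim~\ref{claim:cw le ctww+1 labels} only inspects the \emph{shape} of $\varphi_1$ (it is a single vertex, or it ends with a relabelling $\rho_{i\to j}$, hence some label is unused) to conclude a \emph{count} of at most $\kappa$ labels. The present claim asserts that the \emph{specific} label $i$ of the right operand $\bullet_i$ is unused in $[\varphi_1]$, and that does not follow from the mere fact that $\varphi_1$ ends with a relabelling, since the unused label produced that way need not coincide with $i$. Your inductive bookkeeping --- tracking that $[\varphi_{C_{k+1}}]$ uses exactly the labels in $I$, that the left-associated prefix $\varphi_{C_{k+1}}\oplus\bullet_{j_1}\oplus\cdots\oplus\bullet_{j_{t-1}}$ uses exactly $I\cup\{j_1,\dots,j_{t-1}\}$, and that $j_t\notin I\cup\{j_1,\dots,j_{t-1}\}$ --- is the right fix, and the observation that the ``up to interchanging labels'' step must be realized by a permutation of $[\kappa+1]$ (so that it preserves non-membership of a label) is exactly the one subtlety that needs naming. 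You also correctly note that the trailing $\eta$'s and the final $\rho_{p+1\to p}$ are applied outside the $\oplus$-subexpressions under consideration and hence do not affect $[\varphi_1]$, and that the base case $C_n=\emptyset$ is vacuous. I have no corrections.
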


We now prove the rightmost bound of Theorem \ref{thm:lcw vs tvtww}.

\begin{lemma}\label{lem:tvtww leq lcw}
For every graph $G$, we have,
$\tvtww(G) \leq \lcw(G)$

\end{lemma}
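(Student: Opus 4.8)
The statement $\tvtww(G) \le \lcw(G)$ mirrors part $(ii)$ of Lemma~\ref{lem:ctww leq 2cw-1} (the bound $\ctww(G)\le\lcw(G)$), and I expect the proof to be a variant of that argument where we track the number of red-incident vertices instead of the size of red-connected components. So the plan is: take an optimal linear $k$-expression $\varphi$ of $G$ with $k:=\lcw(G)$, and build a contraction sequence of $G$ in which, at every step, the trigraph has at most $k$ vertices incident to a red edge (including red loops).

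\textbf{Key steps.} First I would set up the structural induction on linear $k$-expressions, proving an invariant $\mathcal{H}_{\text{lin}}'(\varphi)$ analogous to $\mathcal{H}_{\text{lin}}$ from the proof of Lemma~\ref{lem:ctww leq 2cw-1}: for $[\varphi]=(G,\ell_G)$, there is a partial contraction sequence of $G$ whose trigraphs each have at most $k$ vertices incident to a red edge, such that the final trigraph's vertices are exactly the non-empty classes $V_G^i$ for $i\in[k]$ and each contracted pair shared a label in $(G,\ell_G)$. The base case $\varphi=\bullet_i$ is trivial. The relabelling case $\varphi=\rho_{i\to j}(\varphi')$ and the edge-creation case $\varphi=\eta_{i,j}(\varphi')$ are handled exactly as in Lemma~\ref{lem:ctww leq 2cw-1}: in the former we perform one extra contraction of $V_{G'}^i$ with $V_{G'}^j$ in a trigraph that already has $\le k$ vertices (hence trivially $\le k$ red-incident vertices), and in the latter no new red edges are created since we only contract vertices sharing a label and $\eta_{i,j}$ only shrinks the symmetric difference of neighborhoods among equally-labelled vertices. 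The interesting case is $\varphi=\varphi'\oplus\bullet_i$: because the second operand is a single vertex $\bullet_i(u)$, after running the contraction sequence for $\varphi'$ (which leaves at most $k$ vertices, the classes $V_{G'}^j$) we simply add the isolated new vertex $u$ and then, if $V_{G'}^i\neq\emptyset$, contract $u$ with $V_{G'}^i$. The crucial counting point: this single contraction is performed in a trigraph on at most $k+1$ vertices, so after it there are at most $k$ vertices total, hence at most $k$ red-incident vertices; and before it, all red edges lived inside the $\le k$ vertices coming from $\varphi'$ (the fresh vertex $u$ is isolated), so at most $k$ vertices were red-incident. Thus the invariant is maintained with the bound $k$, not $2k-1$.

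\textbf{Finishing and the main obstacle.} To conclude, I would take a linear $k$-expression $\varphi$ of $G$, apply $\rho_{i\to 1}$ for all $i$ so that $\ell_G\equiv 1$, and observe that $\mathcal{H}_{\text{lin}}'(\varphi)$ then yields a \emph{complete} contraction sequence of $G$ (the final trigraph has one vertex) in which every trigraph has at most $k$ red-incident vertices, giving $\tvtww(G)\le k=\lcw(G)$. I expect the main subtlety to be bookkeeping in the $\oplus$-with-a-leaf case: one must be careful that adding the new isolated vertex $u$ does not itself create red incidences (it does not, being isolated with no red loop since it is a singleton), and that the single contraction with $V_{G'}^i$ happens while the vertex count is $\le k+1$ — this is exactly what forces the bound down to $k$. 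A secondary point worth stating carefully is that red \emph{loops} must be counted as red incidences (the definition of $\tvtww$ includes them), but since any contracted non-singleton class carries a red loop and the trigraph after the final contraction of the whole process has a single looped vertex, the bound $\le k$ still holds throughout. Apart from this, the argument is a routine adaptation, so I would keep the exposition brief and cross-reference Lemma~\ref{lem:ctww leq 2cw-1} for the repeated parts.
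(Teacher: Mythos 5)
Your proposal follows the same route as the paper: take a linear $k$-expression with $k=\lcw(G)$, prove by structural induction on subexpressions that a partial contraction sequence exists whose trigraphs each have at most $k$ red-incident vertices, handle the $\rho$- and $\eta$-cases exactly as in Lemma~\ref{lem:ctww leq 2cw-1}, and exploit in the $\oplus\,\bullet_i(u)$ case that the single added vertex $u$ is isolated in $[\varphi'\oplus\bullet_i(u)]$ so the last contraction happens in a trigraph of at most $k+1$ vertices. Your remark that red loops must be counted is a fair point of care --- the paper's stated invariant says ``red degree $\ge 1$,'' which by the paper's own earlier definition ignores loops, whereas $\tvtww$ counts looped vertices too --- but the argument goes through under the broader reading (the paper itself slips into ``adjacent to at least one red edge'' in the $\oplus$-case), and your justification of why the bound $\le k$ survives the loop accounting is the same counting as the paper's, so the two proofs are essentially identical.
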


\begin{proof}

Again, we remark that the proof is similar to the proof of Lemma~\ref{lem:ctww leq 2cw-1}, but we include the details since the proof of the contraction sequence with the necessary properties is constructive and may be useful in its own right.

Let $k:=\lcw(G)$ and take a linear $k$-expression $\varphi_G$ of $G$. We will explain how to construct a contraction sequence of $G$ in which every trigraph has at most $k$ vertices of red degree at least $1$. We begin by defining the following property and then prove it by induction over $\varphi$:

\noindent$\mathcal{H}(\varphi):$ ``Let $(G,\ell_G):=[\varphi]$. There exists a (partial) contraction sequence $(G_n,\dots,G_{k'})$ of $G$ with $k'\leq k$ such that:

\begin{itemize}

    \item each of the trigraphs $G_n,\dots,G_{k'}$ have at most $k$ vertices with red degree $\ge 1$,

    \item the vertices of $G_{k'}$ are exactly the non-empty $V_G^i$ for $i\in [k]$, and

    \item every pair of vertices contracted have the same labels in $(G,\ell_G)$\footnote{Inductively, we say that the label of a vertex $S\in V_{G_l}$ ($k'\leq l\leq n$) is then the common label of the vertices that have been contracted together to produce $S$.}.''

\end{itemize}

If $\varphi=\bullet_i$ with $i\in [k]$, there is nothing to do since $G$ has only one vertex.
If $\varphi$ is of the form $\rho_{i\rightarrow j}(\varphi')$ (with $(i,j)\in [k]^2$ and $i\neq j$), consider for $G$ the partial contraction sequence of $(G',\ell_{G'}):=[\varphi']$ given by $\mathcal{H}(\varphi')$, and then contract $V_{G'}^i$ and $V_{G'}^j$ to obtain $V_G^j = V_{G'}^i\cup V_{G'}^j$. Since $\varphi'$ is also a $k$-expression of $G$, and since that last contraction happens in a trigraph with less than $k$ vertices, this partial contraction sequence of $G$ satisfies $\mathcal{H}(\varphi)$.

If $\varphi$ is of the form $\eta_{i,j}(\varphi')$ (with $(i,j)\in [k]^2$ and $i\neq j$), consider for $G$ the partial contraction sequence of $(G',\ell_{G'}):=[\varphi']$ given by $\mathcal{H}(\varphi')$. To prove that it is sufficient to prove $\mathcal{H}(\varphi)$, it is sufficient to justify that it does not create any red edge in the contraction of $G$ that was not present in the contraction of $G'$. The first red-edge $(x,y)$ that would appear in the contraction of $G=[\eta_{i,j}(\varphi')]$ that does not appear in the same contraction of $G'=[\varphi']$, results necessarily of the contraction of two vertices $u$ and $v$ with $x=uv$ and $y$ being in the symmetric difference of the neighborhoods of $u$ and $v$ in $G=[\eta_{i,j}(\varphi')]$ but not in $G'=[\varphi']$. Such a red-edge can not exist because we contract only vertices with the same label in $\varphi'$ (or, equivalently, in $\varphi$), and that $\eta_{i,j}$ can only decrease (with respect to $\subseteq$) the symmetric difference between the neighborhood of vertices with the same label in $\varphi$. By Remark \ref{remark:same label forever}, this implies that it is also true for vertices having the same label in any subexpression of $\varphi$.

If $\varphi$ is of the form $\varphi=\varphi'\oplus \bullet_i(u)$: denote $(G',\ell'):=[\varphi']$, thereby, $V_G=V_{G'}\cup \{u\}$. Consider for $G$ the partial contraction sequence obtained by performing the contractions in $(G',\ell_{G'}):=[\varphi']$ given by $\mathcal{H}(\varphi')$, and then contracting $V_{G'}^i$ (if not empty) and $u$ to obtain $V_G^i = V_{G'}^i\cup \{u\}$. Since $u$ is an isolated vertex in $G$, performing the contractions in $G'$ can not create any red edge in the contraction of $G$ that did not already exist in the contraction of $G'$. The last eventual contraction between $u$ and $V_{G'}^i$ occurs in a trigraph with at most $k+1$ vertices, resulting in a trigraph of at most $k$ vertices. In particular, there can not be more than $k$ vertices adjacent to at least one red edge. Such a contraction satisfies every requirement of $\mathcal{H}(\varphi)$. We have thus proven $\mathcal{H}(\varphi)$ for every linear $k$-expression.

Now, take a linear $k$-expression $\varphi$  of $G$. Up to applying $\rho_{i\rightarrow 1}$ for all $i\in [k]$ to $\varphi$, we can assume that $(G,\ell_G):=[\varphi]$ with $\ell_G$ being constant equal to $1$. The partial contraction sequence of $G$ given by $\mathcal{H}(\varphi)$ is a total contraction sequence of $G$ of total vertex twin-width $\le k$. Since $k=\lcw(G)$, we have thus proven that 
$$\mathbf{\tvtww}(G)\leq \lcw(G).\qedhere$$
\end{proof}

From Theorem~\ref{thm:tvtww vs ttww}  and Theorem~\ref{thm:lcw vs tvtww}  we then obtain the quadratic bound
\[\lcw -1 \le 2\ttww \le \lcw(\lcw+1)\]
of Theorem \ref{thm:lcw vs ttww}.
Moreover, as another implication of Theorem \ref{thm:lcw vs tvtww}, we can easily derive an approximation of the total vertex twin-width.
For linear clique-width we have the following approximation from Jeong et al.~\cite{jeong2017art}.

\begin{theorem}\cite{jeong2017art}\label{thm:compute lin-rw}
For an input $n$-vertex graph $G$ and a parameter $k$, we can find a linear $(2^k + 1)$-expression of $G$ confirming that $G$ has linear clique-width at most $2^k+1$ or certify that $G$ has linear clique-width larger than $k$ in time $O(f(k)n^3)$ for some computable function $f$.
\end{theorem}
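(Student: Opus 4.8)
Theorem~\ref{thm:compute lin-rw} is quoted from Jeong et al.~\cite{jeong2017art} rather than reproved here, so I will only sketch the strategy I would follow to establish it, which runs exactly parallel to the way Theorem~\ref{thm:approx cw} is obtained from Theorem~\ref{thm:approx rw}. The plan is to route the computation through \emph{linear rank-width} $\lrw$ --- the caterpillar-restricted version of rank-width introduced at the end of Section~\ref{sec:preliminaries}, defined via linear branch-decompositions minimising the maximal $\mathbb{F}_2$-cut-rank --- and then to translate between $\lrw$ and $\lcw$ using the linear analogue of the Oum--Seymour inequalities, namely $\lrw(G)\le\lcw(G)\le 2^{\lrw(G)}+1$.

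Concretely I would proceed in three steps. First, invoke the exact FPT (in $k$) algorithm --- the ``art of trellis decoding'' result --- that in time $f(k)n^3$ either returns a linear layout of $V_G$ every prefix cut of which has $\mathbb{F}_2$-rank at most $k$, or certifies $\lrw(G)>k$. Second, in the certification case, observe that any linear $k$-expression of $G$ induces a caterpillar branch-decomposition in which two vertices sharing a label have identical neighbourhoods across each cut (the linear version of the argument already used in Lemma~\ref{lem:rw le ctww} to get $\rw\le\ctww$), so each cut-matrix has at most $k$ distinct rows and hence $\lcw(G)\ge\lrw(G)>k$ --- this is the required negative certificate. Third, in the positive case, scan the vertices in the computed order and build a linear expression keeping one label per distinct neighbourhood-trace across the current cut (at most $2^k$ of them) plus one spare label used to introduce each new vertex before folding it into its trace-class; this yields a linear $(2^k+1)$-expression in time linear in the layout. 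Combining the second and third steps gives the claimed dichotomy within the stated running time.

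The step I expect to be the genuine obstacle is the first one: the exact computation of linear rank-width is far from routine, as it rests on a dynamic program over linear layouts together with the efficient recognition of a finite (but a priori uncontrolled) family of forbidden configurations under the appropriate minor-like relation, and reproducing it is essentially a separate paper --- which is precisely why it is cited. The remaining two steps are straightforward bookkeeping; the only point that would need care in a full write-up is tracking whether the conversion in the third step produces $2^k$ or $2^k+1$ labels (the ``$+1$'' coming from the fresh label needed to add each new vertex), but this affects only the additive constant and not the shape of the bound, and the inequality $\lrw\le\lcw$ used in the second step deserves the one-line justification sketched above, mirroring Lemma~\ref{lem:rw le ctww}.
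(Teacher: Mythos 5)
This statement is a citation of Jeong et al.~\cite{jeong2017art}: the paper does not supply a proof of it, so there is nothing internal to compare your attempt against. Your reading of the situation is correct, and your sketch of how the cited result is established — compute linear rank-width exactly in \textsf{FPT} time via the trellis-decoding algorithm, then translate a width-$k$ linear rank-decomposition into a linear $(2^k+1)$-expression by keeping one label per distinct neighbourhood-trace across the current prefix cut plus one spare label for introducing the next vertex, and conversely certify $\lcw>k$ from $\lrw>k$ via the ``identical rows across the cut'' argument — is the standard route and is exactly parallel to how the paper imports the non-linear approximation of clique-width (Theorem~\ref{thm:approx cw}) from the exact rank-width algorithm (Theorem~\ref{thm:approx rw}). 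Your identification of the genuinely hard ingredient as the exact FPT computation of linear rank-width is also accurate; that is the substance of the cited paper. The only cosmetic point worth flagging is that the ``$+1$'' accounting (giving $2^k+1$ rather than $2^k$ labels) is exactly the bookkeeping you already anticipate, and it matches the stated bound, so nothing further is needed there.
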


From the constructive proofs of the bounds given in Theorem \ref{thm:lcw vs tvtww} we then obtain an approximation algorithm for total vertex twin-width.

\begin{theorem}

For an input $n$-vertex graph $G$ and a parameter $p$, we can find a contraction sequence of $G$ with total vertex twin-width $2^{p+1}+1$, confirming that $\tvtww(G)\le 2^{p+1}+1$ or certify that $G$ has total vertex twin-width larger than $p$ in time $O(f(p)n^3)$ for some computable function $f$.

\end{theorem}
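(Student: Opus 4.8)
The plan is to mimic the proof of Theorem~\ref{thm:Approx ctww}, replacing the clique-width approximation of Theorem~\ref{thm:approx cw} by the linear clique-width approximation of Theorem~\ref{thm:compute lin-rw}, and the conversion from a clique-width expression to a contraction sequence (Lemma~\ref{lem:ctww leq 2cw-1}) by the conversion from a \emph{linear} clique-width expression to a contraction sequence given in the constructive proof of Lemma~\ref{lem:tvtww leq lcw}.

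Concretely, I would run the algorithm of Theorem~\ref{thm:compute lin-rw} on $G$ with parameter $k := p+1$, which takes $O(f(p+1)n^3)$ time. If it certifies that $\lcw(G) > p+1$, then $\lcw(G) \ge p+2$, so by the leftmost inequality of Theorem~\ref{thm:lcw vs tvtww} we get $\tvtww(G) \ge \lcw(G) - 1 \ge p+1 > p$, and we may safely report that $G$ has total vertex twin-width larger than $p$. Otherwise the algorithm returns a linear $(2^{p+1}+1)$-expression $\varphi$ of $G$; feeding $\varphi$ into the procedure extracted from the proof of Lemma~\ref{lem:tvtww leq lcw} (which turns any linear $k'$-expression into a contraction sequence of total vertex twin-width at most $k'$) with $k' = 2^{p+1}+1$ produces a contraction sequence of $G$ of total vertex twin-width at most $2^{p+1}+1$, confirming $\tvtww(G) \le 2^{p+1}+1$.

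For the running time, the conversion step is polynomial in the length of $\varphi$, and a linear $(2^{p+1}+1)$-expression of an $n$-vertex graph has length $O(h(p)\,n)$ for a computable $h$ (at most $O(n)$ vertex introductions, each followed by $O((2^{p+1}+1)^2)$ relabellings and edge creations). Hence this step runs in $O(h(p)\,\mathrm{poly}(n))$ time and is dominated by the first step, so the whole algorithm runs in $O(f'(p)n^3)$ time with $f'(p) := f(p+1) + h(p)$.

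The only point requiring a bit of care --- and it is minor --- is choosing the parameter $p+1$ rather than $p$ when invoking Theorem~\ref{thm:compute lin-rw}: this is exactly what makes the two inequalities $\lcw-1 \le \tvtww \le \lcw$ of Theorem~\ref{thm:lcw vs tvtww} line up, so that the rejection branch yields $\tvtww(G) > p$ and the acceptance branch yields the target approximation bound $2^{p+1}+1$. Everything else is a straightforward composition of results already proved in the paper, so I do not anticipate a real obstacle.
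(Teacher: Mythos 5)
Your proof is correct and mirrors the paper's intended argument (the analogous explicit proof is that of Theorem~\ref{thm:Approx ctww}, which is what the paper implicitly references here): run the linear clique-width approximation of Theorem~\ref{thm:compute lin-rw} with parameter $k:=p+1$, use $\lcw(G)-1\le\tvtww(G)$ to certify $\tvtww(G)>p$ in the rejection branch, and feed the linear $(2^{p+1}+1)$-expression through the constructive proof of Lemma~\ref{lem:tvtww leq lcw} in the acceptance branch. You correctly identified that the shift to $p+1$ is needed so both branches yield the stated bounds.
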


Note that, similarly to the study we carried out in Section \ref{sec:approximation}, a direct comparison between total vertex twin-width and linear rank-width would likely result in a slightly better approximation ratio. Indeed, linear rank-width can be calculated exactly in \textsf{FPT} time.

\begin{theorem}\cite{jeong2017art}
For an input $n$-vertex graph and a parameter $k$, we can decide in time $O(f(k)n^3)$ for some function $f$ whether its linear rank-width is at most $k$ and if so, find a linear rank-decomposition of width at most $k$.
\end{theorem}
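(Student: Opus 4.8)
This is the result of Jeong, Kim and Oum~\cite{jeong2017art}, so I only outline the plan one would follow. The first step is to reduce the computation of linear rank-width to the computation of the path-width of a binary matroid. Given an $n$-vertex graph $G$ with adjacency matrix $A_G$ over $\mathbb{F}_2$, associate to $G$ the binary matroid $M_G$ represented by the $n\times 2n$ matrix $[\,I_n \mid A_G\,]$, pairing up the two copies of each vertex. A direct computation on ranks of cuts (in the spirit of Lemma~\ref{lem:similar_rows}) shows that the width of a linear order $v_1 \le \dots \le v_n$ of $V_G$, measured by the cut-ranks $\mathbf{rank}(A_{\{v_1,\dots,v_i\},\{v_{i+1},\dots,v_n\}})$, coincides up to the usual additive normalization with the width of the induced ordering of the ground set of $M_G$; hence $\lrw(G)$ equals the path-width of $M_G$, and linear rank-decompositions of $G$ and path-decompositions of $M_G$ of a given width are inter-convertible by elementary $\mathbb{F}_2$-linear algebra.

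The second step is to compute the path-width of a represented binary matroid in \textsf{FPT} time --- the ``trellis-width'' problem. The plan is a Bodlaender--Kloks-style dynamic program: fix an order of the ground set, scan the elements one by one, and maintain for each prefix a set of \emph{characteristics} of its partial linear layouts. A characteristic stores, in the manner of typical sequences, the essential profile of cut-ranks realizable by a partial layout together with the subspace information needed to extend it past the current boundary. One first argues that two partial layouts with the same characteristic can be completed to full layouts of width $\le k$ in exactly the same situations, so it suffices to keep one representative per characteristic; one then bounds the number of characteristics of width $\le k$ by a function $f(k)$ independent of $n$. A single left-to-right sweep updates this bounded set at cost $f(k)\cdot\mathrm{poly}(n)$ per element, after which $\lrw(G)\le k$ holds iff a characteristic of a full width-$\le k$ layout survives, and a trace-back reconstructs the decomposition.

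The hard part is the heart of~\cite{jeong2017art}: proving that the set of relevant characteristics is finite, with a bound $f(k)$, and that the merge-and-dominate operations on typical sequences together with the cut-subspace bookkeeping over $\mathbb{F}_2$ can be carried out so that the whole procedure fits into the claimed $O(f(k)n^3)$ time rather than a larger polynomial. Designing the characteristic fine enough to be sound for extension yet coarse enough to have bounded size, and implementing the update step within an $n^3$ budget, is where all the technical work lies; I would invoke the results of~\cite{jeong2017art} for these details.
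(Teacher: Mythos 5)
The paper does not prove this statement; it is imported verbatim as a black-box citation to Jeong, Kim and Oum~\cite{jeong2017art}, and no argument appears in the text. There is therefore no ``paper's own proof'' to compare against, and your sketch --- which correctly identifies the two pillars of that reference, namely the reduction of linear rank-width to the path-width of an associated binary matroid and a Bodlaender--Kloks-style dynamic program over typical sequences and cut-rank characteristics with an $f(k)$ bound on the number of states --- is a faithful outline of what the cited paper actually does, with the hard technical content appropriately deferred to~\cite{jeong2017art}. One small caution: the precise binary matroid one associates with $G$ is a bit more delicate than ``$[\,I_n \mid A_G\,]$ with the two copies paired up'' (the standard route goes through Oum's correspondence between cut-rank of graphs and connectivity of binary matroids, and the additive normalization has to be handled carefully so that linear layout widths match exactly), but since you explicitly flag this as a plan rather than a verification and point to~\cite{jeong2017art} for the details, the sketch is consistent with how the present paper uses the result.
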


By adapting the proof of Theorem \ref{thm:rw_vs_ctww} we obtain the following bound (we omit the proof since it only involves adapting the proof of Theorem \ref{thm:rw_vs_ctww} to the new setting, and using Claim \ref{claim:lcw le tvtww+1 labels} instead of Claim \ref{claim:cw le ctww+1 labels}).

\begin{theorem}

For every graph $G$, $$\lrw(G) \le \tvtww(G) \le 2^{\lrw(G)+1}-1.$$

\end{theorem}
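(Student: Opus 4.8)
The plan is to mirror the proof of Theorem~\ref{thm:rw_vs_ctww}, replacing general branch-decompositions by \emph{linear} ones (equivalently, by linear orders on $V_G$) and replacing component twin-width by total vertex twin-width throughout. The key structural input on the ``$\lrw \le \tvtww$'' side will be Claim~\ref{claim:lcw le tvtww+1 labels} in place of Claim~\ref{claim:cw le ctww+1 labels}, and on the ``$\tvtww \le 2^{\lrw+1}-1$'' side the only external ingredient is again Lemma~\ref{lem:similar_rows}, which applies verbatim to linear branch-decompositions.

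\emph{Lower bound $\lrw(G) \le \tvtww(G)$.} First I would run the construction of Lemma~\ref{lem:lcw leq tvtww+1} to obtain the linear $(\kappa+1)$-expression $\varphi_G$ of $G$ with $\kappa := \tvtww(G)$. Because $\varphi_G$ is linear, when we strip away the relabellings $\rho_{i\to j}$ and the edge creations $\eta_{i,j}$ and keep only the disjoint unions $\oplus$, the resulting binary tree $T$ is a \emph{caterpillar}, hence a linear branch-decomposition of $G$. For an edge $e$ of $T$ we again get a bipartition $(X_e,Y_e)$ where $X_e = V_{[\varphi_1]}$ for some subexpression $\varphi_1\oplus\bullet_i$ of $\varphi_G$ (the ``leaf side'' $Y_e$ is handled symmetrically near the spine). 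As in Lemma~\ref{lem:rw le ctww}, two vertices of $X_e$ with the same label in $[\varphi_1]$ induce identical rows of the $\mathbb{F}_2$-matrix $A_e$, so $\mathbf{rank}(A_e)$ is at most the number of distinct labels appearing on vertices of $[\varphi_1]$. Claim~\ref{claim:lcw le tvtww+1 labels} tells us that at most $\kappa$ such labels occur, so $\mathbf{rank}(A_e)\le \kappa$ for every $e$, and the caterpillar $T$ witnesses $\lrw(G)\le\kappa=\tvtww(G)$.

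\emph{Upper bound $\tvtww(G) \le 2^{\lrw(G)+1}-1$.} Here I would replay the downward induction of Lemma~\ref{lem:ctww le 2^rw+1-1}, but maintaining at each stage a \emph{linear} branch-decomposition $T_k$ of $G_k$ (equivalently a linear order of the current vertex set). Setting $r := \lrw(G)$, the invariant $\mathcal{P}(k)$ becomes: there is a partial contraction sequence $(G_n,\dots,G_k)$ with every trigraph having at most $2^{r+1}-1$ vertices of red degree $\ge 1$, together with a linear branch-decomposition $T_k$ of $G_k$ such that every node $t$ of $T_k$ with $|D_t^{(k)}|>2^r$ has no red edge crossing $(D_t^{(k)},V_{G_k}\setminus D_t^{(k)})$ and this bipartition has rank-width at most $r$. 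In a caterpillar the nodes $t$ with many descendants form a nested chain along the spine, so the ``furthest from the root'' node $v$ with $2^r+1\le |D_v^{(k+1)}|\le 2^{r+1}$ still exists; Lemma~\ref{lem:similar_rows} on the spine edge above $v$ produces two vertices $U,U'\in D_v^{(k+1)}$ with equal black-neighborhoods outside $D_v^{(k+1)}$. Contract them, and merge the two leaves $U,U'$ of $T_{k+1}$ while shortcutting the resulting degree-two node. The crucial point, and the step I expect to be the main obstacle, is checking that identifying two leaves of a caterpillar and contracting a degree-two vertex yields \emph{again} a caterpillar (so linearity of the branch-decomposition is preserved) and simultaneously controls the \emph{number of red-adjacent vertices} rather than the red-component size: one must argue that all new red edges are incident to $UU'$ and lie inside $D_v^{(k)}$, so the set of red-adjacent vertices is contained in $D_v^{(k)}$, which has size $\le 2^{r+1}-1$; combined with the fact that the rest of the trigraph is unchanged, this yields the bound on $\tvtww$. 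The monotonicity arguments (rank-width of a bipartition can only drop when a leaf is removed, and $|D_t^{(k)}|\le|D_t^{(k+1)}|$) carry over unchanged, and $\mathcal{P}(1)$ gives $\tvtww(G)\le 2^{r+1}-1$. Finally, combining the two bounds yields $\lrw(G)\le\tvtww(G)\le 2^{\lrw(G)+1}-1$.
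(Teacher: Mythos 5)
Your plan matches the paper's stated intent (mirror Theorem \ref{thm:rw_vs_ctww} in the linear setting, invoking Claim \ref{claim:lcw le tvtww+1 labels} in place of Claim \ref{claim:cw le ctww+1 labels}), and the lower-bound half is fine. The upper-bound induction, however, has a gap at exactly the step you flag as ``the main obstacle'': the sentence ``all new red edges are incident to $UU'$ and lie inside $D_v^{(k)}$, so the set of red-adjacent vertices is contained in $D_v^{(k)}$'' only accounts for red edges created by the current contraction, and ``the rest of the trigraph is unchanged'' does not by itself forbid red-adjacent vertices of $G_{k+1}$ from sitting outside $D_v^{(k+1)}$.

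The missing idea is precisely where linearity matters and where the argument genuinely diverges from Lemma \ref{lem:ctww le 2^rw+1-1}. In a caterpillar, the nodes $t$ with $|D_t|>2^r$ form a nested chain on the spine, so the no-crossing part of $\mathcal{P}(k+1)$ already forces \emph{every} red edge of $G_{k+1}$ (not just the new ones) to have both endpoints inside $D_v^{(k+1)}$ for the last big spine node $v$: if one endpoint lay outside $D_v^{(k+1)}$, the edge would cross the bipartition of some big spine node, contradicting $\mathcal{P}(k+1)$. In a general binary tree this fails (red components may live in several disjoint fringe subtrees), which is exactly why the non-linear lemma can only bound component \emph{sizes} rather than the total number of red-adjacent vertices. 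You should also say a word about red loops, since $\tvtww$ counts red-looped vertices and the no-crossing condition says nothing about loops: the construction only ever contracts inside the current $D_v$, and $D_v$ only gains leaves as $v$ migrates toward the root, so contracted (red-looped) vertices also stay confined. Making these two points explicit is the real content of ``adapting the proof to the new setting.''
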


Finally, by combining these two results we immediately get the following approximation result for total vertex twin-width.

\begin{theorem}
For an input $n$-vertex graph $G$ and a parameter $k$, we can in $O(f(k)n^3)$ time (for some computable function $f$) witness that $\tvtww(G)\le 2^{k+1}-1$, or that $\tvtww(G)\ge k$.
\end{theorem}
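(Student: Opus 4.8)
The plan is to combine the two previously established \textsf{FPT}-time procedures in exactly the same way that Theorem~\ref{thm:better_approx} combines the rank-width computation with Lemma~\ref{lem:ctww le 2^rw+1-1}. First I would invoke the algorithm of Jeong et al.\ for exact linear rank-width: on input $G$ and parameter $k$ it runs in $O(f_1(k)n^3)$ time and either certifies $\lrw(G) > k$ or returns a linear rank-decomposition of width at most $k$. In the first case, the bound $\lrw(G) \le \tvtww(G)$ from the immediately preceding theorem (the total-vertex-twin-width analogue of Lemma~\ref{lem:rw le ctww}) lets us conclude $\tvtww(G) \ge k$ and we are done.

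In the second case, I have a linear rank-decomposition of width at most $k$ in hand, and I would feed it into the \emph{constructive} direction of that same preceding theorem, namely the total-vertex-twin-width analogue of Lemma~\ref{lem:ctww le 2^rw+1-1} (whose proof, as the excerpt states, follows the scheme of Lemma~\ref{lem:ctww le 2^rw+1-1} using Claim~\ref{claim:lcw le tvtww+1 labels}). This construction turns a linear rank-decomposition of width at most $k$ into a contraction sequence of $G$ whose total vertex twin-width is at most $2^{k+1}-1$, and — as in the proof of Theorem~\ref{thm:better_approx} — it can be carried out in time polynomial in the size of the decomposition, hence within the claimed $O(f(k)n^3)$ budget after absorbing constants into $f$. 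Outputting this contraction sequence witnesses $\tvtww(G)\le 2^{k+1}-1$, which is the remaining alternative in the statement.

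The only genuine subtlety, and the step I would be most careful about, is making sure the constructive bound from the linear-rank-width $\to$ $\tvtww$ direction actually produces a contraction sequence in time polynomial in $n$ (not merely that it exists): one has to check that the induction in that proof processes the decomposition in a bounded number of passes and that each contraction and each re-rooting/shortcut operation on the branch-decomposition is cheap, so that the whole thing runs within $f(k)\cdot n^{O(1)}$; this is exactly the same bookkeeping as in Theorem~\ref{thm:better_approx}. Everything else is a purely mechanical case split on the two possible outputs of the linear-rank-width algorithm, so there is no real obstacle beyond this routine running-time verification.

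\begin{proof}
Apply the algorithm of the last cited theorem of Jeong et al.\ to $G$ with parameter $k$, in time $O(f_1(k)n^3)$. If it certifies that $\lrw(G)>k$, then since $\lrw(G)\le \tvtww(G)$ (total-vertex-twin-width analogue of Lemma~\ref{lem:rw le ctww}) we conclude $\tvtww(G)\ge k$. Otherwise it returns a linear rank-decomposition of $G$ of width at most $k$; feeding this into the constructive proof of the total-vertex-twin-width analogue of Lemma~\ref{lem:ctww le 2^rw+1-1} (which, like the proof of Theorem~\ref{thm:better_approx}, runs in time polynomial in the size of the decomposition, hence within the stated budget after adjusting $f$) yields a contraction sequence of $G$ of total vertex twin-width at most $2^{k+1}-1$, witnessing $\tvtww(G)\le 2^{k+1}-1$.
\end{proof}
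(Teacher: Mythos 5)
Your proposal is correct and matches what the paper intends: the theorem is stated in the paper without an explicit proof, presented as an immediate combination of the preceding two results (exact linear rank-width computation via Jeong et al.\ and the bound $\lrw \le \tvtww \le 2^{\lrw+1}-1$), and your case split together with the observation that the constructive direction of the $\tvtww \le 2^{\lrw+1}-1$ bound runs in polynomial time is exactly the argument, mirroring the proof of Theorem~\ref{thm:better_approx}.
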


\section{Complexity Results}\label{sec:algo}

In the second part of the article we show two algorithmic applications of dynamic programming over component twin-width to $\#H$-{\sc Coloring}. 
Let us remark that the proof of Lemma~\ref{lem:cw leq ctww+1} deals with component twin-width with a dynamic programming principle in the following way.

\begin{itemize}
    \item We keep track of an invariant (here, a clique-width expression) associated to every red connected component.
    \item The ``size of the invariant'' (the number of labels) grows with the number of vertices in the component.
    \item The difficulty of keeping track of the invariant though a contraction is overcome by Property \ref{prop:meaning of contraction}, that gives precise information on the structure of the edges intersecting two different red components.
\end{itemize}

We see in this section how this idea can be used to design dynamic programming algorithm in order to solve counting versions of graph coloring problems. 
The first result assumes that an optimal contraction sequence of the input graph $G$ is given, and results in a  {\sf FPT}  algorithm parameterized by $\mathbf{ctww}$, running in time $O^*((2^{|V_H|}-1)^{\mathbf{\mathbf{ctww}}(G)})$. The second approach uses an optimal contraction sequence of the template $H$ (whose computation can be seen as a pre-computation, since it does not involve the input graph $G$): we obtain a fine-grained algorithm running in time $O^*((\mathbf{\mathbf{ctww}}(H)+2)^{|V_G|})$, which outperforms the best algorithms in the literature, with a running time of $O^*((2\mathbf{cw}(H)+1)^{|V_G|})$~\cite{wahlstrom2011new}
and $O^*((\mathbf{lcw}(H)+2)^{|V_G|})$ \cite{wahlstrom2011new} 
through the linear bound of Section \ref{sec:linear bounds}. 

Note that the technique employed in this paper could similarly be used to derive the same complexity results applied to the more general frameworks of counting the solutions of {\it binary constraint satisfaction problems}, {\em i.e.} problems of the forms $\#${\sc Binary-Csp}$(\Gamma)$ with $\Gamma$ a set of binary relations over a finite domain, even though we restrict to the simpler case of $\#H$-{\sc Coloring} here to avoid having to define contraction sequences of instances and template of binary constraint satisfaction problems.

\subsection{Parameterized complexity}\label{sec:parameterized}

We present an algorithm solving $\#H$-{\sc Coloring} in  {\sf FPT}  time parameterized by component twin-width, assuming that a contraction sequence is part of the input. It is inspired by the algorithm solving $k$-{\sc Coloring} \cite{bonnet2022twin6}, thus proving that $\#H$-{\sc Coloring} is  {\sf FPT}  parameterized by component twin-width and thus also by clique-width (by functional equivalence). Throughout, we need to assume that we are given a contraction sequence of the input graph.

Let us remark that Walhstr\"om~\cite{wahlstrom2011new} solves $H$-{\sc Coloring} in time $$2^{2\mathbf{cw}(G) \times |V_H|} (|V_G|+|V_H|)^{O(1)},$$ 
whenever a $\mathbf{cw}(G)$-expression of $G$ is given. We solve it in time $$(2^{|V_H|}-1)^{\mathbf{ctww(G)}+1} \times (|V_G|+|V_H|)^{O(1)}.$$ 

However, recall that (1) $ \mathbf{ctww}(G)+1 \le  2\mathbf{cw}(G) $ by Lemma \ref{lem:ctww leq 2cw-1}, implying that our algorithm is always at least as fast, and that (2) our algorithm is strictly faster for {\it e.g.}\ cographs with edges (component twin-width 1, versus clique-width 2), cycles of length at least $7$ (component twin-width 3, versus clique-width 4), and distance-hereditary graphs that are not cographs ({\it i.e.}, those with component twin-width $\le 3$, by Remark \ref{rem:ctwwdistancehereditary}, clique-width $3$).

\begin{theorem}\label{thm:FPT algo}
For any graph $H$, there exists an algorithm running in time $$(2^{|V_H|}-1)^{\mathbf{ctww(G)}+1} \times (|V_G|+|V_H|)^{O(1)}$$  that solves $\#H$-{\sc Coloring} on any input graph $G$ (assuming that an optimal contraction sequence $(G_n,\dots,G_1)$ of $G$ is given).
\end{theorem}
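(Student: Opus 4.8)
The plan is to run a dynamic program over the given contraction sequence from $G_n$ down to $G_1$, maintaining one counting table per red-connected component. For $k\in[n]$ and a red-connected component $C=\{S_1,\dots,S_p\}$ of $G_k$ (each $S_i$ being a set of vertices of $G$), write $\bigcup C=S_1\cup\dots\cup S_p$ and keep a table $T_C^{(k)}$ indexed by tuples $(A_1,\dots,A_p)$ of \emph{nonempty} subsets of $V_H$, where $T_C^{(k)}(A_1,\dots,A_p)$ is the number of homomorphisms $h\colon G[\bigcup C]\to H$ with $\{h(v)\mid v\in S_i\}=A_i$ for all $i\in[p]$. Since the given sequence is optimal, $p=|C|\le\mathbf{ctww}(G)=:\kappa$, so each such table has at most $(2^{|V_H|}-1)^{\kappa}$ entries. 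In the base case $k=n$ every component is a singleton $\{u\}$ and we set $T_{\{u\}}^{(n)}(\{c\})=1$ for each $c\in V_H$ and $0$ otherwise. At the end $G_1$ has the single red-connected component $\{V_G\}$ with $G[\bigcup\{V_G\}]=G$, and the number of homomorphisms $G\to H$ is $\sum_A T_{\{V_G\}}^{(1)}(A)$. (If $|V_G|\le 1$ the problem is trivial, so we may assume $n\ge 2$.)

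The heart of the argument is the transition from $G_{k+1}$ to $G_k=G_{k+1}/(U,V)$. Exactly as in the proof of Lemma~\ref{lem:cw leq ctww+1}, every red-connected component of $G_k$ except the component $C$ containing $UV$ is already a red-connected component of $G_{k+1}$, so its table is copied unchanged; moreover $(C\setminus\{UV\})\cup\{U,V\}$ is, as a set of $|C|+1$ parts, the disjoint union of red-connected components $C_1,\dots,C_q$ of $G_{k+1}$, so $\sum_j|C_j|=|C|+1\le\kappa+1$, and the sets $\bigcup C_1,\dots,\bigcup C_q$ partition $\bigcup C$. I would compute $T_C^{(k)}$ in three stages. \textbf{(1) Product:} build the table over the $|C|+1$ parts of $C_1\cup\dots\cup C_q$ (index these parts $S_1,\dots,S_{p-1},S_p:=U,S_{p+1}:=V$) whose entry at $(A_1,\dots,A_{p+1})$ is $\prod_{j=1}^q T_{C_j}^{(k+1)}\big((A_i)_{i:\,S_i\in C_j}\big)$; this counts homomorphisms of the disjoint union $\bigsqcup_j G[\bigcup C_j]$ with the prescribed colour sets. \textbf{(2) Black-edge filter:} discard every entry for which $A_i\times A_{i'}\not\subseteq E_H$ for some black edge $(S_i,S_{i'})$ of $G_{k+1}$ joining two distinct $C_j$'s. \textbf{(3) Merge:} set $T_C^{(k)}$ (indexed by the parts $S_1,\dots,S_{p-1},UV$ of $C$) at $(A_1,\dots,A_{p-1},B)$ to $\sum T$ over the stage-$2$ entries $(A_1,\dots,A_{p-1},A',A'')$ with $A',A''$ nonempty and $A'\cup A''=B$. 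Each stage ranges over at most $(2^{|V_H|}-1)^{|C|+1}\le(2^{|V_H|}-1)^{\kappa+1}$ tuples and performs polynomially many arithmetic operations on integers bounded by $|V_H|^{|V_G|}$ (hence of polynomial bit-length), so one transition costs $(2^{|V_H|}-1)^{\kappa+1}\cdot(|V_G|+|V_H|)^{O(1)}$ time; summing over the $n-1$ transitions and the trivial initialisation and readout gives the claimed bound.

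Correctness is a downward induction on $k$ whose inductive step is the chain of stages (1)--(3). Stage (3) is immediate: the colour set used on $U\cup V$ is the union of those used on $U$ and on $V$, and $G[\bigcup C]$ is literally unchanged by the contraction. The crux, and the step I expect to be the main obstacle, is stage (2): one must show that the edges of $G[\bigcup C]$ not already present in $\bigsqcup_j G[\bigcup C_j]$ are exactly, for every black edge $(S_i,S_{i'})$ of $G_{k+1}$ crossing two distinct $C_j$'s, the complete bipartite graph between $S_i$ and $S_{i'}$ (and nothing else). This is precisely what Property~\ref{prop:meaning of contraction} delivers, once one observes (as in Lemma~\ref{lem:cw leq ctww+1}) that such a pair $(S_i,S_{i'})$ is \emph{not} a red edge of $G_{k+1}$ -- otherwise $S_i$ and $S_{i'}$ would lie in the same $C_j$ -- so by Definition~\ref{def:trigraphpartition} it is black (complete bipartite in $G$) or a non-edge (empty in $G$); from there a map realises all the extra edges iff $A_i\times A_{i'}\subseteq E_H$ for every such pair, which is exactly the stage-$2$ condition. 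A secondary, purely bookkeeping matter -- mirroring the relabelling $\rho_{p+1\to p}$ of Lemma~\ref{lem:cw leq ctww+1} -- is to fix once and for all an indexing of each table by the parts of its component (or by an ordering thereof) so that the ``copy'', ``product'' and ``merge'' operations act on coordinates unambiguously; this introduces no real difficulty.
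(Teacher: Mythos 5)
Your proof is correct and takes essentially the same approach as the paper: your tables $T_C^{(k)}$ are the paper's $\#col(C,\gamma)$ with $\gamma(S_i)=A_i$, your stages (1)--(3) reorganize the paper's single loop over families $(\gamma_1,\dots,\gamma_q)$ followed by the feasibility check of Claim~\ref{lemma:feasibility}, and the complexity bound via $\sum_j|C_j|=|C|+1\le\mathbf{ctww}(G)+1$ is identical. The correctness argument for the black-edge filter is precisely the content of Claim~\ref{lemma:feasibility} in the paper, relying on Property~\ref{prop:meaning of contraction} and Definition~\ref{def:trigraphpartition} exactly as you describe.
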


\begin{proof}

For $k\in [n]$, $C=\{S_1,\dots,S_p\}\subseteq V_{G_k}$ a red-connected component of vertices of $G_k$, and for $\gamma: C\mapsto (2^{V_H}\setminus\{\emptyset\})$, an {\em $H$-coloring of $G[\cup C]$ with profile $\gamma$} is an $H$-coloring $f$ of $G[\cup C]$ such that for all $i\in [p]$, $f(S_i)=\gamma(S_i)$. {\it I.e.} the vertices of $H$ used to color $S_i$ are exactly the colors of the set $\gamma(S_i)$.

Then, define the set $COL(C,\gamma)$ as the set of $H$-colorings of $G[\cup C]$ with profile $\gamma$. We see that for every red-connected component $C$ of $G_k$, the sets $COL(C,\gamma)$ for $\gamma: C\mapsto (2^{V_H}\setminus\{\emptyset\})$ form a partition of the set of the $H$-colorings of $G[\cup C]$.

The principle of the algorithm is to inductively maintain (from $k=n$ to $1$) the knowledge of every $|COL(C,\gamma)|$ (stored in a tabular $\#col(C,\gamma)$) for each red-connected component $C$ of $G_k$ and $\gamma: C\mapsto (2^{V_H}\setminus\{\emptyset\})$. In this way, since $\{V_G\}$ is a red-connected component of $G_1$, we can obtain the number of $H$-colorings of $G[V_G]=G$ by computing $$\sum\limits_{T\in (2^{V_H}\setminus\{\emptyset\})}\#col(\{V_G\},V_G\mapsto T).$$

Firstly, note that the red-connected components of $G_n$ are the $\{u\}$ for $u\in V_G$ (since $G_n$ has no red edge). For every $\gamma:u\mapsto \gamma(u)\in (2^{|V_H|}\setminus\{\emptyset\})$ we let $\#col(\{u\},\gamma)\leftarrow 0$ if $|\gamma(u)|\neq 1$ and $\#col(\{u\},\gamma)\leftarrow 1$ if $|\gamma(u)|= 1$. Hence, we correctly store the value of $|COL(\{u\},\gamma)|$ in the tabular $\#col(\{u\},\gamma)$.

We explain how to maintain this invariant after the contraction from $G_{k+1}$ to $G_k$ (with $k\in [n-1]$). By definition of a contraction sequence, $G_k$ is of the form $G_k=: G_{k+1}/(U,V)$ with $U$ and $V$ two different vertices of $G_{k+1}$.

Note that every red-connected component of $G_k$ is also a red-connected component of $G_{k+1}$, except the red-connected component $C$ containing $UV$. We only have to compute $|COL(C,\gamma)|$ for any $\gamma: C\mapsto 2^{V_H}\setminus\{\emptyset\}$, and to store it in the tabular $\#col(C,\gamma)$. Initialize the value of $\#col(C,\gamma)$ with $0$. %

Let $C=:\{S_1\dots,S_{p-1},S'_p\}$, with $S'_p:=UV$, and $p:=|C|\leq \mathbf{\mathbf{ctww}}(G)$. Since every pair of red-connected vertices in $G_{k+1}$ (that contains neither $U$ nor $V$) are red-connected in $G_k$, $C$ must be of the form $$C:=(C_1\cup\dots\cup C_q\cup \{S'_p\})\setminus \{S_p,S_{p+1}\},$$ 
with $S_p:=U$ and $S_{p+1}:=V$ and $C_1\cup\dots\cup C_q=\{S_1,\dots,S_{p-1},S_p,S_{p+1}\}$,\footnote{Note that $UV=S'_p=S_p\cup S_{p+1}$.} and where $C_1,\dots,C_q$ (with $q>0$) are red-connected components of $G_{k+1}$ whose union contains both $S_p=U$ and $S_{p+1}=V$. Notice that each $S_i$ (for $i\in [p+1]$) belongs to a unique $C_{j(i)}$ with $j(i)\in [q]$.
These notions are illustrated in Figure \ref{fig:merging}.

The algorithm iterates over every family $(\gamma_j:C_j\mapsto (2^{V_H}\setminus\{\emptyset\}))_{1\leq j\leq q}$. Let $\gamma=\gamma_1\cup\dots\cup\gamma_q$ be the profile of $C$ that maps every $S_i$ (with $i\in [p-1]$) to $\gamma_{j(i)}(S_i)$, and that maps $S'_p=UV=S_p\cup S_{p+1}$ to $\gamma_{j(p)}(S_p)\cup\gamma_{j(p+1)}(S_{p+1})$. %
The algorithm checks if there exists a $(i,i')\in [p]^2$ with $i\neq i'$, a black edge between $S_i$ and $S_{i'}$ in $G_{k+1}$, and $(\gamma(S_i)\times\gamma(S_{i'}))\subseteq E_H$, in time $O(p^2)$. If so, we increment $\#col(C,\gamma)$ by $\prod\limits_{j=1}^q \#col(C_j,\gamma_j)$. Otherwise, we move to the next family $(\gamma_j)_{1\leq j\leq q}$.

\textbf{Soundness:} For $(\gamma_j:C_j\mapsto (2^{V_H}\setminus\{\emptyset\}))_{1\leq j\leq q}$, we denote by $COL(C,\gamma_1,\dots,\gamma_q)$ the sets of $H$-colorings $f$ of $C$ such that for all $j\in [q]$ the profile of $f|_{C_j}$ is $\gamma_j$. The algorithm is correct because, for each profile $\gamma:C\mapsto 2^{V_H}\setminus\{\emptyset\}$ of $C$, $COL(C,\gamma)$ is the disjointed union, for $(\gamma_1,\dots,\gamma_q)$ with $\gamma=\gamma_1\cup\dots\cup\gamma_q$, of the $COL(C,\gamma_1,\dots,\gamma_q)$.

We only need to compute $|COL(C,\gamma_1,\dots,\gamma_q)|$, which can be derived by Claim \ref{lemma:feasibility}. We then store the sum over $(\gamma_1,\dots,\gamma_q)$ such that $\gamma=\gamma_1\cup\dots\cup\gamma_q$ in $\#col(C,\gamma)$.
In Claim \ref{lemma:feasibility}, we say that $(\gamma_1,\dots,\gamma_q)$ is {\em feasible} if for all $(i,i')\in [p]^2$ such that $(S_i,S_{i'})$ is a black edge of $G_{k+1}, (\gamma_{j(i)}(S_i)\times\gamma_{j(i')}(S_{i'}))\subseteq E_H$.

\begin{claim}\label{lemma:feasibility}

We have for all $(\gamma_1,\dots,\gamma_q)$ that:

\begin{enumerate}
    
    \item If $(\gamma_1,\dots,\gamma_q)$ is not feasible, then $COL(C,\gamma_1,\dots,\gamma_q)=\emptyset$.
    
    \item If $(\gamma_1,\dots,\gamma_q)$ is feasible, then a function $f:\cup C\mapsto V_H$ belongs to $COL(C,\gamma_1,\dots,\gamma_q)$ if and only if, for all $j\in [q]$, $f$ restricted to $C_j$ (denoted by $f_j$) belongs to $COL(C_j,\gamma_j)$.

\end{enumerate}

\end{claim}

\begin{proof}

We treat the two cases separately.
Firstly, we assume that $(\gamma_1,\dots,\gamma_q)$ is not feasible: there exists $(i,i')\in [p]^2$ such that $(S_i,S_{i'})$ is a black edge of $G_{k+1}$ and $(\gamma_{j(i)}(S_i)\times\gamma_{j(i')}(S_{i'}))\setminus E_H\neq \emptyset$ and, for the sake of contradiction, suppose that there is $f\in COL(C,\gamma_1,\dots,\gamma_q)$. Take $(v_i,v_{i'})\in (\gamma_{j(i)}(S_i)\times\gamma_{j(i')}(S_{i'}))\setminus E_H$. By definition of a profile, there exists $(u_i,u_{i'})\in S_i\times S_{i'}$ with $f(u_i)=v_i$ and $f(u_{i'})=v_{i'}$. Then, since there exists a black edge between $S_i$ and $S_{i'}$ in $G_{k+1}$, this means by Property \ref{prop:meaning of contraction} that $(u_i,u_{i'})\in E_G$. But $(f(u_i),f(u_{i'}))=(v_i,v_{i'})\notin E_H$, so $f$ is not an $H$-coloring, which contradicts the definition of $f$.

Secondly, we assume that $(\gamma_1,\dots,\gamma_q)$ is feasible. To prove necessity, notice that the restriction of a partial $H$-coloring is also a partial $H$-coloring, and by definition of $COL(C,\gamma_1,\dots,\gamma_q)$, if $f\in COL(C,\gamma_1,\dots,\gamma_q)$, then $f_j\in COL(C_j,\gamma_j)$.

To prove sufficiency, assume that $f:\cup C\mapsto V_H$ is such that for all $j\in [q], f_j\in COL(C_j,\gamma_j)$. Then, provided that $f$ is an $H$-coloring of $G[\cup C]$, $f\in COL(C,\gamma_1,\dots,\gamma_q)$. Hence, we only have to prove that $f$ is an $H$-coloring. So let $(u,u')\in E_G$. We prove that $(f(u),f(u'))\in E_H$. Observe that there exist $S_i$ and $S_{i'}$ (with $(i,i')\in [p]^2$) such that $u\in S_i$ and $v\in S_{i'}$. If $S_i$ and $S_{i'}$ are in the same red-connected component $C_j$ (with $j\in [q]$) of $G_{k+1}$, then $(f(u),f(u'))=(f_j(u),f_j(u'))\in E_H$ because $f_j$ is an $H$-coloring. Otherwise,  $(S_i,S_{i'})$ is not a red edge of $G_{k+1}$, and since $(u,u')\in E_G$ and $(u,u')\in S_i\times S_{i'}$, it follows that so $(S_i,S_{i'})$ is a black edge of $G_{k+1}$ by Property \ref{prop:meaning of contraction}. By assumption of feasibility, $(\gamma_{j(i)}(S_i)\times\gamma_{j(i')}(S_{i'}))\subseteq E_H$ and, by definition of a profile, $(f(u),f(u')) = (f_{j(i)}(u),f_{j(i')}(u')) \in \gamma_{j(i)}(S_i)\times\gamma_{j(i')}(S_{i'}) \subseteq E_H$. The latter shows that $f$ is indeed an $H$-coloring.\end{proof}

From Claim \ref{lemma:feasibility} it follows that choosing an $f$ in $COL(C,\gamma_1,\dots,\gamma_q)$ is either impossible (if $(\gamma_1,\dots,\gamma_q)$ is not feasible), or equivalent to choosing $f_j\in COL(C_j,\gamma_j)$ for all $j\in [q]$ (in case of feasibility), which is why we add either $0$ or $\prod\limits_{j=1}^q\#col(C_j,\gamma_j)$ when treating the part of $\#col(C,\gamma)$, relatively to the feasibility of the family $(\gamma_1,\dots,\gamma_q)$.

\textbf{Complexity:} To treat the red-connected component $C$, the only non-polynomial part is to iterate over every
family $(\gamma_1,\dots,\gamma_q)$, which represents $$\prod\limits_{j=1}^q(2^{|V_H|}-1)^{|C_j|} = (2^{|V_H|}-1)^{|C|+1}\leq (2^{|V_H|}-1)^{\mathbf{\mathbf{ctww}}(G)+1}$$ families to treat (recall that for all $j\in [q]$, $\gamma_j$ is a non-empty subset of $C_j$).
\end{proof}

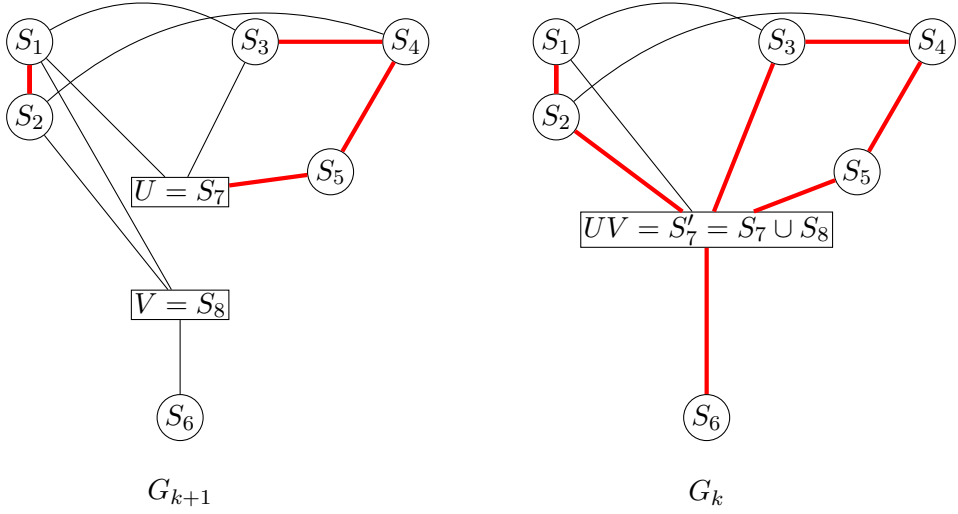
\begin{figure}[!ht]
\begin{center}
\begin{tikzpicture}

\tikzstyle{vertex}=[circle, draw, inner sep=1pt, minimum width=6pt]
\tikzstyle{sq}=[rectangle, draw, inner sep=1pt, minimum width=6pt]

\begin{scope}

\node[vertex] (w1) at (-2,3)  {$S_1$};
\node[vertex] (w2) at (-2,2)  {$S_2$};
\node[vertex] (w3) at (1,3) {$S_3$}
edge[bend right] (w1)
;
\node[vertex] (w4) at (3,3)  {$S_4$}
edge[bend right] (w2)
;
\node[vertex] (w5) at (2,1.27)  {$S_5$};

\node[vertex] (w6) at (0,-2)  {$S_6$}
;

\node[sq] (u) at (0,1) {$U=S_7$}
edge[color=red, ultra thick] (w5)
edge (w3)
edge (w1)
;

\node[sq] (v) at (0,-0.5) {$V=S_8$}
edge (w1)
edge (w2)
edge (w6)
;

\node () at (0,-3) {$G_{k+1}$}
;

\draw[color=red, ultra thick] (w1)--(w2)--(w1) (w3)--(w4)--(w5);

\end{scope}

\begin{scope}[xshift=+7cm]

\node[vertex] (w1) at (-2,3)  {$S_1$};
\node[vertex] (w2) at (-2,2)  {$S_2$};
\node[vertex] (w3) at (1,3) {$S_3$}
edge[bend right] (w1)
;
\node[vertex] (w4) at (3,3)  {$S_4$}
edge[bend right] (w2)
;
\node[vertex] (w5) at (2,1.27)  {$S_5$};

\node[vertex] (w6) at (0,-2)  {$S_6$}
;

\node[sq] (z) at (0,0.5) {$UV=S'_7=S_7\cup S_8$}
edge[color=red, ultra thick] (w5)
edge[color=red, ultra thick] (w3)
edge[color=red, ultra thick] (w6)
edge[color=red, ultra thick] (w2)
edge (w1)
;

\node () at (0,-3) {$G_k$}
;

\draw[color=red, ultra thick] (w1)--(w2)--(w1) (w3)--(w4)--(w5);

\end{scope}

\end{tikzpicture}
\end{center}
\caption{An example where contracting $U=S_7$ and $V=S_8$ causes $j=4$ different red-connected components to merge into a red-connected component of size $p=7$. With the notations of this proof, we could have  $C_1=\{S_1,S_2\}, C_2=\{S_3,S_4,S_5,S_7\}, C_3=\{S_6\}$ and $C_4=\{S_8\}$. For instance, $j(1)=j(2)=1,j(3)=j(4)=j(5)=j(7)=2,j(6)=3$ and $j(8)=4$.}
\label{fig:merging}
\end{figure}

If one only wishes to solve $H$-{\sc Coloring} rather than the counting problem, the algorithm by Ganian et al.~\cite{ganian2022fine} which runs in $O^*(s(H)^{\mathbf{cw}(G)})$ for a graph parameter $s$, is strictly more efficient. The parameter $s(H)$ counts the number of different possible non-empty common neighborhoods for a subset of vertices of $H$. Indeed, for any graph $H$, its structural parameter $s(H)$ is bounded by $2^{|V_H|}-2$ \cite{ganian2022fine} (the equality happens if and only if $H$ is a clique), and as we have proven in Lemma \ref{lem:cw leq ctww+1}, for any graph $G$, $\mathbf{cw}(G)\leq \mathbf{\mathbf{ctww}}(G)+1$. However, it appears to be difficult to extend this algorithm to the counting problem since the sets stored as invariants in the algorithm do not necessarily represent disjoint subsets of partial coloring. This is acceptable if one only wants to determine the existence of a total coloring (as long as every coloring is represented at least once), but it causes issues when counting the number of colorings.

\subsection{Fine-grained complexity}\label{sec:fine-grained}

We now consider the dual problem of solving $\#H$-{\sc Coloring} when $H$ has bounded component twin-width. We therefore use an optimal contraction sequence of the template $H$ instead of the input $G$, and obtain a fine-grained algorithm for $\#H$-{\sc Coloring} which runs in $O^*((\mathbf{\mathbf{ctww}}(H)+2)^n)$ time. 
\medskip 

\begin{theorem}\label{thm:Fine-grained algo}
$\#H$-{\sc Coloring}  is solvable in time
$O^*((\mathbf{\mathbf{ctww}}(H)+2)^{|V_G|}).$
\end{theorem}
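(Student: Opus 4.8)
The plan is to adapt the dynamic-programming scheme of Theorem~\ref{thm:FPT algo}, but this time running the contraction sequence on the template $H$ rather than on the input graph $G$. Fix an optimal contraction sequence $(H_m,\dots,H_1)$ of $H$, where $m=|V_H|$, with $\mathbf{ctww}(H_k)\le \kappa:=\mathbf{ctww}(H)$ for all $k$. For each $k$, each red-connected component $C=\{T_1,\dots,T_p\}$ of $H_k$ (so $p\le\kappa$), and each ``typing'' $\tau\colon V_G\to C\cup\{\star\}$ that assigns to every vertex of $G$ either one of the parts $T_i$ or a ``not-yet-constrained'' symbol $\star$, I would store the number of homomorphisms from $G[\tau^{-1}(C)]$ to the subgraph $H[\bigcup C]$ that send each $v$ with $\tau(v)=T_i$ into the block $T_i$ and respect all edges of $G$ inside $\tau^{-1}(C)$; the use of Property~\ref{prop:meaning of contraction} on the $H$-side tells us exactly which pairs of blocks $(T_i,T_{i'})$ are fully joined, non-adjacent, or ``red'' (unknown), so that a black edge of $H_k$ forces the corresponding edge constraint in $G$ to be automatically satisfiable, a non-edge forbids it, and only red edges require finer bookkeeping from the children components. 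The number of typings to consider for a component of size $p$ is $(p+1)^{|V_G|}$, which at the final step $p=\kappa$ gives the claimed $(\kappa+2)^{|V_G|}$; polynomial factors are absorbed by $O^*$.

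The key steps, in order, are: (i) set up the invariant precisely — for the initial trigraph $H_m$ (no red edges) each red component is a singleton $\{w\}$, $w\in V_H$, and the stored quantity for a typing $\tau$ is $1$ if the set $\tau^{-1}(w)$ induces no edge in $G$ (since a single vertex of $H$ can only host an independent set, as $H$ is loopless) and $0$ otherwise; actually one keeps, for each $w$, the count of maps of $G[\tau^{-1}(\{w\})]$ into the one-vertex graph, which is $1$ precisely when that set is independent; (ii) describe the contraction step from $H_{k+1}$ to $H_k=H_{k+1}/(U,V)$: every red component of $H_k$ except the one, call it $C$, containing $UV$ is unchanged, and $C$ is obtained by merging red components $C_1,\dots,C_q$ of $H_{k+1}$ (those meeting $U\cup V$) with the two parts $U,V$ collapsed; to compute the table for $(C,\tau)$ we split $\tau$ over $C_1,\dots,C_q$ — a vertex $v$ with $\tau(v)=UV$ becomes ``assigned to $U$ or to $V$'', and we sum over both choices — then check feasibility of each black edge of $H_{k+1}$ between two blocks living in distinct children (analogous to Claim~\ref{lemma:feasibility}), and when feasible multiply the child tables, exactly as in Theorem~\ref{thm:FPT algo} but with the roles of $G$ and $H$ dual; (iii) at the end, $H_1$ has the single vertex $V_H$ forming a red component, and the answer is $\sum_{\tau\colon V_G\to\{V_H\}}\#\mathrm{hom}$ read off from the final table — i.e.\ the single entry for the constant typing, which by construction equals the number of homomorphisms $G\to H$; (iv) bound the running time by the number of typings, $(\kappa+1)^{|V_G|}$ at a component of size $\kappa$ but $(\kappa+2)^{|V_G|}$ once we also allow the extra symbol (the ``$+1$'' from $\star$ plus the ``$+1$'' inherent when a contracted block must be resolved into two children), times a polynomial in $|V_G|+|V_H|$ for the feasibility checks.

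The soundness argument is the dual of Claim~\ref{lemma:feasibility}: a homomorphism $f\colon G\to H$ is equivalent, at each stage $H_k$, to the data of which block of $H_k$ each vertex of $G$ lands in, together with, for every red component $C$, a homomorphism of $G[\tau^{-1}(C)]$ into $H[\bigcup C]$ refining the block assignment; the crucial point is that an edge $(u,u')$ of $G$ whose endpoints are typed into two blocks $T_i,T_{i'}$ of $H_k$ lying in \emph{different} red components is automatically consistent, because by Definition~\ref{def:trigraphpartition} and Property~\ref{prop:meaning of contraction} the pair $(T_i,T_{i'})$ is then a black edge of $H_k$, so $T_i\times T_{i'}\subseteq E_H$ and any refinement works — this is what lets us take an independent product over $C_1,\dots,C_q$. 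One must also verify, as in the parameterized case, that the $\mathrm{COL}(C,\gamma)$-type sets over the different typings partition the set of all homomorphisms of the relevant induced subgraph, so that summing is legitimate and no homomorphism is counted twice; this follows because a homomorphism determines its typing uniquely.

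\textbf{Main obstacle.} The delicate point is handling the collapse of the two blocks $U$ and $V$ when $UV\in C$: a vertex $v\in V_G$ typed into $UV$ in $H_k$ corresponds, in $H_{k+1}$, to a vertex typed either into $U$ or into $V$, so the recursion for $(C,\tau)$ must range over all ways of splitting the $UV$-typed vertices between $U$ and $V$, and one must argue that this sum of products over compatible child-typings equals $|\mathrm{COL}(C,\tau)|$ without overcounting. Getting the feasibility bookkeeping exactly right across the interface between components $C_1,\dots,C_q$ (only black edges of $H_{k+1}$ between blocks in distinct children impose a nontrivial constraint, red edges are impossible across distinct components, and non-edges forbid the corresponding $G$-edge) is the technical heart, and it is the place where Property~\ref{prop:meaning of contraction} on the template side is indispensable; once that is in place, the complexity count is the same elementary product telescoping as in Theorem~\ref{thm:FPT algo}.
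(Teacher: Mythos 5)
Your proposal matches the paper's proof essentially step for step: dynamic programming over an optimal contraction sequence of $H$, with states indexed by red-connected components and assignments of each vertex of $G$ to a block or to a ``none'' symbol (your $\star$; the paper encodes this as a tuple of pairwise disjoint subsets of $V_G$), the same singleton base case, the same merge step that splits $UV$-typed vertices between $U$ and $V$, the same feasibility check against black edges of $H_{k+1}$, and the same $|C|+2\le\mathbf{ctww}(H)+2$ count. One small slip in your soundness paragraph: a $G$-edge whose endpoints are typed into blocks in distinct children $C_j,C_{j'}$ does not automatically land on a black edge of $H_{k+1}$ — it only cannot be a red edge, so it is either a black edge (constraint satisfiable, as you say) or a non-edge (typing infeasible, discarded); you state this correctly later in the obstacle paragraph, so the argument as a whole is sound, but the phrase ``is then a black edge'' should be weakened.
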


\begin{proof}
Consider an optimal contraction sequence $(H_m,\dots,H_1)$ of $H$, with $m:=|V_H|$. Note that as $H$ is part of the template and not part of the input, an optimal contraction sequence can be precomputed (for instance by exhaustive search).
We give an algorithm similar to that described in the proof of Theorem 11, except that we define profiles for red-connected component of each $H_k$, with $k\in [m]$.

Let $C=\{T_1,\dots,T_p\}$ be a red connected component  of $H_k$ and let $\gamma=(S_1,\dots,S_p)$ be a $p$-tuple of pairwise disjoint subsets of $V_G$. An $H$-coloring $f$ of $G[S_1\cup\dots,\cup S_p]$ is said to have  {\em $C$-profile $\gamma$} if for each $i\in [p], f(S_i)\subseteq T_i$. Denote by $COL(\gamma,C)$ the set of partial $H$-colorings of $G$ ({\it i.e.,} an $H$-{\sc Coloring} of an induced subgraph) with $C$-profile $\gamma$. It is easy to compute the $|COL(\gamma,C)|$ for a red-connected component $C$ of $H_m$ (since $H_m$ has no edge) and $\gamma=(S)$ with $S\subseteq V_G$, since $C$ is of the form $C=\{v\}$ with $v\in V_H$. We have $|COL((S),\{v\})|=1$ if $S^2\cap E_G =\emptyset$, and $|COL((S),\{v\})|=0$, otherwise.

As in the proof of Theorem 11, for $k\in [m-1]$ the only red-connected component of $H_k=H_{k+1}/(U,V)$ that is not a red-connected component of $H_{k+1}$, is the red-connected component $C=\{T_1,\dots,T_{p-1},T'_p\}$ that contains $T'_p=UV$ (the vertex obtained by contraction of $T_p=U$ and $T_{p+1}=V$ in $H_{k+1}$). Hence, $C$ is of the form 
$$C=(C_1\cup\dots\cup C_q\cup \{T'_p\})\setminus \{T_p,T_{p+1}\},$$ with $C_1\cup\dots\cup C_q=\{T_1,\dots,T_{p-1},T_p,T_{p+1}\}$, where $C_1,\dots,C_q$ are the red-connected components of $H_{k+1}$ whose union contains $T_p=U$ and $T_{p+1}=V$. Again, each $T_i$ belongs to a unique $C_{j(i)}$ with $j(i)\in [q]$.

Then, as in the proof of Theorem 11, for all families of disjoint subsets of $V_G$ and $\gamma=(S_1,\dots,S_{p-1},S'_p)$, we  can compute the value of $|COL(\gamma,C)|$. Indeed, as in the proof of Theorem 11, it is the sum for every family $(\gamma_j)_{1\leq j\leq q}$ that defines the profile $\gamma$ ({\it i.e.}, each $\gamma_j$ is a family of pairwise disjoint subsets of $V_G$, and $S'_p$ is of the form $S'_p=S_p\cup S_{p+1}$ with $S_p\cap S_{p+1}=\emptyset$ and $\forall \ell\in [q]$, $\gamma_{\ell} = (S_i)_{i\in j^{-1}(\{\ell\})}$\footnote{In other words, $\gamma_{\ell}$ is the tuple of the $S_i$ where $i\in [p+1]$ is such that $T_i$ belongs to the component $C_{\ell}$.})
 of the value 
 \begin{enumerate}
 \item $\prod\limits_{j=1}^q |COL(\gamma_j,C_j)|$ if $(\gamma_1,\dots,\gamma_q)$ is feasible,
 \item $0$ otherwise.
 \end{enumerate}
Here we say that $(\gamma_1,\dots,\gamma_q)$ is {\em feasible} if for every $(i,i')\in [p]^2$ with $j(i)\neq j(i')$ and for every edge $(u_i,u_{i'})$ of $G$ with $u_i\in S_i$ and $u_{i'}\in S_{i'}$, there is a black edge between $T_i$ and $T_{i'}$ in $H_{k+1}$,

The complexity of computing $|COL(\gamma,C)|$ for every $\gamma$ is  $(\mathbf{\mathbf{ctww}}(H)+2)^{|V_G|}$, since exploring every family $(\gamma_j)_{1\leq j\leq q}$ containing only pairwise disjoint subsets of $|V_G|$ requires to explore $(\sum\limits_{j=1}^q |C_j|+1)^{|V_G|}$ families (any vertex of $G$ can be mapped to a unique element in $\{T_1,T_2,\ldots,T_{p+1}\}$ or none of them), which makes $(|C|+2)^n\leq (\mathbf{\mathbf{ctww}}(H)+2)^n$ possibilities.
Since $V_H$ is a red connected component of $H_1$,
we obtain the number of such $H$-colorings of $G$ in time $O^*((\mathbf{\mathbf{ctww}}(H)+2)^{|V_G|})$, and it is equal to $|COL(\{V_G\},\{V_H\})|$.
\end{proof}

We again remark that, by Lemma \ref{lem:ctww leq 2cw-1}, $$\mathbf{\mathbf{ctww}}(H)+2 \leq \mathbf{lcw}(H)+2$$ and 
$\mathbf{\mathbf{ctww}}(H)+2\leq 2\mathbf{cw}(H)+1$ for any graph $H$. Therefore, the algorithm in the proof of Theorem \ref{thm:Fine-grained algo} is always at least as fast  as the clique-width approach by Wahlstr\"om \cite{wahlstrom2011new}, and as remarked in Section~\ref{sec:intro}, it is strictly faster for e.g.\ cographs with edges and cycles of length $\ge 7$, and distance hereditary graphs that are not cographs by Remark \ref{rem:ctwwdistancehereditary}.

\section{Conclusion and Future Research}
\label{section:conclusions}

In this article we explored component twin-width in the context of $\#H$-{\sc Coloring} problems.
We improved the bounds of the functional equivalence between component twin-width and clique-width from the (doubly) exponential bound 
$$\mathbf{cw} \leq 2^{2^{\mathbf{ctww}}} \quad \text{and} \quad \mathbf{ctww}\leq 2^{\mathbf{cw}+1}$$ to the linear bounds $$\mathbf{cw}\leq \mathbf{ctww}+1 \leq 2\mathbf{cw}.$$ 
In particular, this entails a single-exponential {\sf FPT} algorithm for $H$-{\sc Coloring} parameterized by component twin-width. 
From these linear bounds derives an approximation algorithm with exponential ratio, that can even be improved by a direct comparison with rank-width. We then demonstrated that our constructive proof technique could be extended to related parameters, and proved a quadratic bound between total twin-width and linear clique-width.

Finally, we turned to algorithmic applications, and constructed two algorithms for solving $\#H$-{\sc Coloring}. The first uses a given optimal contraction sequence of the input graph $G$ to solve $\#H$-{\sc Coloring} in {\sf FPT} time parameterized by component twin-width. The second uses a contraction sequence of the template graph $H$ and beats the clique-width approach for solving $\#H$-{\sc Coloring} (with respect to $|V_G|$).
Let us now discuss some topics for future research.

{\bf Tightness of the bounds.}
Even though the bound $\mathbf{cw}\le\mathbf{ctww}+1$ given by Lemma \ref{lem:cw leq ctww+1} is tight for any cograph with at least 1 edge, we do not currently know if this bound can be improved for graphs with greater clique-width or component twin-width. Moreover, it would be interesting to determine whether the bound $\mathbf{ctww}\le 2\mathbf{cw}-1$ given by Lemma \ref{lem:ctww leq 2cw-1} is tight. In particular, we believe that identifying classes of graphs, such as distance-hereditary graphs, for which a similar reasoning to the one presented in Remark \ref{rem:ctwwdistancehereditary} applies, constitutes a promising direction for future research. The same remark on tightness holds for the bounds between component twin-width and rank-width given by Theorem \ref{thm:rw_vs_ctww}. It would be interesting to study the tightness of the bound $\mathbf{tww} \le 2\mathbf{cw}-2$ (where $\mathbf{tww}$ designs the twin-width), which is a direct consequence of Lemma \ref{lem:ctww leq 2cw-1}.
Also, since Lemmas \ref{lem:lcw leq tvtww+1} and \ref{lem:tvtww leq lcw}
provide very tight bounds, it is natural to ask for the characterization of the classes of graphs where each bound is attained.

{\bf Lower bounds on complexity.}
The algorithms relying on clique-width to solve $H$-{\sc Coloring} by \cite{ganian2022fine} in  $O^*(s(H)^{\mathbf{cw}(G)})$ time are known to be optimal under the SETH. We have a similar optimality result for tree-width ($\mathbf{tw}$), with an algorithm solving $H$-{\sc Coloring} in time $|V_H|^{\mathbf{tw}(G)}$, and the existence of a $(|V_H|-\varepsilon)^{\mathbf{tw}(G)}$ algorithm with $\varepsilon >0$ being ruled out under SETH. A natural research direction is then to optimize the running time of the algorithm of Theorem \ref{thm:FPT algo}, possibly by making use of $s(H)$, and prove a similar lower bound.

{\bf Extensions.}
Instead of solving $\#H$-{\sc Coloring} the results of Section~\ref{sec:algo} can be extended to arbitrary binary constraints ({\em binary constraint satisfaction problems}, {\sc Bcsp}s). The notion of component twin-width indeed generalizes naturally to both instances and templates of a {\sc Bcsp}. A natural continuation is then to investigate {\em infinite-domain {\sc Bcsp}s} which are frequently used to model problems of interest in qualitative temporal and spatial reasoning. Here, there are only a handful of results using the much weaker tree-width parameter~\cite{DBLP:conf/aaai/DabrowskiJOO21}, so an {\sf FPT} algorithm using component twin-width or clique-width would be a great generalization.
Additionally, one may note that the algorithms detailed in Section~\ref{sec:algo} can be adapted to solve a ``cost'' version of $\#H$-{\sc Coloring}: given a weight matrix $C$, the cost of a homomorphism $f$ is $\sum\limits_{u\in V_G} C(u,f(u))$, and we want to find a homomorphism of minimal cost. Can this be extended to other types of generalized problems?

\bibliographystyle{abbrv}
\bibliography{biblio}

\appendix

\section{Converting Contraction Sequences to $k$-expression and vice-versa}\label{app:ContractionSequences}

In this appendix we provide a visual example of the constructive proof of the bounds of Theorem \ref{thm:cw vs ctww}.

\LinearBounds*

We first illustrate the lefmost inequality in Section \ref{app:cw vs ctww}, and then illustrate the rightmost part in Section \ref{app:ctww vs cw}

\subsection{From contraction sequences to $k$-expressions}\label{app:cw vs ctww}

We begin by recalling Lemma~\ref{lem:cw leq ctww+1}.

\FirstLinearBound*

As input the method takes a contraction sequence of the same graph that witnesses that its component twin-width is $\le \kappa$, and and uses it do describe a $(\kappa+1)$-expression of a graph.
Also note that the  same construction illustrates Lemma \ref{lem:lcw leq tvtww+1} over the same graph (however, red-loops will not be represented).

\FirstBoundTotal*

\begin{figure}[H]
\centering

\scalebox{1.2}{

\begin{tikzpicture}
\tikzstyle{place}=[draw,shape=circle];

\begin{scope}

    \node[place] (a) at (0,0) {$a$};
    \node[place] (b) at (0,1) {$b$};
    \node[place] (c) at (0,2) {$c$};
    \node[place] (d) at (1,0) {$d$};
    \node[place] (e) at (1,1) {$e$};
    \node[place] (f) at (1,2) {$f$};
    \node[place] (g) at (2,1) {$g$};

    \draw (a)--(b)--(c)--(f)--(g)--(d)--(a)--(f)--(b)--(e)--(g)--(d)--(e)--(c)--(b)--(d);
    
\end{scope}

\begin{scope}[xshift=5cm]

    \node[place,fill=blue] (a) at (0,0) {$a$};
    \node[place,fill=blue] (b) at (0,1) {$b$};
    \node[place,fill=blue] (c) at (0,2) {$c$};
    \node[place,fill=blue] (d) at (1,0) {$d$};
    \node[place,fill=blue] (e) at (1,1) {$e$};
    \node[place,fill=blue] (f) at (1,2) {$f$};
    \node[place,fill=blue] (g) at (2,1) {$g$};

\draw (-.4,-.4)--(-.4,.4)--(.4,.4)--(.4,-.4)--(-.4,-.4);%

\draw (-.4,1-.4)--(-.4,1.4)--(.4,1.4)--(.4,1-.4)--(-.4,1-.4);%

\draw (-.4,2-.4)--(-.4,2.4)--(.4,2.4)--(.4,2-.4)--(-.4,2-.4);%

\draw (1-.4,-.4)--(1-.4,.4)--(1.4,.4)--(1.4,-.4)--(1-.4,-.4);%

\draw (1-.4,1-.4)--(1-.4,1.4)--(1.4,1.4)--(1.4,1-.4)--(1-.4,1-.4);%

\draw (1-.4,2-.4)--(1-.4,2.4)--(1.4,2.4)--(1.4,2-.4)--(1-.4,2-.4);%

\draw (2-.4,1-.4)--(2-.4,1.4)--(2.4,1.4)--(2.4,1-.4)--(2-.4,1-.4);%

\end{scope}

\begin{scope}[xshift=10cm,yshift=-.5cm]

\node () at (0,3) {$\varphi_{\textcolor{blue}{a}} = \textcolor{blue}{\bullet}$};

\node () at (0,2.5) {$\varphi_{\textcolor{blue}{b}} = \textcolor{blue}{\bullet}$};

\node () at (0,2) {$\varphi_{\textcolor{blue}{c}} = \textcolor{blue}{\bullet}$};

\node () at (0,1.5) {$\varphi_{\textcolor{blue}{d}} = \textcolor{blue}{\bullet}$};

\node () at (0,1) {$\varphi_{\textcolor{blue}{e}} = \textcolor{blue}{\bullet}$};

\node () at (0,.5) {$\varphi_{\textcolor{blue}{f}} = \textcolor{blue}{\bullet}$};

\node () at (0,0) {$\varphi_{\textcolor{blue}{g}} = \textcolor{blue}{\bullet}$};

\end{scope}

\end{tikzpicture}

}

\caption{Initial situation. All vertices are blue, but we can interchange labels within an expression if necessary.}
\end{figure}
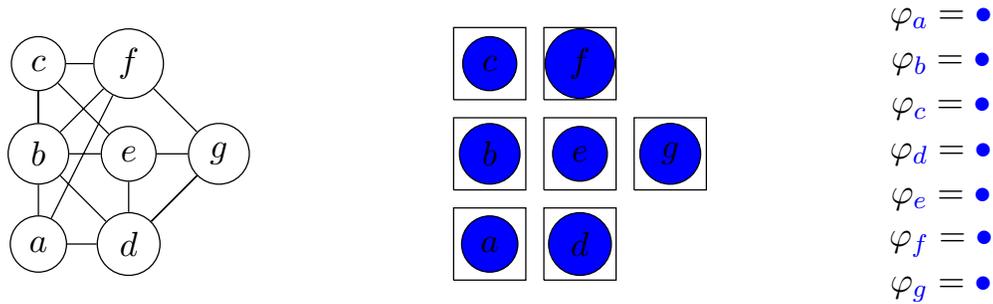

\begin{figure}[H]
\centering

\scalebox{1.2}{

\begin{tikzpicture}
\tikzstyle{place}=[draw,shape=circle];
\tikzstyle{square}=[draw,shape=rectangle];

\begin{scope}

    \node[place] (a) at (0,0) {$a$};
    \node[place] (b) at (0,1) {$b$};
    \node[place] (c) at (0,2) {$c$};
    \node[place] (d) at (1,0) {$d$};
    \node[place] (e) at (1,1) {$e$};
    \node[place] (f) at (1,2) {$f$};
    \node[place] (g) at (2,1) {$g$};

    \draw (a)--(b)--(c)--(f)--(g)--(d)--(a)--(f)--(b)--(e)--(g)--(d)--(e)--(c)--(b)--(d);
    
\end{scope}

\begin{scope}[xshift=5cm]

    \node[place,fill=blue] (a) at (0,0) {$a$};
    \node[place,fill=blue] (b) at (0,1) {$b$};
    \node[place,fill=blue] (c) at (0,2) {$c$};
    \node[place,fill=green] (d) at (1,0) {$d$};
    \node[place,fill=pink] (e) at (1,1) {$e$};
    \node[place,fill=yellow] (f) at (1,2) {$f$};
    \node[place,fill=blue] (g) at (2,1) {$g$};

\draw (-.4,-.4)--(-.4,.4)--(.4,.4)--(.4,-.4)--(-.4,-.4);%

\draw (-.4,1-.4)--(-.4,1.4)--(.4,1.4)--(.4,1-.4)--(-.4,1-.4);%

\draw (-.4,2-.4)--(-.4,2.4)--(.4,2.4)--(.4,2-.4)--(-.4,2-.4);%

\draw (1-.4,-.4)--(1-.4,.4)--(1.4,.4)--(1.4,-.4)--(1-.4,-.4);%

\draw (1-.4,1-.4)--(1-.4,1.4)--(1.4,1.4)--(1.4,1-.4)--(1-.4,1-.4);%

\draw (1-.4,2-.4)--(1-.4,2.4)--(1.4,2.4)--(1.4,2-.4)--(1-.4,2-.4);%

\draw (2-.4,1-.4)--(2-.4,1.4)--(2.4,1.4)--(2.4,1-.4)--(2-.4,1-.4);%

\draw[ultra thick] (-5,-1)--(5,-1);

\end{scope}

\begin{scope}[xshift=10cm,yshift=-.5cm]

\node () at (0,3) {$\varphi_{\textcolor{blue}{a}} = \textcolor{blue}{\bullet}$};

\node () at (0,2.5) {$\varphi_{\textcolor{blue}{b}} = \textcolor{blue}{\bullet}$};

\node () at (0,2) {$\varphi_{\textcolor{blue}{c}} = \textcolor{blue}{\bullet}$};

\node () at (0,1.5) {$\varphi_{\textcolor{green}{d}} = \textcolor{green}{\bullet}$};

\node () at (0,1) {$\varphi_{\textcolor{pink}{e}} = \textcolor{pink}{\bullet}$};

\node () at (0,.5) {$\varphi_{\textcolor{yellow}{f}} = \textcolor{yellow}{\bullet}$};

\node () at (0,0) {$\varphi_{\textcolor{blue}{g}} = \textcolor{blue}{\bullet}$};

\end{scope}

\begin{scope}[yshift=-4cm]

    \node[place] (a) at (0,0) {$a$};
    \node[place] (b) at (0,1) {$b$};
    \node[place] (c) at (0,2) {$c$};
    \node[place] (d) at (1,0) {$d$};
    \node[square] (ef) at (1,1.5) {$ef$}
        edge[style=dashed,color=red,ultra thick] (a)
        edge[style=dashed,color=red,ultra thick] (d);
    \node[place] (g) at (2,1) {$g$};

    \draw (ef)--(g)--(d)--(a)--(b)--(c)--(ef)--(b)--(d);

\end{scope}

\begin{scope}[xshift=5cm,yshift=-4cm]

    \node[place,fill=blue] (a) at (0,0) {$a$};
    \node[place,fill=blue] (b) at (0,1) {$b$};
    \node[place,fill=blue] (c) at (0,2) {$c$};
    \node[place,fill=green] (d) at (1,0) {$d$};
    \node[place,fill=pink] (e) at (1,1) {$e$};
    \node[place,fill=pink] (f) at (1,2) {$f$};
    \node[place,fill=blue] (g) at (2,1) {$g$};

    \draw (e)--(d)--(a)--(f);

\draw (0.6,2.4)--(1.4,2.4)--(1.4,-0.4)--(-0.4,-0.4)--(-0.4,0.4)--(0.6,0.4)--(0.6,2.4);

\draw (-.4,1-.4)--(-.4,1.4)--(.4,1.4)--(.4,1-.4)--(-.4,1-.4);%

\draw (-.4,2-.4)--(-.4,2.4)--(.4,2.4)--(.4,2-.4)--(-.4,2-.4);%

\draw (2-.4,1-.4)--(2-.4,1.4)--(2.4,1.4)--(2.4,1-.4)--(2-.4,1-.4);%

\end{scope}

\begin{scope}[xshift=10cm,yshift=-4cm]

\node () at (0,2) {$\varphi_{\textcolor{blue}{a}\textcolor{green}{d}\textcolor{pink}{e}\textcolor{pink}{f}} = $};

\node () at (0,1) {$\rho_{\textcolor{yellow}{\bullet}\rightarrow \textcolor{pink}{\bullet}}$};

\node () at (0,.5) {$\eta_{\textcolor{pink}{\bullet},\textcolor{green}{\bullet}} \eta_{\textcolor{green}{\bullet},\textcolor{blue}{\bullet}} \eta_{\textcolor{blue}{\bullet},\textcolor{yellow}{\bullet}}$};

\node () at (0,0) {$(\varphi_{\textcolor{blue}{a}}\oplus \varphi_{\textcolor{green}{d}} \oplus \varphi_{\textcolor{pink}{e}} \oplus \varphi_{\textcolor{yellow}{f}})$};

\end{scope}

\end{tikzpicture}

}

\caption{We adapt the labels within components in anticipation of the contraction, perform disjoint unions, and construct the correct edges. Then, we set $e$ and $f$ to the same color.}
\end{figure}

\begin{figure}[H]
\centering

\scalebox{1.2}{

\begin{tikzpicture}
\tikzstyle{place}=[draw,shape=circle];
\tikzstyle{square}=[draw,shape=rectangle];

\begin{scope}

    \node[place] (a) at (0,0) {$a$};
    \node[place] (b) at (0,1) {$b$};
    \node[place] (c) at (0,2) {$c$};
    \node[place] (d) at (1,0) {$d$};
    \node[square] (ef) at (1,1.5) {$ef$}
        edge[style=dashed,color=red,ultra thick] (a)
        edge[style=dashed,color=red,ultra thick] (d);
    \node[place] (g) at (2,1) {$g$};

    \draw (ef)--(g)--(d)--(a)--(b)--(c)--(ef)--(b)--(d);

\end{scope}

\begin{scope}[xshift=5cm]

    \node[place,fill=blue] (a) at (0,0) {$a$};
    \node[place,fill=blue] (b) at (0,1) {$b$};
    \node[place,fill=blue] (c) at (0,2) {$c$};
    \node[place,fill=green] (d) at (1,0) {$d$};
    \node[place,fill=pink] (e) at (1,1) {$e$};
    \node[place,fill=pink] (f) at (1,2) {$f$};
    \node[place,fill=yellow] (g) at (2,1) {$g$};

    \draw (e)--(d)--(a)--(f);

\draw (0.6,2.4)--(1.4,2.4)--(1.4,-0.4)--(-0.4,-0.4)--(-0.4,0.4)--(0.6,0.4)--(0.6,2.4);

\draw (-.4,1-.4)--(-.4,1.4)--(.4,1.4)--(.4,1-.4)--(-.4,1-.4);%

\draw (-.4,2-.4)--(-.4,2.4)--(.4,2.4)--(.4,2-.4)--(-.4,2-.4);%

\draw (2-.4,1-.4)--(2-.4,1.4)--(2.4,1.4)--(2.4,1-.4)--(2-.4,1-.4);%

\draw[ultra thick] (-5,-1)--(5,-1);

\end{scope}

\begin{scope}[xshift=10cm]

\node () at (0,1) {$\varphi_{\textcolor{blue}{a}\textcolor{green}{d}\textcolor{pink}{e}\textcolor{pink}{f}}$};

\node () at (0,.5) {$\varphi_{\mathcolor{yellow}{g}}={\mathcolor{yellow}{\bullet}}$};

\end{scope}

\begin{scope}[yshift=-4cm]

    \node[square] (ad) at (1,0) {$ad$};
    \node[place] (b) at (0,1) {$b$};
    \node[place] (c) at (0,2) {$c$};
    \node[square] (ef) at (1,1.5) {$ef$}
        edge[style=dashed,color=red,ultra thick] (ad);
    \node[place] (g) at (2,1) {$g$}
        edge[style=dashed,color=red,ultra thick] (ad);

    \draw (ad)--(b)--(c)--(ef)--(b)--(ad);
    \draw (g)--(ef);
    
\end{scope}

\begin{scope}[xshift=5cm,yshift=-4cm]

    \node[place,fill=green] (a) at (0,0) {$a$};
    \node[place,fill=blue] (b) at (0,1) {$b$};
    \node[place,fill=blue] (c) at (0,2) {$c$};
    \node[place,fill=green] (d) at (1,0) {$d$};
    \node[place,fill=pink] (e) at (1,1) {$e$};
    \node[place,fill=pink] (f) at (1,2) {$f$};
    \node[place,fill=yellow] (g) at (2,1) {$g$};

    \draw (e)--(d)--(a)--(f);

    \draw (f)--(g)--(e)--(g)--(d);

\draw (0.6,2.4)--(2.4,2.4)--(2.4,-0.4)--(-0.4,-0.4)--(-0.4,0.4)--(0.6,0.4)--(0.6,2.4);%

\draw (-.4,1-.4)--(-.4,1.4)--(.4,1.4)--(.4,1-.4)--(-.4,1-.4);%

\draw (-.4,2-.4)--(-.4,2.4)--(.4,2.4)--(.4,2-.4)--(-.4,2-.4);%

\end{scope}

\begin{scope}[xshift=10cm,yshift=-4cm]

\node () at (0,1.5) {$\varphi_{\textcolor{green}{ad}\textcolor{pink}{ef}\textcolor{yellow}{g}} = $};

\node () at (0,1) {$\rho_{\textcolor{blue}{\bullet}\rightarrow\textcolor{green}{\bullet}}$};

\node () at (0,.5) {$\eta_{\textcolor{yellow}{\bullet},\textcolor{green}{\bullet}}\eta_{\textcolor{yellow}{\bullet},\textcolor{pink}{\bullet}}$};

\node () at (0,0) {$(\varphi_{\textcolor{blue}{a}\textcolor{green}{d}\textcolor{pink}{e}\textcolor{pink}{f}}\oplus \varphi_{\textcolor{yellow}{g}})$};

\end{scope}

\end{tikzpicture}

}

\caption{Now, $g$ joins the big red-connected component. Crucially, $e$ and $f$ ``agree'' on $g$.}
\end{figure}

\begin{figure}[H]
\centering

\scalebox{1.2}{

\begin{tikzpicture}
\tikzstyle{place}=[draw,shape=circle];
\tikzstyle{square}=[draw,shape=rectangle];

\begin{scope}

    \node[square] (ad) at (1,0) {$ad$};
    \node[place] (b) at (0,1) {$b$};
    \node[place] (c) at (0,2) {$c$};
    \node[square] (ef) at (1,1.5) {$ef$}
        edge[style=dashed,color=red,ultra thick] (ad);
    \node[place] (g) at (2,1) {$g$}
        edge[style=dashed,color=red,ultra thick] (ad);

    \draw (ad)--(b)--(c)--(ef)--(b)--(ad);
    \draw (g)--(ef);
    
\end{scope}

\begin{scope}[xshift=5cm]

    \node[place,fill=green] (a) at (0,0) {$a$};
    \node[place,fill=blue] (b) at (0,1) {$b$};
    \node[place,fill=blue] (c) at (0,2) {$c$};
    \node[place,fill=green] (d) at (1,0) {$d$};
    \node[place,fill=pink] (e) at (1,1) {$e$};
    \node[place,fill=pink] (f) at (1,2) {$f$};
    \node[place,fill=yellow] (g) at (2,1) {$g$};

    \draw (e)--(d)--(a)--(f);

    \draw (f)--(g)--(e)--(g)--(d);

\draw (0.6,2.4)--(2.4,2.4)--(2.4,-0.4)--(-0.4,-0.4)--(-0.4,0.4)--(0.6,0.4)--(0.6,2.4);%

\draw (-.4,1-.4)--(-.4,1.4)--(.4,1.4)--(.4,1-.4)--(-.4,1-.4);%

\draw (-.4,2-.4)--(-.4,2.4)--(.4,2.4)--(.4,2-.4)--(-.4,2-.4);%

\draw[ultra thick] (-5,-1)--(5,-1);

\end{scope}

\begin{scope}[xshift=10cm]

\node () at (0,1.5) {$\varphi_{\textcolor{green}{ad}\textcolor{pink}{ef}\textcolor{yellow}{g}} = $};

\node () at (0,1) {$\varphi_{\textcolor{blue}{b}}=\mathcolor{blue}{\bullet}$};

\end{scope}

\begin{scope}[yshift=-4cm]

    \node[square] (ad) at (1,0) {$ad$};
    \node[place] (c) at (0,2) {$c$};
    \node[square] (bef) at (1,1.5) {$bef$}
        edge[style=dashed,color=red,ultra thick] (ad);
    \node[place] (g) at (2,1) {$g$}
        edge[style=dashed,color=red,ultra thick] (ad)
        edge[style=dashed,color=red,ultra thick] (bef);

    \draw (bef)--(c);

\end{scope}

\begin{scope}[xshift=5cm,yshift=-4cm]

    \node[place,fill=green] (a) at (0,0) {$a$};
    \node[place,fill=pink] (b) at (0,1) {$b$};
    \node[place,fill=blue] (c) at (0,2) {$c$};
    \node[place,fill=green] (d) at (1,0) {$d$};
    \node[place,fill=pink] (e) at (1,1) {$e$};
    \node[place,fill=pink] (f) at (1,2) {$f$};
    \node[place,fill=yellow] (g) at (2,1) {$g$};

    \draw (e)--(d)--(a)--(f);

    \draw (f)--(g)--(e)--(g)--(d);

    \draw (e)--(b)--(f);

    \draw (a)--(b)--(d);

\draw (0.6,2.4)--(2.4,2.4)--(2.4,-0.4)--(-0.4,-0.4)--(-0.4,1.4)--(0.6,1.4)--(0.6,2.4);%

\draw (-.4,2-.4)--(-.4,2.4)--(.4,2.4)--(.4,2-.4)--(-.4,2-.4);%

\end{scope}

\begin{scope}[xshift=10cm,yshift=-4cm]

\node () at (0,1.5) {$\varphi_{\textcolor{green}{ad}\textcolor{pink}{bef}\textcolor{yellow}{g}} = $};

\node () at (0,1) {$\rho_{\textcolor{blue}{\bullet}\rightarrow\textcolor{pink}{\bullet}}$};

\node () at (0,.5) {$\eta_{\textcolor{blue}{\bullet},\textcolor{green}{\bullet}}\eta_{\textcolor{blue}{\bullet},\textcolor{pink}{\bullet}}$};

\node () at (0,0) {$(\varphi_{\textcolor{green}{ad}\textcolor{pink}{ef}\textcolor{yellow}{g}}\oplus \varphi_{\textcolor{blue}{b}})$};

\end{scope}

\end{tikzpicture}

}

\caption{$b$ joins the ``big'' red-connected component}
\end{figure}

\begin{figure}[H]
\centering

\scalebox{1.2}{

\begin{tikzpicture}
\tikzstyle{place}=[draw,shape=circle];
\tikzstyle{square}=[draw,shape=rectangle];

\begin{scope}

    \node[square] (ad) at (1,0) {$ad$};
    \node[place] (c) at (0,2) {$c$};
    \node[square] (bef) at (1,1.5) {$bef$}
        edge[style=dashed,color=red,ultra thick] (ad);
    \node[place] (g) at (2,1) {$g$}
        edge[style=dashed,color=red,ultra thick] (ad)
        edge[style=dashed,color=red,ultra thick] (bef);

    \draw (bef)--(c);

\end{scope}

\begin{scope}[xshift=5cm]

    \node[place,fill=green] (a) at (0,0) {$a$};
    \node[place,fill=pink] (b) at (0,1) {$b$};
    \node[place,fill=blue] (c) at (0,2) {$c$};
    \node[place,fill=green] (d) at (1,0) {$d$};
    \node[place,fill=pink] (e) at (1,1) {$e$};
    \node[place,fill=pink] (f) at (1,2) {$f$};
    \node[place,fill=yellow] (g) at (2,1) {$g$};

    \draw (e)--(d)--(a)--(f);

    \draw (f)--(g)--(e)--(g)--(d);

    \draw (e)--(b)--(f);

    \draw (a)--(b)--(d);

\draw (0.6,2.4)--(2.4,2.4)--(2.4,-0.4)--(-0.4,-0.4)--(-0.4,1.4)--(0.6,1.4)--(0.6,2.4);%

\draw (-.4,2-.4)--(-.4,2.4)--(.4,2.4)--(.4,2-.4)--(-.4,2-.4);%

\draw[ultra thick] (-5,-1)--(5,-1);
   
\end{scope}

\begin{scope}[xshift=10cm]

\node () at (0,1) {$\varphi_{\textcolor{green}{ad}\textcolor{pink}{bef}\textcolor{yellow}{g}}$};

\end{scope}

\begin{scope}[yshift=-3cm]

    \node[place] (c) at (0,0) {$c$};
    \node[square] (bef) at (1.5,0) {$bef$};
    \node[square] (adg) at (3,0) {$adg$}
        edge[style=dashed,color=red,ultra thick] (bef);

    \draw (bef)--(c);

\end{scope}

\begin{scope}[xshift=5cm,yshift=-4cm]

    \node[place,fill=green] (a) at (0,0) {$a$};
    \node[place,fill=pink] (b) at (0,1) {$b$};
    \node[place,fill=blue] (c) at (0,2) {$c$};
    \node[place,fill=green] (d) at (1,0) {$d$};
    \node[place,fill=pink] (e) at (1,1) {$e$};
    \node[place,fill=pink] (f) at (1,2) {$f$};
    \node[place,fill=green] (g) at (2,1) {$g$};

    \draw (e)--(d)--(a)--(f);

    \draw (f)--(g)--(e)--(g)--(d);

    \draw (e)--(b)--(f);

    \draw (a)--(b)--(d);

\draw (0.6,2.4)--(2.4,2.4)--(2.4,-0.4)--(-0.4,-0.4)--(-0.4,1.4)--(0.6,1.4)--(0.6,2.4);%

\draw (-.4,2-.4)--(-.4,2.4)--(.4,2.4)--(.4,2-.4)--(-.4,2-.4);%

\end{scope}

\begin{scope}[xshift=10cm,yshift=-3.5cm]

\node () at (0,1) {$\varphi_{\textcolor{green}{adg}\textcolor{pink}{bef}} = $};

\node () at (0,.5) {$\rho_{\textcolor{yellow}{\bullet}\rightarrow\textcolor{green}{\bullet}}$};

\node () at (0,0) {$\varphi_{\textcolor{green}{ad}\textcolor{pink}{bef}\textcolor{yellow}{g}}$};

\end{scope}

\end{tikzpicture}

}

\caption{The red-components are the same: only a relabelling happens.}
\end{figure}

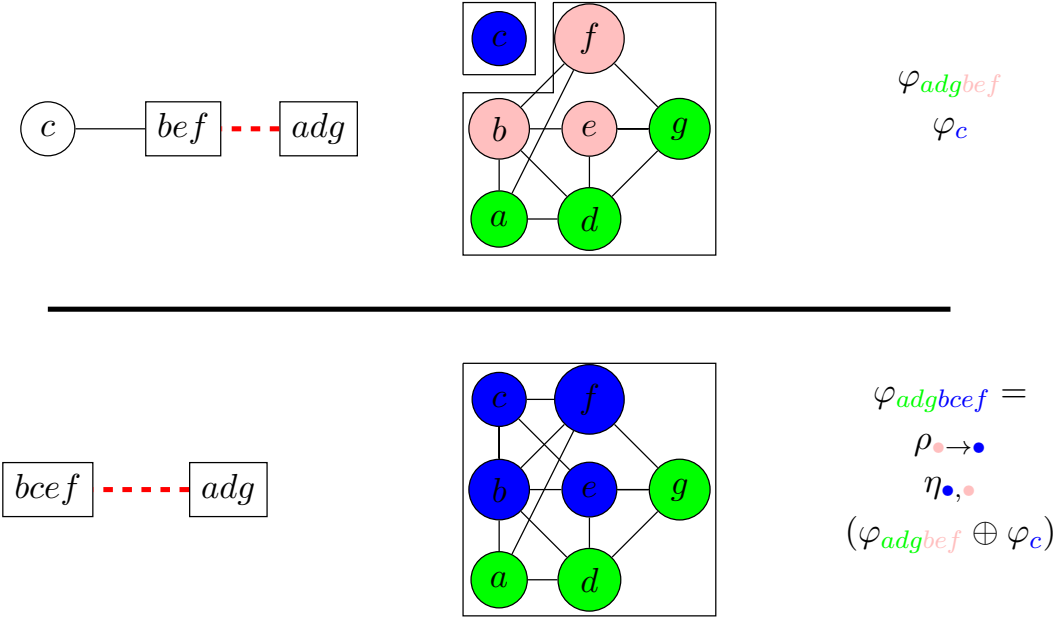
\begin{figure}[H]
\centering

\scalebox{1.2}{

\begin{tikzpicture}
\tikzstyle{place}=[draw,shape=circle];
\tikzstyle{square}=[draw,shape=rectangle];

\begin{scope}[yshift=1cm]

    \node[place] (c) at (0,0) {$c$};
    \node[square] (bef) at (1.5,0) {$bef$};
    \node[square] (adg) at (3,0) {$adg$}
        edge[style=dashed,color=red,ultra thick] (bef);

    \draw (bef)--(c);

\end{scope}

\begin{scope}[xshift=5cm]

    \node[place,fill=green] (a) at (0,0) {$a$};
    \node[place,fill=pink] (b) at (0,1) {$b$};
    \node[place,fill=blue] (c) at (0,2) {$c$};
    \node[place,fill=green] (d) at (1,0) {$d$};
    \node[place,fill=pink] (e) at (1,1) {$e$};
    \node[place,fill=pink] (f) at (1,2) {$f$};
    \node[place,fill=green] (g) at (2,1) {$g$};

    \draw (e)--(d)--(a)--(f);

    \draw (f)--(g)--(e)--(g)--(d);

    \draw (e)--(b)--(f);

    \draw (a)--(b)--(d);

\draw (0.6,2.4)--(2.4,2.4)--(2.4,-0.4)--(-0.4,-0.4)--(-0.4,1.4)--(0.6,1.4)--(0.6,2.4);%

\draw (-.4,2-.4)--(-.4,2.4)--(.4,2.4)--(.4,2-.4)--(-.4,2-.4);%

\draw[ultra thick] (-5,-1)--(5,-1);

\end{scope}

\begin{scope}[xshift=10cm,yshift=.5cm]

\node () at (0,1) {$\varphi_{\textcolor{green}{adg}\textcolor{pink}{bef}}$};

\node () at (0,.5) {$\varphi_{\mathcolor{blue}{c}}$};

\end{scope}

\begin{scope}[yshift=-3cm]

    \node[square] (bcef) at (0,0) {$bcef$};
    \node[square] (adg) at (2,0) {$adg$}
        edge[style=dashed,color=red,ultra thick] (bcef);

\end{scope}

\begin{scope}[xshift=5cm,yshift=-4cm]

    \node[place,fill=green] (a) at (0,0) {$a$};
    \node[place,fill=blue] (b) at (0,1) {$b$};
    \node[place,fill=blue] (c) at (0,2) {$c$};
    \node[place,fill=green] (d) at (1,0) {$d$};
    \node[place,fill=blue] (e) at (1,1) {$e$};
    \node[place,fill=blue] (f) at (1,2) {$f$};
    \node[place,fill=green] (g) at (2,1) {$g$};

    \draw (e)--(d)--(a)--(f);

    \draw (f)--(g)--(e)--(g)--(d);

    \draw (e)--(b)--(f);

    \draw (a)--(b)--(d);

    \draw (f)--(c)--(b)--(c)--(e);

    \draw (-0.4,2.4)--(2.4,2.4)--(2.4,-0.4)--(-0.4,-0.4)--(-0.4,2.4);

\end{scope}

\begin{scope}[xshift=10cm,yshift=-3.5cm]

\node () at (0,1.5) {$\varphi_{\textcolor{green}{adg}\textcolor{blue}{bcef}}=$};

\node () at (0,1) {$\rho_{\textcolor{pink}{\bullet}\rightarrow\textcolor{blue}{\bullet}}$};

\node () at (0,.5) {$\eta_{\textcolor{blue}{\bullet},\textcolor{pink}{\bullet}}$};

\node () at (0,0) {$(\varphi_{\textcolor{green}{adg}\textcolor{pink}{bef}}\oplus\varphi_{\textcolor{blue}{c}})$};

\end{scope}

\end{tikzpicture}

}

\caption{Now $c$ joins the ``big component''. We already have a $4$-expression of the original graph.}
\end{figure}

\begin{figure}[H]
\centering

\scalebox{1.2}{

\begin{tikzpicture}
\tikzstyle{place}=[draw,shape=circle];
\tikzstyle{square}=[draw,shape=rectangle];

\begin{scope}[yshift=1cm]

    \node () at (-1,0) {};
    
    \node[square] (abcdefg) at (0,0) {$abcdefg$};

\end{scope}

\begin{scope}[xshift=4.4cm]

    \node[place,fill=blue] (a) at (0,0) {$a$};
    \node[place,fill=blue] (b) at (0,1) {$b$};
    \node[place,fill=blue] (c) at (0,2) {$c$};
    \node[place,fill=blue] (d) at (1,0) {$d$};
    \node[place,fill=blue] (e) at (1,1) {$e$};
    \node[place,fill=blue] (f) at (1,2) {$f$};
    \node[place,fill=blue] (g) at (2,1) {$g$};

    \draw (e)--(d)--(a)--(f);

    \draw (f)--(g)--(e)--(g)--(d);

    \draw (e)--(b)--(f);

    \draw (a)--(b)--(d);

    \draw (f)--(c)--(b)--(c)--(e);

    \draw (-0.4,2.4)--(2.4,2.4)--(2.4,-0.4)--(-0.4,-0.4)--(-0.4,2.4);

\end{scope}

\begin{scope}[xshift=9.5cm]

\node () at (0,1) {$\varphi_{\textcolor{blue}{abcdefg}}$};

\end{scope}

\end{tikzpicture}

}

\caption{Final situation.}
\end{figure}
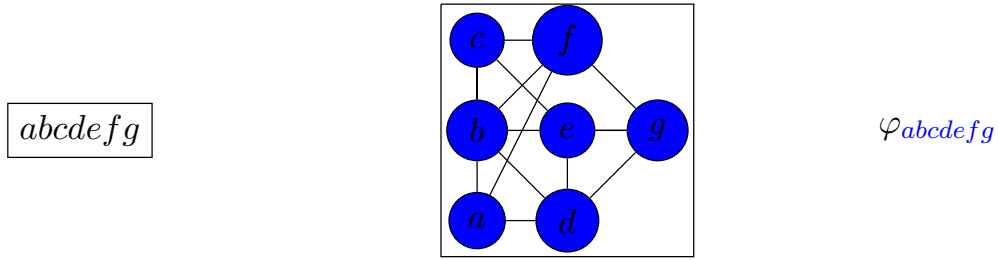

\subsection{From $k$-expressions to contraction sequences}\label{app:ctww vs cw}

We continue by illustrating an example of the application of the method described in Lemma \ref{lem:ctww leq 2cw-1} (establishing the rightmost part of the linear bounds of Theorem \ref{thm:cw vs ctww}).

\SecondLinearBound*

We concentrate on illustrating $(i)$, since $(ii)$ is analogous.
The method takes as an input a $k$-expression of a graph, and uses it to describe a contraction sequence of the same graph that witnesses that its component twin-width is $\le 2k$.

This method progressively ``collapses''  the $k$-expression. A partition of the vertices of the original graph correspond naturally to every step of the collapse: two vertices are in the same subset of the partition if they have been collapsed together. Subsets of vertices that have been collapsed together are referred to as {\em parks}.

\begin{figure}[H]
\centering

\scalebox{1.1}{

\begin{tikzpicture}[grow=up,every tree node/.style={draw},
   level distance=0.8cm,sibling distance=.05cm,
   edge from parent/.style={draw,-latex,<-}]
\tikzstyle{place}=[draw,shape=circle];
\tikzstyle{square}=[draw,shape=rectangle];

\begin{scope}

\Tree
[.\node{$\eta_{\textcolor{green}{\bullet},\textcolor{pink}{\bullet}}$};
    [.\node{$\oplus$};
        [.\node{$\rho_{\textcolor{pink}{\bullet}\rightarrow\textcolor{green}{\bullet}}$};
            [.\node{$\eta_{\textcolor{blue}{\bullet},\textcolor{green}{\bullet}}$};
                [.\node{$\oplus$};
                    [.\node[ultra thick]{$\textcolor{pink}{j}$};]
                    [.\node{$\eta_{\textcolor{blue}{\bullet},\textcolor{green}{\bullet}}$};
                        [.\node{$\oplus$};
                            [.\node{$\oplus$};
                                [.\node[ultra thick]{$\textcolor{blue}{i}$};]
                                [.\node[ultra thick]{$\textcolor{blue}{h}$};]
                            ]
                            [.\node[ultra thick]{$\textcolor{green}{g}$};]
                        ]
                    ]
                ]
            ]
        ]
        [.\node{$\rho_{\textcolor{green}{\bullet}\rightarrow\textcolor{blue}{\bullet}}$};
            [.\node{$\eta_{\textcolor{blue}{\bullet},\textcolor{pink}{\bullet}}$};
                [.\node{$\oplus$};
                    [.\node{$\oplus$};
                        [.\node{$\oplus$};
                            [.\node{$\oplus$};
                                [.\node[ultra thick]{$\textcolor{pink}{f}$};]
                                [.\node[ultra thick]{$\textcolor{pink}{e}$};]
                            ]
                            [.\node[ultra thick]{$\textcolor{pink}{d}$};]
                        ]
                        [.\node[ultra thick]{$\textcolor{blue}{c}$};]
                     ]
                     [.\node{$\eta_{\textcolor{blue}{\bullet},\textcolor{green}{\bullet}}$};
                        [.\node{$\oplus$};
                            [.\node[ultra thick]{$\textcolor{blue}{b}$};
                            ]
                            [.\node[ultra thick]{$\textcolor{green}{a}$};
                            ]
                        ]
                     ]
                ]
            ]
        ]
    ]
]

\end{scope}

\begin{scope}[xshift=5cm,yshift=2cm]

    \node[place,fill=green] (a) at (0,0) {$a$};
    \node[place,fill=blue] (b) at (0,1) {$b$};
    \node[place,fill=blue] (c) at (0,2) {$c$};
    \node[place,fill=pink] (d) at (1.5,0) {$d$};
    \node[place,fill=pink] (e) at (1.5,1) {$e$};
    \node[place,fill=pink] (f) at (1.5,2) {$f$};
    \node[place,fill=green] (g) at (3,1.5) {$g$};
    \node[place,fill=blue] (h) at (4.5,2) {$h$};
    \node[place,fill=blue] (i) at (4.5,1) {$i$};
    \node[place,fill=pink] (j) at (3,0.5) {$j$};

\draw (h)--(g)--(f)--(c)--(e)--(g)--(i);

\draw (g)--(j)--(e)--(b)--(d)--(j);

\draw (c)--(d);

\draw (a)--(b)--(f);

\draw[ultra thick] (-.4,-.4)--(+.4,-.4)--(+.4,+.4)--(-.4,+.4)--(-.4,-.4);

\draw[ultra thick] (-.4,1-.4)--(+.4,1-.4)--(+.4,1+.4)--(-.4,1+.4)--(-.4,1-.4);

\draw[ultra thick] (-.4,2-.4)--(+.4,2-.4)--(+.4,2+.4)--(-.4,2+.4)--(-.4,2-.4);

\draw[ultra thick] (1.5-.4,-.4)--(1.5+.4,-.4)--(1.5+.4,+.4)--(1.5-.4,+.4)--(1.5-.4,-.4);

\draw[ultra thick] (1.5-.4,1-.4)--(1.5+.4,1-.4)--(1.5+.4,1+.4)--(1.5-.4,1+.4)--(1.5-.4,1-.4);

\draw[ultra thick] (1.5-.4,2-.4)--(1.5+.4,2-.4)--(1.5+.4,2+.4)--(1.5-.4,2+.4)--(1.5-.4,2-.4);

\draw[ultra thick] (3-.4,.5-.4)--(3+.4,.5-.4)--(3+.4,.5+.4)--(3-.4,.5+.4)--(3-.4,.5-.4);

\draw[ultra thick] (3-.4,1.5-.4)--(3+.4,1.5-.4)--(3+.4,1.5+.4)--(3-.4,1.5+.4)--(3-.4,1.5-.4);

\draw[ultra thick] (4.5-.4,2-.4)--(4.5+.4,2-.4)--(4.5+.4,2+.4)--(4.5-.4,2+.4)--(4.5-.4,2-.4);

\draw[ultra thick] (4.5-.4,1-.4)--(4.5+.4,1-.4)--(4.5+.4,1+.4)--(4.5-.4,1+.4)--(4.5-.4,1-.4);

\end{scope}

\end{tikzpicture}

}

\caption{We represent the vertices with their current label.}
\end{figure}
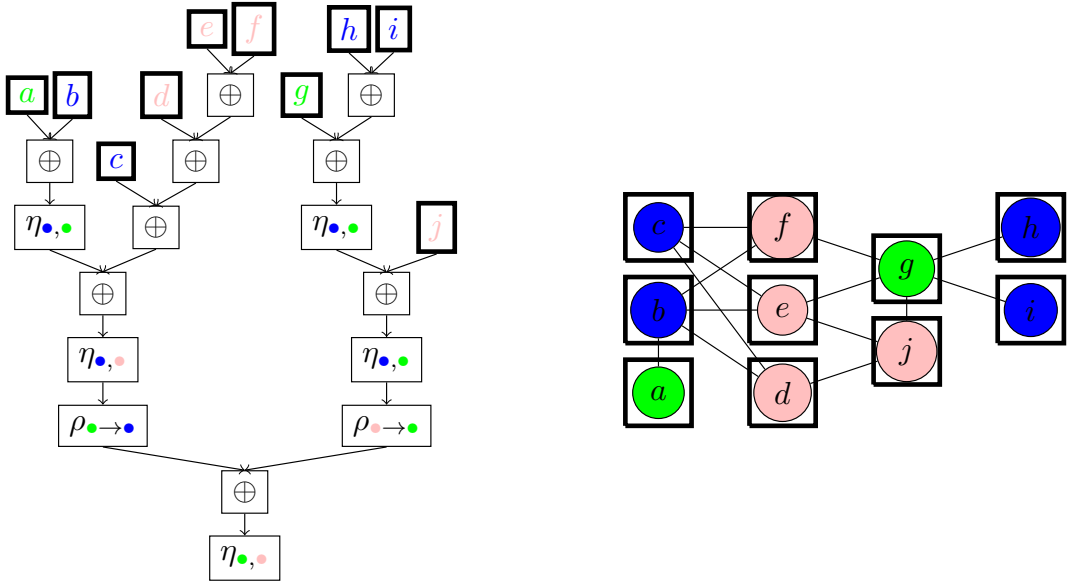

\begin{figure}[H]
\centering

\scalebox{1.1}{

\begin{tikzpicture}[grow=up,every tree node/.style={draw},
   level distance=0.8cm,sibling distance=.05cm,
   edge from parent/.style={draw,-latex,<-}]
\tikzstyle{place}=[draw,shape=circle];
\tikzstyle{square}=[draw,shape=rectangle];

\begin{scope}

\Tree
[.\node{$\eta_{\textcolor{green}{\bullet},\textcolor{pink}{\bullet}}$};
    [.\node{$\oplus$};
        [.\node{$\rho_{\textcolor{pink}{\bullet}\rightarrow\textcolor{green}{\bullet}}$};
            [.\node{$\eta_{\textcolor{blue}{\bullet},\textcolor{green}{\bullet}}$};
                [.\node{$\oplus$};
                    [.\node[ultra thick]{$\textcolor{pink}{j}$};]
                    [.\node{$\eta_{\textcolor{blue}{\bullet},\textcolor{green}{\bullet}}$};
                        [.\node{$\oplus$};
                            [.\node[ultra thick]{$\textcolor{blue}{hi}$};]
                            [.\node[ultra thick]{$\textcolor{green}{g}$};]
                        ]
                    ]
                ]
            ]
        ]
        [.\node{$\rho_{\textcolor{green}{\bullet}\rightarrow\textcolor{blue}{\bullet}}$};
            [.\node{$\eta_{\textcolor{blue}{\bullet},\textcolor{pink}{\bullet}}$};
                [.\node{$\oplus$};
                    [.\node{$\oplus$};
                        [.\node[ultra thick]{$\textcolor{pink}{def}$};][.\node[ultra thick]{$\textcolor{blue}{c}$};]
                    ]
                     [.\node{$\eta_{\textcolor{blue}{\bullet},\textcolor{green}{\bullet}}$};
                        [.\node{$\oplus$};
                            [.\node[ultra thick]{$\textcolor{blue}{b}$};
                            ]
                            [.\node[ultra thick]{$\textcolor{green}{a}$};
                            ]
                        ]
                     ]
                ]
            ]
        ]
    ]
]

\end{scope}

\begin{scope}[xshift=5cm,yshift=2cm]

    \node[place,fill=green] (a) at (0,0) {$a$};
    \node[place,fill=blue] (b) at (0,1) {$b$};
    \node[place,fill=blue] (c) at (0,2) {$c$};
    \node[place,fill=pink] (def) at (1.5,2) {$def$};
    \node[place,fill=green] (g) at (3,1.5) {$g$};
    \node[place,fill=blue] (hi) at (4.5,1.5) {$hi$};
    \node[place,fill=pink] (j) at (3,0.5) {$j$};

\draw (hi)--(g)--(def)--(c)--(def)--(g)--(hi);

\draw (g)--(j)--(def)--(b)--(def)--(j);

\draw (c)--(def);

\draw (a)--(b)--(def);

\draw[ultra thick] (-.4,-.4)--(+.4,-.4)--(+.4,+.4)--(-.4,+.4)--(-.4,-.4);

\draw[ultra thick] (-.4,1-.4)--(+.4,1-.4)--(+.4,1+.4)--(-.4,1+.4)--(-.4,1-.4);

\draw[ultra thick] (-.4,2-.4)--(+.4,2-.4)--(+.4,2+.4)--(-.4,2+.4)--(-.4,2-.4);

\draw[ultra thick] (1.5-.6,2-.6)--(1.5+.6,2-.6)--(1.5+.6,2+.6)--(1.5-.6,2+.6)--(1.5-.6,2-.6);%

\draw[ultra thick] (3-.4,.5-.4)--(3+.4,.5-.4)--(3+.4,.5+.4)--(3-.4,.5+.4)--(3-.4,.5-.4);

\draw[ultra thick] (3-.4,1.5-.4)--(3+.4,1.5-.4)--(3+.4,1.5+.4)--(3-.4,1.5+.4)--(3-.4,1.5-.4);

\draw[ultra thick] (4.5-.5,1.5-.5)--(4.5+.5,1.5-.5)--(4.5+.5,1.5+.5)--(4.5-.5,1.5+.5)--(4.5-.5,1.5-.5);

\end{scope}

\end{tikzpicture}

}

\caption{$d$, $e$ and $f$ are introduced together with the same label: they are twins. We can contract them.}
\end{figure}
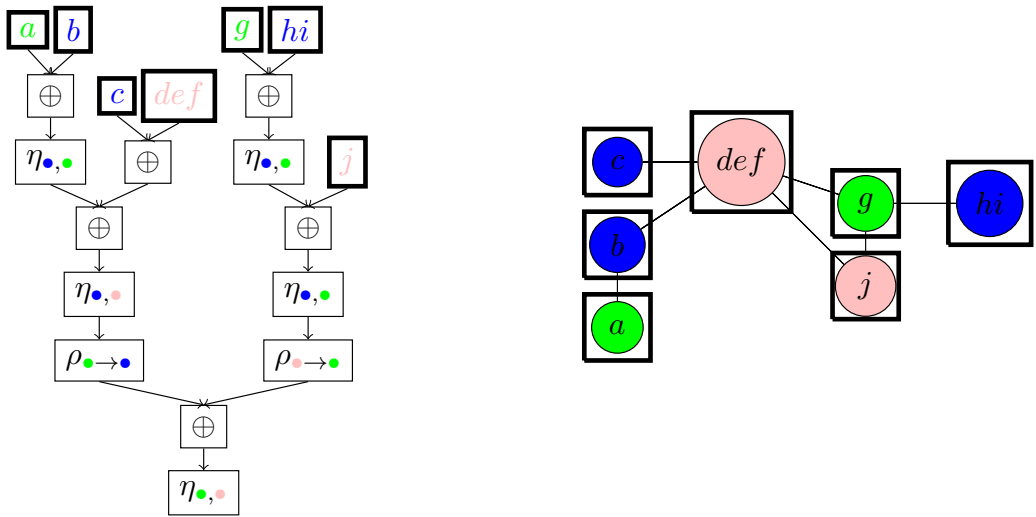

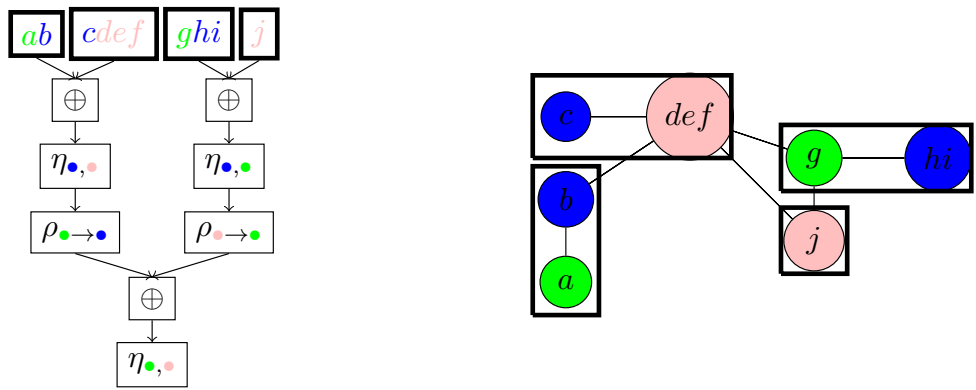
\begin{figure}[H]
\centering

\scalebox{1.1}{

\begin{tikzpicture}[grow=up,every tree node/.style={draw},
   level distance=0.8cm,sibling distance=.05cm,
   edge from parent/.style={draw,-latex,<-}]
\tikzstyle{place}=[draw,shape=circle];
\tikzstyle{square}=[draw,shape=rectangle];

\begin{scope}

\Tree
[.\node{$\eta_{\textcolor{green}{\bullet},\textcolor{pink}{\bullet}}$};
    [.\node{$\oplus$};
        [.\node{$\rho_{\textcolor{pink}{\bullet}\rightarrow\textcolor{green}{\bullet}}$};
            [.\node{$\eta_{\textcolor{blue}{\bullet},\textcolor{green}{\bullet}}$};
                [.\node{$\oplus$};
                    [.\node[ultra thick]{$\textcolor{pink}{j}$};]
                    [.\node[ultra thick]{$\textcolor{green}{g}\textcolor{blue}{hi}$};]
                ]
            ]
        ]
        [.\node{$\rho_{\textcolor{green}{\bullet}\rightarrow\textcolor{blue}{\bullet}}$};
            [.\node{$\eta_{\textcolor{blue}{\bullet},\textcolor{pink}{\bullet}}$};
                [.\node{$\oplus$};
                    [.\node[ultra thick]{$\textcolor{blue}{c}\textcolor{pink}{def}$};]
                    [.\node[ultra thick]{$\textcolor{green}{a}\textcolor{blue}{b}$};
                    ]
                ]
            ]
        ]
    ]
]

\end{scope}

\begin{scope}[xshift=5cm,yshift=2cm]

    \node[place,fill=green] (a) at (0,-1) {$a$};
    \node[place,fill=blue] (b) at (0,0) {$b$};
    \node[place,fill=blue] (c) at (0,1) {$c$};
    \node[place,fill=pink] (def) at (1.5,1) {$def$};
    \node[place,fill=green] (g) at (3,.5) {$g$};
    \node[place,fill=blue] (hi) at (4.5,.5) {$hi$};
    \node[place,fill=pink] (j) at (3,-.5) {$j$};

\draw (hi)--(g)--(def)--(c)--(def)--(g)--(hi);

\draw (g)--(j)--(def)--(b)--(def)--(j);

\draw (c)--(def);

\draw (a)--(b)--(def);

\draw[ultra thick] (-.4,-1-.4)--(+.4,-1-.4)--(+.4,.4)--(-.4,.4)--(-.4,-1-.4);%

\draw[ultra thick] (-.4,1-.5)--(1.5+.5,1-.5)--(1.5+.5,1+.5)--(-.4,1+.5)--(-.4,1-.5);%

\draw[ultra thick] (3-.4,-.5-.4)--(3+.4,-.5-.4)--(3+.4,-.5+.4)--(3-.4,-.5+.4)--(3-.4,-.5-.4);%

\draw[ultra thick] (3-.4,.5-.4)--(4.5+.4,.5-.4)--(4.5+.4,.5+.4)--(3-.4,.5+.4)--(3-.4,.5-.4); %

\end{scope}

\end{tikzpicture}

}

\caption{We collapse the $k$-expression and merge the ``parks'' accordingly.}
\end{figure}

\begin{figure}[H]
\centering

\scalebox{1.1}{

\begin{tikzpicture}[grow=up,every tree node/.style={draw},
   level distance=0.8cm,sibling distance=.05cm,
   edge from parent/.style={draw,-latex,<-}]
\tikzstyle{place}=[draw,shape=circle];
\tikzstyle{square}=[draw,shape=rectangle];

\begin{scope}

\Tree
[.\node{$\eta_{\textcolor{green}{\bullet},\textcolor{pink}{\bullet}}$};
    [.\node{$\oplus$};
        [.\node{$\rho_{\textcolor{pink}{\bullet}\rightarrow\textcolor{green}{\bullet}}$};
            [.\node[ultra thick]{$\textcolor{green}{g}\textcolor{blue}{hi}\textcolor{pink}{j}$};]
        ]
        [.\node{$\rho_{\textcolor{green}{\bullet}\rightarrow\textcolor{blue}{\bullet}}$};
            [.\node[ultra thick]{$\textcolor{green}{a}\textcolor{blue}{bc}\textcolor{pink}{def}$};
            ]
        ]
    ]
]

\end{scope}

\begin{scope}[xshift=5cm,yshift=1cm]

    \node[place,fill=green] (a) at (0,-.2) {$a$};
    \node[place,fill=blue] (bc) at (0,1) {$bc$};
    \node[place,fill=pink] (def) at (1.5,1) {$def$};
    \node[place,fill=green] (g) at (3,1) {$g$};
    \node[place,fill=blue] (hi) at (4.5,1) {$hi$};
    \node[place,fill=pink] (j) at (3,-.2) {$j$};

\draw (hi)--(g)--(def)--(bc)--(def)--(g)--(hi);

\draw (g)--(j)--(def)--(j);

\draw[ultra thick,red,style=dashed] (a)--(bc);

\draw[ultra thick] (-.5,-.2-.4)--(1.5+.6,-.2-.4)--(1.5+.6,1+.6)--(-.5,1+.6)--(-.5,-.2-.4);%

\draw[ultra thick] (3-.5,-.2-.4)--(4.5+.5,-.2-.4)--(4.5+.5,1+.6)--(3-.5,1+.6)--(3-.5,-.2-.4);

\end{scope}

\end{tikzpicture}

}

\caption{We merge vertices with the same label in the same park: the red-edges created are confined in the parks.}
\end{figure}
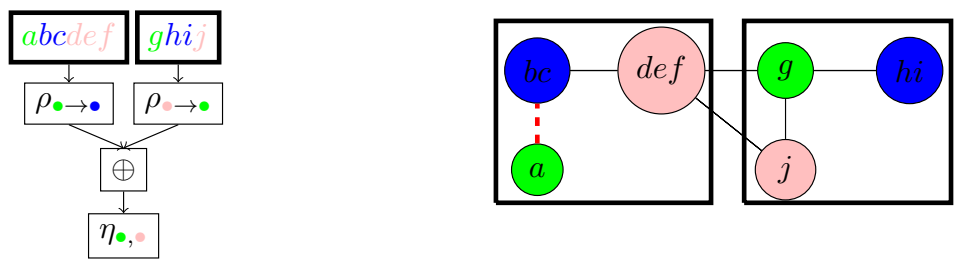

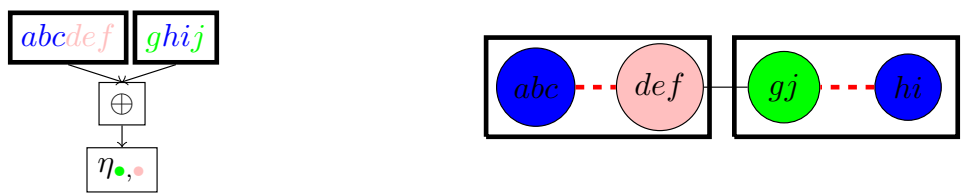
\begin{figure}[H]
\centering

\scalebox{1.1}{

\begin{tikzpicture}[grow=up,every tree node/.style={draw},
   level distance=0.8cm,sibling distance=.05cm,
   edge from parent/.style={draw,-latex,<-}]
\tikzstyle{place}=[draw,shape=circle];
\tikzstyle{square}=[draw,shape=rectangle];

\begin{scope}

\Tree
[.\node{$\eta_{\textcolor{green}{\bullet},\textcolor{pink}{\bullet}}$};
    [.\node{$\oplus$};
        [.\node[ultra thick]{$\textcolor{green}{g}\textcolor{blue}{hi}\textcolor{green}{j}$};]
        [.\node[ultra thick]{$\textcolor{blue}{abc}\textcolor{pink}{def}$};]
    ]
]

\end{scope}

\begin{scope}[xshift=5cm,yshift=0cm]

    \node[place,fill=blue] (abc) at (0,1) {$abc$};
    \node[place,fill=pink] (def) at (1.5,1) {$def$};
    \node[place,fill=green] (gj) at (3,1) {$gj$};
    \node[place,fill=blue] (hi) at (4.5,1) {$hi$};

\draw (gj)--(def);

\draw[ultra thick,red,style=dashed] (abc)--(def);

\draw[ultra thick,red,style=dashed] (hi)--(gj);

\draw[ultra thick] (-.6,1-.6)--(1.5+.6,1-.6)--(1.5+.6,1+.6)--(-.6,1+.6)--(-.6,1-.6);%

\draw[ultra thick] (3-.6,1-.6)--(4.5+.6,1-.6)--(4.5+.6,1+.6)--(3-.6,1+.6)--(3-.6,1-.6);

\end{scope}

\end{tikzpicture}

}

\caption{After the next step, only one park will remain. We can finish the contraction sequence arbiltrarly.}
\end{figure}

\begin{multicols}{2}

\begin{figure}[H]
\begin{tikzpicture}
\tikzstyle{place}=[draw,shape=circle];
\tikzstyle{square}=[draw,shape=rectangle];
    
    \node[place,fill=green] (a) at (0,-.2) {$2$};
    \node[place,fill=blue] (bc) at (0,1) {$1$};
    \node[place,fill=pink] (def) at (1.5,1) {$3$};
    \node[place,fill=green] (g) at (3,1) {$2'$};
    \node[place,fill=blue] (hi) at (4.5,1) {$1'$};
    \node[place,fill=pink] (j) at (3,-.2) {$3'$};

\draw (def)--(g);

\draw (j)--(a);

\draw[ultra thick,red,style=dashed] (a)--(bc)--(def)--(a);

\draw[ultra thick,red,style=dashed] (g)--(hi)--(j)--(g);

\draw[ultra thick] (-.5,-.2-.4)--(1.5+.6,-.2-.4)--(1.5+.6,1+.6)--(-.5,1+.6)--(-.5,-.2-.4);%

\draw[ultra thick] (3-.5,-.2-.4)--(4.5+.5,-.2-.4)--(4.5+.5,1+.6)--(3-.5,1+.6)--(3-.5,-.2-.4);

\end{tikzpicture}

\begin{tikzpicture}
\tikzstyle{place}=[draw,shape=circle];
\tikzstyle{square}=[draw,shape=rectangle];
    
    \node[place,fill=green] (a) at (0,-.2) {$2$};
    \node[place,fill=blue] (bc) at (0,1) {$1$};
    \node[place,fill=pink] (def) at (1.5,1) {$3$};
    \node[place,fill=green] (g) at (3,1) {$2'$};
    \node[place,fill=blue] (hi) at (4.5,1) {$1'$};
    \node[place,fill=pink] (j) at (3,-.2) {$3'$};

\draw (def)--(g);

\draw (j)--(a);

\draw[ultra thick,red,style=dashed] (a)--(bc)--(def)--(a);

\draw[ultra thick,red,style=dashed] (g)--(hi)--(j)--(g);

\draw[ultra thick] (-.5,-.2-.4)--(4.5+.5,-.2-.4)--(4.5+.5,1+.6)--(-.5,1+.6)--(-.5,-.2-.4);

\end{tikzpicture}

\begin{tikzpicture}
\tikzstyle{place}=[draw,shape=circle];
\tikzstyle{square}=[draw,shape=rectangle];
    
    \node[place,fill=green] (a) at (0,-.2) {$2$};
    \node[place,fill=pink] (def) at (0,1) {$3$};
    \node[place,fill=green] (g) at (4.5,1) {$2'$};
    \node[place,fill=blue] (bchi) at (2.25,.4) {$11'$};
    \node[place,fill=pink] (j) at (4.5,-.2) {$3'$};

\draw (def)--(g);

\draw (j)--(a);

\draw[ultra thick,red,style=dashed] (a)--(bchi)--(def)--(a);

\draw[ultra thick,red,style=dashed] (g)--(bchi)--(j)--(g);

\draw[ultra thick] (-.5,-.2-.4)--(4.5+.5,-.2-.4)--(4.5+.5,1+.6)--(-.5,1+.6)--(-.5,-.2-.4);

\end{tikzpicture}
\caption{Component twin-width in the worst case: at worst we merge two colorful parks (with $\mathbf{cw}(G)$ vertices), and the next contraction will create a red connected component of size $2\mathbf{cw}(G)-1$.}
\end{figure}

\begin{figure}[H]
\begin{tikzpicture}
\tikzstyle{place}=[draw,shape=circle];
\tikzstyle{square}=[draw,shape=rectangle];
    
    \node[place,fill=green] (a) at (0,-.2) {$2$};
    \node[place,fill=blue] (bc) at (0,1) {$1$};
    \node[place,fill=pink] (def) at (1.5,1) {$3$};
    \node[place,fill=green] (g) at (3,1) {$2'$};

\draw (def)--(g);

\draw[ultra thick,red,style=dashed] (a)--(bc)--(def)--(a);

\draw[ultra thick] (-.5,-.2-.4)--(1.5+.6,-.2-.4)--(1.5+.6,1+.6)--(-.5,1+.6)--(-.5,-.2-.4);%

\draw[ultra thick] (3-.5,1-.5)--(3+.5,1-.5)--(3+.5,1+.5)--(3-.5,1+.5)--(3-.5,1-.5);

\end{tikzpicture}

\begin{tikzpicture}
\tikzstyle{place}=[draw,shape=circle];
\tikzstyle{square}=[draw,shape=rectangle];
    
    \node[place,fill=green] (a) at (0,-.2) {$2$};
    \node[place,fill=blue] (bc) at (0,1) {$1$};
    \node[place,fill=pink] (def) at (1.5,1) {$3$};
    \node[place,fill=green] (g) at (3,1) {$2'$};

\draw (def)--(g);

\draw[ultra thick,red,style=dashed] (a)--(bc)--(def)--(a);

\draw[ultra thick] (-.5,-.2-.4)--(3+.5,-.2-.4)--(3+.5,1+.6)--(-.5,1+.6)--(-.5,-.2-.4);

\end{tikzpicture}

\begin{tikzpicture}
\tikzstyle{place}=[draw,shape=circle];
\tikzstyle{square}=[draw,shape=rectangle];
    
    \node[place,fill=green] (ag) at (0.75,-.2) {$22'$};
    \node[place,fill=blue] (bc) at (0,1) {$1$};
    \node[place,fill=pink] (def) at (1.5,1) {$3$};

\draw[ultra thick,red,style=dashed] (ag)--(bc)--(def)--(ag);

\draw[ultra thick] (-.5,-.2-.6)--(1.5+.6,-.2-.6)--(1.5+.6,1+.6)--(-.5,1+.6)--(-.5,-.2-.6);

\end{tikzpicture}
\caption{If the expression is linear, the worst case component twin-width becomes $\mathbf{lcw}(G)$.}
\end{figure}
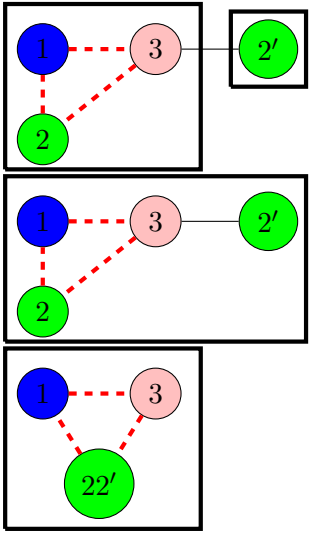

\end{multicols}

\end{document}